\documentclass[11pt]{article}
\usepackage[left=1in,top=1in,right=1in,bottom=1in,head=.1in,nofoot]{geometry}

\setlength{\footskip}{24pt} %
\usepackage{setspace,url,bm,amsmath} %

\usepackage{breakurl}
\usepackage[breaklinks]{hyperref}

\usepackage{titlesec} %
\titlelabel{\thetitle.\quad} %

\titleformat*{\section}{\bf\Large\center}

\usepackage{titletoc}

\usepackage{graphicx} %
\usepackage{bbm}
\usepackage{latexsym}
\usepackage{caption}
\usepackage[margin=20pt]{subcaption}
\usepackage{hyperref}
\usepackage{booktabs}
\usepackage{enumitem}

\usepackage[table]{xcolor}

\newcommand{\GG}[1]{}

\usepackage{amsthm}
\usepackage{amssymb}
\usepackage{amsmath}
\usepackage{color}

\usepackage{comment}
\theoremstyle{definition}

\newtheorem*{theorem*}{Theorem}
\newtheorem{theorem}{Theorem}
\newtheorem*{rmk*}{Remark}
\newtheorem{rmk}{Remark}
\newtheorem{proposition}{Proposition}
\newtheorem{lemma}{Lemma}

\newtheorem{definition}{Definition}
\newtheorem{corollary}{Corollary}
\newtheorem*{corollary*}{Corollary}

\def\rank{\text{r}}

\def\rev{\color{black}}

\def\c{\complement}

\usepackage{natbib} %
\bibpunct{(}{)}{;}{a}{}{,} %

\usepackage{etoolbox} %
\apptocmd{\sloppy}{\hbadness 10000\relax}{}{} %

\usepackage{color}
\usepackage{listings}

\DeclareMathOperator*{\argmin}{arg\,min}

\newcommand{\lxr}{\color{black}}

\def\I{\mathbbm{1}}

\usepackage{color}

\allowdisplaybreaks

\def\vecge{\succcurlyeq}
\def\vecle{\preccurlyeq}
\usepackage{textcomp}

\usepackage{soul}

\def\Unif{\text{Unif}}

\usepackage[textsize=scriptsize, textwidth = 2cm, shadow]{todonotes}

\usepackage[sc]{mathpazo}

\usepackage{float}

\RequirePackage[normalem]{ulem}

\begin{document}

\onehalfspacing
\title{\bf 
Randomization Inference beyond the Sharp Null:\linebreak
Bounded Null Hypotheses and Quantiles of Individual Treatment Effects
}
\author{
	Devin Caughey, Allan Dafoe, Xinran Li and Luke Miratrix
\footnote{
Devin Caughey, 
Department of Political Science, 
MIT, Cambridge, MA, USA (e-mail: \href{mailto:caughey@mit.edu}{caughey@mit.edu}).
Allan Dafoe,  
Centre for the Governance of AI, Oxford, UK
(e-mail: \href{mailto:allandafoe@gmail.com}{allandafoe@gmail.com}).
Xinran Li, 
Department of Statistics, University of Chicago, Chicago, IL, USA (e-mail: \href{mailto:xinranli@uchicago.edu}{xinranli@uchicago.edu}).
Luke Miratrix, 
Graduate School of Education, Harvard University, Cambridge, MA, USA
(e-mail: \href{mailto:lmiratrix@g.harvard.edu}{lmiratrix@g.harvard.edu}).
}
}
\date{}
\maketitle

\begin{abstract}
	\noindent Randomization inference (RI) is typically interpreted as testing Fisher's ``sharp'' null hypothesis that all unit-level effects are exactly zero. This hypothesis is often criticized as restrictive and implausible, making its rejection scientifically uninteresting. We show, however, that many randomization tests are also valid for a ``bounded'' null hypothesis under which the unit-level effects are all {\rev non-positive} (or all {\rev non-negative}) but are otherwise heterogeneous. In addition to being more plausible a priori, bounded nulls are closely related to substantively important concepts such as monotonicity and Pareto efficiency. Reinterpreting RI in this way also dramatically expands the range of inferences possible in this framework. We show that exact confidence intervals for the maximum (or minimum) unit-level effect can be obtained by inverting tests for a sequence of bounded nulls. We also generalize RI to cover inference for quantiles of the individual effect distribution as well as for the proportion of individual effects larger (or smaller) than a given threshold. The proposed confidence intervals for all effect quantiles are simultaneously valid, in the sense that no correction for multiple analyses is required, and are thus a ``free lunch'' added to conventional RI. In sum, our reinterpretation and generalization provide a broader justification for randomization tests and a basis for exact nonparametric inference for effect quantiles. We illustrate our methods with simulations and applications, finding that Stephenson rank statistics can provide more informative results than the more common Wilcoxon rank or difference-in-means statistics. 
    We also provide an R package {\rev \textsf{RIQITE}} implementing the proposed approach.
\end{abstract}
{\bf Keywords}: 
causal inference, potential outcome, quantiles of individual treatment effects, randomization test, treatment
effect heterogeneity

\section{Introduction}

\subsection{Randomization inference and sharp null hypotheses}\label{sec:ri_sharpnull}

Randomization inference (RI), also known as permutation inference, is a general statistical framework for making inferences about treatment effects. RI originated with \cite{Fisher35a}, who showed that if the treatment is randomly assigned to units, the hypothesis that no unit was affected by treatment---the ``sharp null of no effects''---can be tested exactly,  with no further assumptions, by comparing an observed test statistic with its distribution across alternative realizations of treatment assignment. More generally, the Fisher randomization test can be applied to any null hypotheses that is sharp in the sense that under it all potential outcomes are known from the observed data. Furthermore, by testing a sequence of hypotheses, randomization tests can also be used to create exact nonparametric confidence intervals (CIs) for treatment effects \citep{Lehmann63a}. RI thus provides a unified framework of statistical inference that requires neither parametric assumptions about the data-generating distribution nor asymptotic approximations that can be unreliable in small samples \citep{Rosenbaum02a}.

Nevertheless, RI has been criticized from various angles. One line of criticism focuses on the sharp null hypothesis, which has long been dismissed as ``uninteresting and academic'' \citep[173]{Neyman35a}. \citet{Gelman11a}, for example, argues that ``the so-called Fisher exact test almost never makes sense, as it's a test of an uninteresting hypothesis of exactly zero effects (or, worse, effects that are nonzero but are identical across all units).'' The crux of this critique is that the sharp null ``does not accommodate heterogeneous responses to treatment,'' making it ``very restrictive''  \citep[330]{Keele15a}.
A related objection is that using RI for interval estimation requires assumptions that are arguably as strong as those of its parametric and large-sample competitors. In particular, deriving interpretable CIs for treatment effects typically requires the assumption that effects are constant across units or vary according to some known model \citep[e.g.,][]{Bowers13a}.

Defenders of RI have responded in various ways.
Some argue that RI is nevertheless useful for assessing whether treatment had any effect at all, as a preliminary step to determine whether further analysis is warranted \citep[e.g.,][]{Imbens15a}. An alternative proposal, advanced by
\citet{Chung13a}, is to employ “studentized” test statistics that render permutation tests asymptotically valid under a weak null hypothesis
\citep[e.g.,][]{Ding:2017aa, Wu:2018aa, Fogarty:2019aa, cohen2020gaussian}.\footnote{The term \textit{weak} is typically used to refer to the null hypothesis of no average effect, in contradistinction to the stronger hypothesis of no effect whatsoever \citep[e.g.,][A-32]{FreedmanPisaniPurves97a}. We use \textit{weak} more generally, to refer to any hypothesis that stipulates the value of some function of the unit-level treatment effects (e.g., their average or a given quantile) but otherwise allows for arbitrary effect heterogeneity. A weak null is a ``composite'' hypothesis in the sense that it encompasses multiple configurations of potential outcomes rather than a single one as a sharp null does.} 
Some scholars defend the constant-effects assumption more forthrightly, regarding it as a convenient approximation that is preferable to the shortcomings of parametric methods, such as their sensitivity to assumptions about tail behavior \citep{rosenbaum2010design} or inability to account for complex treatment assignments \citep{HoImai06a}. 

Though reasonable, all these defenses presume that randomization tests are exactly valid only as tests of a sharp null hypothesis.
We offer a more fundamental defense. As we show, many randomization tests of the sharp null are exactly valid under a corresponding hypothesis under which unit-level effects are bounded but otherwise hetereogenous. This result in turn provides the basis for CIs for the maximum (or minimum) individual effect and, by extension, for any quantile of the distribution of unit-level effects. These results substantially expand the applicability of RI and permit assessment of the substantive magnitude of treatment effects as well as their statistical significance.

\subsection{A motivating example}\label{sec:moti_example}

To motivate our approach, consider \citeauthor{HSMHSD10}'s (\citeyear{HSMHSD10}) experimental study of the effectiveness of professional development for elementary teachers. The study compared 164 teachers assigned to take a professional development course with 69 control subjects  (for full details, see Section \ref{sec:profession}). The average gain score, based on tests on content knowledge before and after the courses, was much higher among the teachers who took the courses (see Figure~\ref{fig:conf_profession}(a).
A Stephenson rank-sum randomization test (see Section \ref{sec:stephenson}) yields a $p$-value near 0, and Lehmann-style test inversion yields a $90\%$ CI of $[16.7, \infty)$ for a constant treatment effect.\footnote{The $p$-value is approximated by Monte Carlo with $10^6$ simulated assignments. Consequently, the standard error of this Monte Carlo approximation is at most $5\times 10^{-4}$.}
These results strongly suggest that the professional development improves teacher's content knowledge, but their precise interpretation is less clear.
The $p$-value indicates decisive rejection of the sharp null of no effects, but is rejecting this hypothesis really informative?
And if a constant-effects assumption is not plausible, how should we interpret the CI?

Our paper sheds light on both of these questions.
First, because the Stephenson 
rank sum
test is also valid under the null that all unit-level effects {\rev are} bounded above at 0, the $p$-value reported above justifies rejection of the hypothesis that no teacher's content knowledge increased as a result of the course.
Second, without a constant-effects assumption, the CI reported above can be interpreted as a confidence statement about the \textit{maximum} effect---specifically, that the hypothesis that no effect was larger than 
16.6
can be rejected at a significance level of 0.1.

{\rev
Moreover, we can generalize these results to obtain simultaneously valid CIs for all quantiles of the effect distribution. 
These inferences are visualized in Figure \ref{fig:conf_profession}(b), which reports all the
simultaneous $90\%$ one-sided confidence intervals  with finite lower bounds.
Specifically, 
the horizontal lines in Figure \ref{fig:conf_profession}(b) represent one-sided confidence intervals for the largest $117$ individual effects. 
The 146th and 165th largest individual effects, for example, have $90\%$ CIs of $(0, \infty)$ and $[6.660, \infty)$, respectively.

In addition, the lower confidence limit for the number of units with effects greater than $c$ 
is equivalent to the number of quantiles of individual effects 
whose confidence intervals do not cover $c$. 
We can thus read off these confidence intervals using Figure \ref{fig:conf_profession}(b). Consider the dashed vertical line of $c=0$ as an example. 
Because the line of $c=0$ only intersects the confidence intervals for $k \le 145$, we know at least $233 - 145 = 88$ units, or $88/233 = 37.8\%$, have positive effects.
By the same logic, a $90\%$ confidence interval for the number of units with effects larger than $6$ is $[69, 233]$. That is, $69/233 = 29.6\%$ had an effect of at least 6.

}

\begin{figure}[htbp]
	\centering
	\begin{subfigure}{.5\textwidth}
		\centering
		\includegraphics[width=1\linewidth]{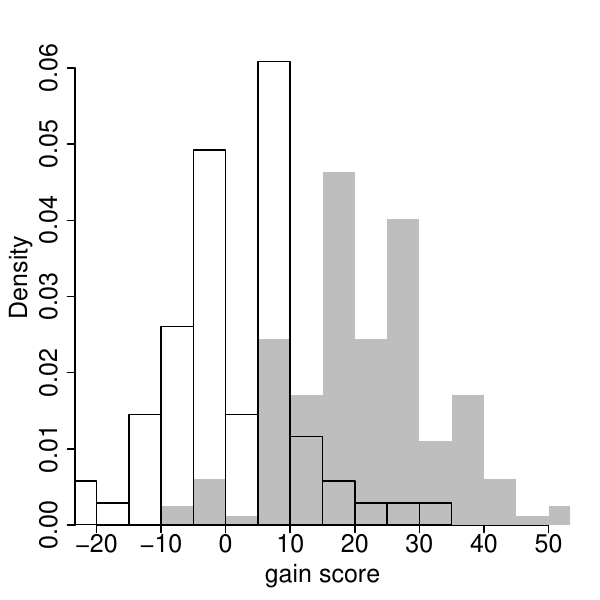}
		\caption{}
	\end{subfigure}%
	\begin{subfigure}{.5\textwidth}
		\centering
		\includegraphics[width=1\linewidth]{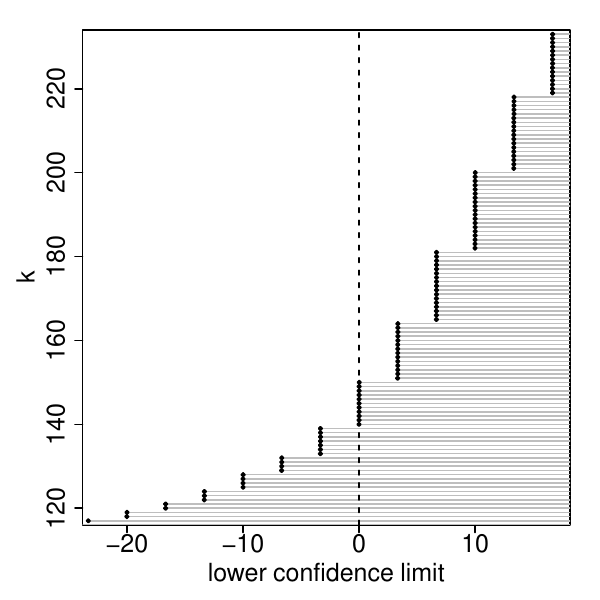}
		\caption{}
	\end{subfigure}%
	\caption{
	Histograms of observed gain scores and 
	$90\%$ confidence intervals for quantiles of individual effects in the study of the effectiveness of professional development for elementary teachers.
    (a) shows the histograms of the observed gain scores in treatment (grey) and control (white) groups, respectively. 
    (b) shows the $90\%$ simultaneous lower confidence limits for all quantiles of individual effects. Details of their generation are in Section~\ref{sec:applications}. 
     The uninformative lower confidence limits of $-\infty$ for individual effects at lower ranks are omitted.
	}\label{fig:conf_profession}
\end{figure}

\subsection{Our contribution}\label{sec:contribution}

The inferences illustrated in Section~\ref{sec:moti_example} are grounded in a novel set of theoretical results. In particular, in this paper we prove the following:
\begin{enumerate}[label=(\roman*), topsep=1ex,itemsep=-0.3ex,partopsep=1ex,parsep=1ex]
    \item For any randomization test that employs a test statistic with one of several properties, one-sided rejection of the sharp null hypothesis that the treatment $\tau_i$ equals some constant $\delta_i$ for each unit $i$ also implies rejection of any null hypothesis under which $\tau_i \leq \delta_i$ for all $i$.
    \item Test statistics with the requisite properties include the difference-in-means, the Wilcoxon rank sum, and many other commonly used statistics.
    \item For tests in this class, inverting a sequence of tests provides confidence intervals for the maximum (or minimum) individual effect.
    \item For rank-based members of this class, 
    {\rev when treatment assignments are exchangeable across all units,}  
    this procedure can be extended to yield simultaneously valid confidence intervals for all quantiles of the treatment-effect distribution (and analogously for the proportion of units with effects larger/smaller than a given threshold).
    \item Confidence intervals for the range of unit-level effects can be obtained by combining two one-sided randomization tests, providing an exact test of effect heterogeneity.
    \end{enumerate}
These results have important implications for statistical practice.

First, ``bounded" hypotheses are often of substantive interest in themselves. Unlike sharp hypotheses, which are concentrated at a particular point in the parameter space, bounded hypothesis cover a range of treatment effects. Consequently, their prior plausibility is greater and their rejection is thus more informative. In addition, there are a number of theoretically or methodologically important special cases of bounded hypotheses. In economics, for example, a change is considered a ``Pareto improvement" if it makes at least one person better off while hurting no one \citep[34]{mishan82introd_polit_econom}. Consequently, the claim that an intervention was Pareto improving can be assessed by testing the bounded null hypothesis that all effects were greater than or equal to zero. Similarly, instrumental-variable estimation of causal effects is typically conducted under a monotonicity assumption that the instrument has non-negative or non-positive effects on the treatment \citep{AngristImbensRubin96a}. 

Second, these results provide a basis for inferences regarding the distribution of treatment effects across units. Specifically, they permit interval estimation of treatment effect quantiles. Existing nonparametric methods for estimating causal effects focus overwhelmingly on average effects of various kinds. Although averages are often the best summary statistic, quantiles characterize the effects of treatment more completely and robustly, especially when effects are highly heterogeneous.
Although there are existing RI methods for "quantile treatment effects" (the treated--control difference in \textit{outcome} quantiles; e.g., \citealt{CattaneoEtAl15a}), to our knowledge ours is the first aimed at quantiles of the distribution of \textit{effects}\footnote{\rev For the distinction between \textit{differences in outcome distributions} (e.g., average or quantile treatment effects) and \textit{distributions of outcome differences} (e.g., the distribution of individual effects), see \citet[][157--8]{manski2009identification}.} 
Furthermore, two one-sided CIs can be combined to yield tests and CIs for the range of treatment effects.

Significantly, the above advances are a "free lunch" in the sense that they require no additional assumptions. Users of RI can continue to use the same procedures while interpreting them in richer ways, or they can use our open-source \textsf{R} package \textsf{RIQITE} (\url{https://github.com/li-xinran/RIQITE}) to supplement their analyses with quantile CIs and other extensions.\footnote{{\rev The package's Github page includes installation instructions, detailed explanations of the main functions in R documentation, and a simple illustrating example. In addition, the supplementary materials for this paper contain a replication of the real data analyses in the paper using \textsf{RIQITE}.}}

The remainder of this paper is organized as follows. Section \ref{sec:framework} formally introduces our framework. Section \ref{sec:statistics} discusses the properties of test statistics to which our results apply and provides examples. Section \ref{sec:broader} demonstrates the validity of randomization tests under bounded null hypotheses and explains how this justifies CIs for the maximum effect.
Section \ref{sec:inf_quant_all} generalizes these results to CIs for effect quantiles. 
Section \ref{sec:twosided} shows how CIs for the range of effects can be derived from two one-sided randomization tests.
Section \ref{sec:simu} conducts simulation studies for the performance of these procedures under various conditions.
Section \ref{sec:applications} applies the methods to an empirical evaluation of an experimental professional development program (another application, testing the monotonicity assumption of an instrumental variable, is in Appendix 7.2 of the supplementary materials).
The paper concludes with a discussion of the broader implications of our results.
All proofs of theorems and associated results are relegated to the supplementary materials for conciseness.

\section{Framework, notation, and the Fisher randomization test}
\label{sec:framework}

\subsection{Potential outcomes, treatment effects and treatment assignment}\label{sec:potential_outcome}

We consider a randomized experiment on $n$ units with two treatment arms. Using the potential outcome framework \citep{Neyman23a, Rubin:1974wx}, we use $Y_i(1)$ and $Y_i(0)$ to denote the potential outcomes under treatment and control, respectively, for units $i=1, \ldots, n$. We use $\bm{Y}(1) = (Y_1(1), Y_2(1), \ldots, Y_n(1))^\top$ and $\bm{Y}(0) = (Y_1(0), Y_2(0), \ldots, Y_n(0))^\top$ to denote the treatment and control potential outcome vectors for all units. 
Let $\tau_i = Y_i(1) - Y_i(0)$ be the individual treatment effect for unit $i$, 
and 
$\bm{\tau} = (\tau_1, \tau_2, \ldots, \tau_n)^\top$ be the  vector of individual treatment effects.\footnote{
As a side note, we focus on general outcomes with the treatment effect in the difference scale, which may not be most appropriate in some applications; see, e.g., \citet{Edwards1963} and \citet{geng2008}.
}
Let $Z_i$ be the treatment assignment for unit $i$, where $Z_i$ equals 1 if the unit receives the active treatment and zero otherwise, 
and $\bm{Z} = (Z_1, Z_2, \ldots, Z_n)^\top$ be the treatment assignment vector for all units. 
For each unit $i$, 
the observed outcome is one of its two potential outcomes, depending on the treatment assignment $Z_i$. 
Specifically, 
$Y_i \equiv Y_i(Z_i) = Z_i Y_i(1) + (1-Z_i) Y_i(0)$. 
Let $\bm{Y} = (Y_1, Y_2, \ldots, Y_n)^\top$ be the observed outcome vector for all units. 
For descriptive convenience, for any treatment assignment vector $\bm{z} \in \{0,1\}^n$, 
 define $\bm{Y}(\bm{z}) = \bm{z} \circ \bm{Y}(1) + (\bm{1} - \bm{z}) \circ \bm{Y}(0)$ to denote the corresponding observed outcome vector, where $\circ$ stands for the element-wise multiplication. 
We can then write the observed outcome $\bm{Y}$ as $\bm{Y} = \bm{Y}(\bm{Z}) = \bm{Z} \circ \bm{Y}(1) + (1-\bm{Z}) \circ \bm{Y}(0)$.

In this paper, we conduct randomization inference where all potential outcomes are viewed as fixed constants, 
and use randomization of the treatment assignments as the ``reasoned basis'' for inference \citep{Fisher35a}. 
This is equivalent to conducting inference conditional on all the potential outcomes, and %
thus requires no model or distributional assumptions on the potential outcomes. 
The distribution of the treatment assignment $\bm{Z}$, also called the treatment assignment mechanism, is what governs statistical inference. 
We use $\mathcal{Z} \subset \{0,1\}^n$ to denote the set of all possible treatment assignments for the $n$ units, 
and characterize the treatment assignment mechanism %
by the probability mass function $\Pr(\bm{Z}=\bm{z})$ for all $\bm{z} \in \mathcal{Z}$. 
One class of treatment assignment mechanism that will receive special attention is the \textit{exchangeable} treatment assignment, formally defined as follows. 

\begin{definition}\label{def:exchange}
    A treatment assignment mechanism is exchangeable if $\bm{Z}$ is invariant under permutation of its coordinates, i.e., 
    $(Z_1, Z_2, \ldots, Z_n) \sim (Z_{\pi(1)}, Z_{\pi(2)}, \ldots, Z_{\pi(n)})$ for any permutation $\pi(\cdot)$ of $\{1,2,\ldots, n\}$. 
\end{definition}

Popular treatment assignment mechanisms satisfying Definition \ref{def:exchange} include 
\textit{Bernoulli randomized experiment} (BRE) and \textit{completely randomized experiment} (CRE). 
Specifically, 
under a BRE, the treatment assignments $Z_i$'s are independent and identically distributed (i.i.d.) Bernoulli random variables with probability $p$, for some $p\in (0,1)$. 
Under a CRE, 
$m$ units will be randomly assigned to treatment, and the remaining $n - m$ units will be assigned to control, 
where $m$ and $n-m$ are fixed positive integers.

We use $\Unif[0,1]$ to denote a uniform random variable on $[0,1].$
We introduce 
$\vecle$ 
to denote element-wise inequality between two vectors: 
for any two vectors $\bm{\xi}$ and $\bm{\eta}$ of the same dimension, 
$\bm{\xi} \vecle \bm{\eta}$ 
if and only if each coordinate of $\bm{\xi}$ is less than or equal to the corresponding coordinate of  $\bm{\eta}$. 

\subsection{Sharp null hypotheses and imputation of potential outcomes}

\citet{Fisher35a} proposed to test the null hypothesis that the treatment has no effect for any unit, 
that is %
$\bm{Y}(1) = \bm{Y}(0)$ or 
equivalently
$\bm{\tau} = \bm{0}$. 
Such a null hypothesis is called a sharp null hypothesis, %
under which 
all missing potential outcomes %
can be imputed from the observed data. 
Specifically, under Fisher's null of no effect, 
we have $Y_i(1) = Y_i(0) = Y_i$ for all units $i=1,\ldots,n$. 
After imputing all the potential outcomes, we can know the null distribution of any test statistic exactly, 
which further provides an exact $p$-value. 
Such a testing procedure is often called the Fisher randomization test, and the resulting $p$-value is called the randomization $p$-value. 

The Fisher randomization test also works for general sharp null hypotheses, where the individual treatment effect for unit $i$ is $\delta_i$ for $i=1,\ldots, n$, where $\bm{\delta} = (\delta_1, \delta_2, \ldots, \delta_n)^\top \in \mathbb{R}^n$ is a  predetermined constant vector. %
That is  
\begin{align}\label{eq:null_delta}
	H_{\bm{\delta}}: \bm{\tau} = \bm{\delta}. 
\end{align}
Under the null $H_{\bm{\delta}}$ in \eqref{eq:null_delta}, we are able to impute all the potential outcomes. 
Specifically, under $H_{\bm{\delta}}$ and given the observed data $\bm{Z}$ and $\bm{Y}$, 
the imputed treatment and control potential outcome vectors for all units are, respectively, 
\begin{align}\label{eq:impute_potential_outcome}
	\bm{Y}_{\bm{Z}, \bm{\delta}}(1) & = \bm{Y}+ (\bm{1} - \bm{Z}) \circ \bm{\delta} = \bm{Z} \circ \bm{Y}(1) + (\bm{1}-\bm{Z}) \circ \{ \bm{Y}(0) + \bm{\delta} \}, 
	\nonumber
	\\ 
	\bm{Y}_{\bm{Z}, \bm{\delta}}(0) & = \bm{Y} - \bm{Z} \circ \bm{\delta} 
	= \bm{Z} \circ \{ \bm{Y}(1) - \bm{\delta} \} + (\bm{1}-\bm{Z}) \circ \bm{Y}(0), 
\end{align}
where we use the subscripts $\bm{Z}$ and $\bm{\delta}$ to emphasize that the imputed potential outcomes are deterministic functions of the observed treatment assignment, the null hypothesis of interest and the true potential outcomes. Importantly, 
the \textit{imputed} potential outcomes in \eqref{eq:impute_potential_outcome} are generally different from the \textit{true} potential outcomes; they are the same if and only if the sharp null $H_{\bm{\delta}}$ is true. 

\subsection{Fisher randomization test}\label{sec:frt}

There are at least two popular approaches for conducting Fisher randomization tests for general sharp null hypothesis $H_{\bm{\delta}}$ in \eqref{eq:null_delta}, represented by the textbooks of \citet{Rosenbaum02a} and \citet{Imbens15a}, respectively. The main distinction between these approaches lies in the choice of test statistic.  
For conciseness, 
we here focus on the approach from \citet{Rosenbaum02a}, and relegate the discussion of the other to Appendix 2 of the supplementary materials. 

Let $t(\cdot, \cdot): \mathcal{Z} \times \mathbb{R}^n \rightarrow \mathbb{R}$ denote a generic function with two arguments: the first a treatment assignment vector $\bm{z} \in \mathcal{Z}$, and the second an outcome vector $\bm{y} \in \mathbb{R}^n$. 
Following \citet{Rosenbaum02a}, we test with a test statistic of the form $t(\bm{Z}, \bm{Y}_{\bm{Z}, \bm{\delta}}(\bm{0}))$, which depends on the observed treatment assignment $\bm{Z}$ and the imputed control potential outcomes $\bm{Y}_{\bm{Z}, \bm{\delta}}(0)$ in \eqref{eq:impute_potential_outcome}. 
Often $t(\bm{Z}, \bm{Y}_{\bm{Z}, \bm{\delta}}(0))$ compares the imputed control potential outcomes of treated units to the observed values of the control units (e.g., $t$ might be the average difference between the treated outcomes adjusted by their corresponding $\delta_i$'s and the control outcomes).
If $H_{\bm{\delta}}$ is true, 
the imputed control potential outcome $\bm{Y}_{\bm{Z}, \bm{\delta}}(0)$ always equals the true $\bm{Y}(0)$, 
and thus
we can compute what we would have seen for any alternative treatment assignment vector $\bm{a}\in \mathcal{Z}$ as
$t(\bm{a}, \bm{Y}_{\bm{a}, \bm{\delta}}(0)) = t(\bm{a}, \bm{Y}(0)) = t(\bm{a}, \bm{Y}_{\bm{Z}, \bm{\delta}}(0))$. 
This means we can 
directly calculate the distribution of $t(\bm{Z}, \bm{Y}_{\bm{Z}, \bm{\delta}}(\bm{0}))$ if $H_{\bm{\delta}}$ were true.
Therefore, 
for the null $H_{\bm{\delta}}$, 
the imputed randomization distribution of the test statistic has the following tail probability for $c \in \mathbb{R}$:  
\begin{align}\label{eq:tail_prob_tilde}
G_{\bm{Z}, \bm{\delta}} (c) \equiv 
\Pr\left\{
t(\bm{A}, \bm{Y}_{\bm{Z}, \bm{\delta}}(0) )
\ge c
\right\}
= \sum_{\bm{a} \in \mathcal{Z}} \Pr(\bm{A} = \bm{a}) \I\left\{
t(\bm{a}, \bm{Y}_{\bm{Z},\bm{\delta}}(0))
\ge 
c
\right\},
\end{align}
where $\bm{A}$ denotes a generic random treatment assignment vector following the same distribution as $\bm{Z}$.
The corresponding randomization $p$-value is the tail probability evaluated at the observed value of the test statistic:
\begin{align}\label{eq:p_val_imp_control}
	p_{\bm{Z}, \bm{\delta}} \equiv  G_{\bm{Z}, \bm{\delta}} \left\{ t(\bm{Z},\bm{Y}_{\bm{Z}, \bm{\delta}}(0))\right\} 
	= 
	\sum_{\bm{a} \in \mathcal{Z}} \Pr(\bm{A} = \bm{a}) \I\left\{
	t(\bm{a}, \bm{Y}_{\bm{Z},\bm{\delta}}(0))
	\ge 
	t(\bm{Z}, \bm{Y}_{\bm{Z}, \bm{\delta}}(0))
	\right\}. 
\end{align} 
When %
$H_{\bm{\delta}}$ is true (i.e., $\bm{\tau} = \bm{\delta}$), the imputed randomization distribution $G_{\bm{Z}, \bm{\delta}}(\cdot)$ in \eqref{eq:tail_prob_tilde} reduces to $G_{\bm{Z}, \bm{\delta}} (c) = \Pr\{t(\bm{A}, \bm{Y}(0) ) \ge c\}$, and the randomization $p$-value $p_{\bm{Z}, \bm{\delta}} = p_{\bm{Z}, \bm{\tau}}$ is stochastically larger than or equal to $\Unif[0,1]$ (the difference is due solely to the discrete nature of the $p$-value distribution). 
That is, 
$p_{\bm{Z}, \bm{\delta}}$ in \eqref{eq:p_val_imp_control} is a valid  $p$-value for testing the sharp null $H_{\bm{\delta}}$.

In contrast to typical descriptions of randomization inference, 
we do not simplify the imputed potential outcomes in \eqref{eq:tail_prob_tilde} and \eqref{eq:p_val_imp_control} to the true ones, 
because later we will investigate the property of the randomization $p$-value even when the sharp null hypothesis fails. 
We emphasize that both  $G_{\bm{Z}, \bm{\delta}}(\cdot)$ and $p_{\bm{Z},\bm{\delta}}$ are deterministic functions 
of the treatment assignment $\bm{Z}$, 
the null hypothesis of interest $\bm{\delta}$,  
the true potential outcomes $(\bm{Y}(1),\bm{Y}(0))$, 
and the treatment assignment mechanism, 
where the latter two are fixed and the dependence on them is suppressed. 
Moreover, the randomness in $G_{\bm{Z}, \bm{\delta}} (\cdot)$ and $p_{\bm{Z}, \bm{\delta}}$ comes solely from the random treatment assignment $\bm{Z}$. 

We can of course 
work with test statistics that uses imputed treatment potential outcomes
(simply by switching the labels of treatment and control and changing the signs of the outcomes).
We can also work with test statistics that only use the observed outcomes; see Appendix A2 of the supplementary materials for details.

\section{Test statistics: properties and examples}
\label{sec:statistics}

\subsection{Three properties of test statistics}\label{sec:property_stat}

To provide the broader justification for the Fisher randomization test discussed above, we require test statistics with certain properties which we now discuss.

The first property is called \textit{effect increasing} (a term borrowed from \citealp{Rosenbaum02a}). %
Intuitively, an effect increasing statistic $t(\bm{z}, \bm{y})$, viewed as a function of the outcome vector $\bm{y}$ with $\bm{z}$ fixed, is 
increasing\footnote{Throughout the paper, 
an increasing function refers to a nondecreasing function, not a strictly increasing function. Similarly, a decreasing function refers to a nonincreasing function.} 
in those $y_i$'s with $z_i=1$ and 
decreasing 
in those $y_i$'s with $z_i=0$. 
We formally define it as follows. 
\begin{definition}\label{def:effect_increase}
	A statistic $t(\cdot, \cdot)$ is said to be effect increasing, 
	if $
	t(\bm{z}, \bm{y} + \bm{z} \circ \bm{\eta} + (1-\bm{z}) \circ \bm{\xi} ) \ge t(\bm{z}, \bm{y})
	$ 
	for any $\bm{z} \in \mathcal{Z}$ and $\bm{y}, \bm{\eta}, \bm{\xi} \in \mathbb{R}^n$ with 
	$\bm{\eta} \vecge \bm{0} \vecge \bm{\xi}$. 
\end{definition}

The second property is called \textit{differential increasing}. 
Intuitively, 
for a differential increasing statistic, 
if the outcomes for a subset of units are increased, 
then the change of the statistic is maximized when it happens that this subset of units received treatment.
We formally define it as follows. 

\begin{definition}\label{def:diff_increase}
	A statistic $t(\cdot, \cdot)$ is said to be differential increasing, 
	if 
	$
	t(\bm{z}, \bm{y}+\bm{a} \circ \bm{\eta}) - t(\bm{z}, \bm{y}) 
	\le 
	t(\bm{a}, \bm{y}+\bm{a} \circ \bm{\eta}) - t(\bm{a}, \bm{y}) 
	$
	for any $\bm{z}, \bm{a} \in \mathcal{Z}$ and $\bm{y}, \bm{\eta} \in \mathbb{R}^n$ with $\bm{\eta} \vecge \bm{0}$. 
\end{definition}

The third property is called \textit{distribution free}; see also \citet{Rosenbaum2007interference}. 
Different from the previous two properties, this property depends not only on the test statistic $t(\cdot, \cdot)$, but also on the treatment assignment mechanism. 
In particular, for 
a distribution free test statistic $t(\cdot, \cdot)$, the distribution of $t(\bm{Z}, \bm{y})$ does not depend on the  value of $\bm{y} \in \mathbb{R}^n$. 
We formally define it as follows. 

\begin{definition}\label{def:dist_free}
	A statistic $t(\cdot, \cdot)$ is said to be distribution free if, for any $\bm{y}, \bm{y}' \in \mathbb{R}^n$, $t(\bm{Z}, \bm{y})$ and $t(\bm{Z}, \bm{y}')$ follow the same distribution, where $\bm{Z}$ %
	follows the treatment assignment mechanism. 
\end{definition}

For the above three properties, one does not necessarily imply the other; see Appendix A3 of the supplementary materials for more details.
However, many commonly used test statistics are both effect increasing and differential increasing, and many rank-based test statistics (with 
appropriate handling of 
ties)  are also distribution free 
when the treatment assignment
is 
exchangeable (e.g., a 
Bernoulli
or completely randomized experiment, as defined in Section \ref{sec:potential_outcome});
see the next subsection for more details.

\subsection{Two classes of special test statistics}\label{sec:stat_example}

Two general classes of test statistics, which cover many test statistics commonly used for randomization tests, satisfy the properties introduced in Section \ref{sec:property_stat}.

The first class of test statistics has the following form: 
\begin{align}\label{eq:stat_class_y}
	t_1(\bm{z}, \bm{y}) = \sum_{i=1}^n z_i \psi_{1i}(y_i) - \sum_{i=1}^n (1-z_i)\psi_{0i}(y_i),
\end{align}
where the $\psi_{1i}(\cdot)$'s and $\psi_{0i}(\cdot)$'s are constant functions from $\mathbb{R}$ to $\mathbb{R}$ but can depend on the treatment assignment mechanism. 
For example, for a CRE with $m$ units receiving treatment, setting $\phi_{1i}(y) = y/m$ and $\phi_{0i}(y) = y/(n-m)$ gives the difference-in-means estimator.

For statistics of (\ref{eq:stat_class_y}), we have:

\begin{proposition}\label{prop:stat_property_dim}
Statistics $t_1(\cdot, \cdot)$ of the form in \eqref{eq:stat_class_y} are both effect increasing (Definition \ref{def:effect_increase}) and differential increasing (Definition \ref{def:diff_increase}) if $\psi_{1i}(\cdot)$ and $\psi_{0i}(\cdot)$ are 
monotone increasing functions for all $1\le i \le n$.
\end{proposition}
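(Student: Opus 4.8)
The plan is to verify both properties directly, coordinate by coordinate, exploiting that $t_1$ in \eqref{eq:stat_class_y} is a sum of per-unit terms and that each $\psi_{1i}$, $\psi_{0i}$ is monotone.

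First, for \textbf{effect increasing}, I would fix $\bm{z} \in \mathcal{Z}$ and $\bm{y}, \bm{\eta}, \bm{\xi} \in \mathbb{R}^n$ with $\bm{\eta} \vecge \bm{0} \vecge \bm{\xi}$, and compare $t_1(\bm{z}, \bm{y} + \bm{z}\circ\bm{\eta} + (1-\bm{z})\circ\bm{\xi})$ to $t_1(\bm{z}, \bm{y})$ term by term. A unit with $z_i = 1$ contributes $\psi_{1i}$ evaluated at $y_i + \eta_i$ versus at $y_i$; since $\eta_i \ge 0$ and $\psi_{1i}$ is increasing, the perturbed term is at least as large. A unit with $z_i = 0$ contributes $-\psi_{0i}$ evaluated at $y_i + \xi_i$ versus at $y_i$; since $\xi_i \le 0$ and $\psi_{0i}$ is increasing, $\psi_{0i}(y_i + \xi_i) \le \psi_{0i}(y_i)$, so the perturbed term, carrying its minus sign, is again at least as large. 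Summing these termwise inequalities gives Definition \ref{def:effect_increase}.

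Next, for \textbf{differential increasing}, I would fix $\bm{z}, \bm{a} \in \mathcal{Z}$ and $\bm{y}, \bm{\eta} \in \mathbb{R}^n$ with $\bm{\eta} \vecge \bm{0}$, and set $\Delta_{1i} = \psi_{1i}(y_i + \eta_i) - \psi_{1i}(y_i) \ge 0$ and $\Delta_{0i} = \psi_{0i}(y_i + \eta_i) - \psi_{0i}(y_i) \ge 0$, nonnegativity following from $\eta_i \ge 0$ and monotonicity. Because adding $\bm{a}\circ\bm{\eta}$ to $\bm{y}$ changes the $i$-th argument only when $a_i = 1$, the difference $t_1(\bm{z}, \bm{y} + \bm{a}\circ\bm{\eta}) - t_1(\bm{z}, \bm{y})$ collapses to $\sum_{i : a_i = 1}\{\, z_i \Delta_{1i} - (1 - z_i)\Delta_{0i}\,\}$. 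For each such $i$ the summand equals $\Delta_{1i} \ge 0$ when $z_i = 1$ and $-\Delta_{0i} \le 0$ when $z_i = 0$, so it is pointwise maximized by taking $z_i = 1 = a_i$; hence the sum is at most $\sum_{i : a_i = 1}\Delta_{1i}$, which is precisely $t_1(\bm{a}, \bm{y} + \bm{a}\circ\bm{\eta}) - t_1(\bm{a}, \bm{y})$. This is Definition \ref{def:diff_increase}.

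There is no genuine analytic obstacle here; the only care needed is in the differential-increasing step — recognizing that the units with $a_i = 0$ drop out of the increment entirely, so sign considerations arise only for units with $a_i = 1$, and that among those units the per-unit increment, viewed as a function of $z_i$ alone, is maximized at $z_i = a_i = 1$. Everything else is immediate from the additive structure of \eqref{eq:stat_class_y} and monotonicity of the $\psi_{1i}$ and $\psi_{0i}$.
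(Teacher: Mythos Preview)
Your proposal is correct and follows essentially the same approach as the paper's proof: both arguments exploit the additive, per-unit structure of $t_1$ and the monotonicity of $\psi_{1i}$ and $\psi_{0i}$ to verify each definition termwise. Your handling of the differential-increasing step is slightly more streamlined than the paper's, since you immediately discard the units with $a_i = 0$ before bounding each remaining summand, whereas the paper first rewrites the full difference and then argues case by case on $a_i$; but the underlying idea is identical.
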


The second class of test statistics depends only on the ranks of the outcomes. 
For any vector $\bm{y} \in \mathbb{R}^n$ and for each $1\le i \le n$, 
we 
use $\rank_i(\bm{y})$ to denote the rank of the $i$th coordinate of $\bm{y}$, 
where larger rank corresponds to larger outcome value.
We assume that all ties are broken by unit ordering (see Appendix A1.1 of the supplementary materials for more details).
Then the second class of test statistics has the following form: 
\begin{align}\label{eq:stat_class_rank}
t_2(\bm{z}, \bm{y}) 
& =  
\sum_{i=1}^n z_i 
\phi (\rank_i(\bm{y})), 
\end{align}
where $\phi(\cdot)$ is a constant function from $\mathbb{R}$ to $\mathbb{R}$. 
For example, if we choose $\phi(r) = r$ to be the identity function, 
then 
$t_2(\bm{Z}, \bm{Y})$ reduces to the Wilcoxon rank sum statistic. 

These statistics satisfy Definitions \ref{def:effect_increase} and \ref{def:dist_free},
introduced in Section \ref{sec:property_stat}, under the following mild conditions:

\begin{proposition}\label{prop:stat_property_rank}
Under the tie-breaking rule discussed above, the statistic $t_2(\cdot, \cdot)$ in \eqref{eq:stat_class_rank} is
(a) effect increasing if $\phi(\cdot)$ is a monotone increasing function, and
(b) distribution free if the treatment assignment $\bm{Z}$ is exchangeable as in Definition \ref{def:exchange} as well as independent of the index ordering of the units (which can be achieved by randomly shuffling the order of units before analysis).
\end{proposition}

\subsection{Stephenson rank sum statistics}\label{sec:stephenson}

Despite the generality of the test statistics discussed before, in this subsection we focus on a special class of test statistics developed by \citet{Stephenson85a}, because of its often superior power to detect extreme effects. 
Stephenson rank sum statistics can be defined as the count of subsets of size $s$ in the data in which the largest response is in the treated group.
This can also be represented as a two-sample statistic of form \eqref{eq:stat_class_rank}, with a rank score function of 
$$
\phi(r) = \binom{r-1}{s-1} \quad \text{if } r \ge s, 
\ \ 
\text{ and }
\ \ 
\phi(r) = 0 \quad \text{otherwise},
$$
for some fixed integer $s \ge 2$. 
 The Stephenson rank sum statistic with $s=2$ is almost equivalent to the Wilcoxon rank sum statistic.\footnote{The Stephenson ranks with $s=2$ are each one less than the corresponding Wilcoxon ranks, leading to almost identical behavior \citep[1168]{Rosenbaum07a}.}
However, as $s$ increases beyond 2, the Stephenson ranks place more and more weight on the larger responses.

\cite{Rosenbaum07a} proposed to use Stephenson ranks to detect uncommon-but-dramatic responses to treatment. Intuitively, this is because as the subset size $s$ increases, it becomes increasingly likely that the largest response in a given subset will be one with an unusually large treatment effect.
Thus, compared to the difference-in-means and the Wilcoxon rank sum, 
which often perform well
against a constant location shift, the Stephenson rank sum has greater power against alternatives under which effects are heterogeneous and a few are highly positive.
As we will later see in our simulation studies, this advantage of Stephenson rank sum statistics is particularly relevant for our proposed theory and methods.

\section{Broader justification for Fisher randomization test}\label{sec:broader}

\subsection{Validity of randomization $p$-values for bounded nulls}\label{sec:valid_bound}

As discussed in Section \ref{sec:frt}, 
the randomization $p$-value $p_{\bm{Z}, \bm{\delta}}$ is always valid for testing the sharp null $H_{\bm{\delta}}$ in \eqref{eq:null_delta} given any test statistic $t(\cdot, \cdot)$. 
Our question is, can the randomization $p$-value $p_{\bm{Z}, \bm{\delta}}$ also be valid for testing a weak null hypothesis that does not fully specify all the individual treatment effects?

We will demonstrate shortly that,  
under certain intuitive conditions on the test statistic, 
the 
randomization $p$-value $p_{\bm{Z}, \bm{\delta}}$
for testing the sharp null hypothesis $H_{\bm{\delta}}$ is also valid for testing a bounded null, 
which states that each individual treatment effect is less than or equal to the corresponding coordinate of $\bm{\delta}$.  
We formally introduce 
this bounded null hypothesis as follows: 
\begin{align}\label{eq:null_less_delta}
H_{\vecle \bm{\delta}}: 
\bm{\tau} \vecle \bm{\delta},
\end{align}
where $\bm{\delta}$ is a constant vector in $\mathbb{R}^n$. 
Importantly,  the null $H_{\vecle \bm{\delta}}$ in \eqref{eq:null_less_delta} is composite, under which the exact distribution of the test statistic is generally unknown. 
The bounded null hypotheses can often be of interest in practice, 
and the choice of $\bm{\delta}$ in \eqref{eq:null_less_delta} depends on the application of interest. 
For example, 
we can choose $\bm{\delta}= \bm{0}$ 
if we are interested in whether the treatment has a positive effect for any unit. 
This is related to Pareto efficiency and monotonicity assumption in instrumental variable analysis (see the application in Appendix 7.2 of the supplementary materials).
If we are also interested in the magnitude of the effects, 
we can choose $\bm{\delta}= c \bm{1}$, under which the null hypothesis in \eqref{eq:null_less_delta} assumes that all individual treatment effects are at most $c$; 
see Section \ref{sec:ci_max_effect}.

The following theorem shows that the original Fisher randomization test, designed only 
for testing sharp null hypotheses, can also be valid for testing certain bounded null hypotheses.

\begin{theorem}\label{thm:null_broader_monotone_control}
    (a) If 
	the test statistic $t(\cdot, \cdot)$ is either differential increasing or effect increasing, 
	then for any constant $\bm{\delta} \in \mathbb{R}^n$, 
	the corresponding randomization $p$-value $p_{\bm{Z}, \bm{\delta}}$ in \eqref{eq:p_val_imp_control} for the sharp null $H_{\bm{\delta}}$ in \eqref{eq:null_delta} is also valid for testing the bounded null $H_{\vecle\bm{\delta}}$ in \eqref{eq:null_less_delta}, 
	i.e., under $H_{\vecle\bm{\delta}}$,  
	$
	\Pr(p_{\bm{Z}, \bm{\delta}} \le \alpha ) \le \alpha
	$
	for any $\alpha \in (0,1)$. 
	(b) If 
	the test statistic $t(\cdot, \cdot)$ is differential increasing, or it is both effect increasing and distribution free, 
	then for any possible assignment $\bm{z} \in \mathcal{Z}$, 
	the corresponding randomization $p$-value $p_{\bm{z}, \bm{\delta}}$ in \eqref{eq:p_val_imp_control}, 
	viewed as a function of $\bm{\delta}\in \mathbb{R}^n$, is monotone %
	increasing, 
	i.e., 
    $
	p_{\bm{z}, \bm{\delta}} \le p_{\bm{z}, \overline{\bm{\delta}}} 
	$
	for any $\bm{\delta} \vecle \overline{\bm{\delta}}$. 
\end{theorem}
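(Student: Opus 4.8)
The plan is to prove both parts by bounding $p_{\bm{Z},\bm{\delta}}$ below by a $p$-value that is manifestly stochastically larger than or equal to $\Unif[0,1]$ (for part (a)), and by verifying monotonicity in $\bm{\delta}$ directly (for part (b)). The algebraic engine throughout is the identity $\bm{Y}_{\bm{Z},\bm{\delta}}(0)=\bm{Y}(0)+\bm{Z}\circ(\bm{\tau}-\bm{\delta})$, obtained by substituting $\bm{Y}=\bm{Z}\circ\bm{Y}(1)+(\bm{1}-\bm{Z})\circ\bm{Y}(0)$ into \eqref{eq:impute_potential_outcome}; under $H_{\vecle\bm{\delta}}$ the correction $\bm{Z}\circ(\bm{\tau}-\bm{\delta})$ is element-wise nonpositive and supported on the treated units, so the imputed control potential outcomes never exceed the true ones.

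\emph{Part (a), differential increasing.} I would establish the pointwise bound $p_{\bm{Z},\bm{\delta}}\ge p_{\bm{Z},\bm{\tau}}$; since $\bm{\tau}$ is the true effect vector, $H_{\bm{\tau}}$ holds and, as noted in Section \ref{sec:frt}, $p_{\bm{Z},\bm{\tau}}$ is stochastically larger than or equal to $\Unif[0,1]$, whence $\Pr(p_{\bm{Z},\bm{\delta}}\le\alpha)\le\Pr(p_{\bm{Z},\bm{\tau}}\le\alpha)\le\alpha$. Writing $q(\bm{a})=t(\bm{a},\bm{Y}(0))$ and $f(\bm{a})=t(\bm{a},\bm{Y}_{\bm{Z},\bm{\delta}}(0))$, comparing the sum defining $p_{\bm{Z},\bm{\delta}}$ in \eqref{eq:p_val_imp_control} with the analogous one for $p_{\bm{Z},\bm{\tau}}$ reduces the claim to: for every $\bm{a}\in\mathcal{Z}$, $q(\bm{a})\ge q(\bm{Z})$ implies $f(\bm{a})\ge f(\bm{Z})$, i.e.\ $f(\bm{a})-f(\bm{Z})\ge q(\bm{a})-q(\bm{Z})$. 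Setting $\bm{y}=\bm{Y}_{\bm{Z},\bm{\delta}}(0)$ and $\bm{\eta}=\bm{\delta}-\bm{\tau}\vecge\bm{0}$, so that $\bm{Y}(0)=\bm{y}+\bm{Z}\circ\bm{\eta}$, this inequality rearranges to $t(\bm{Z},\bm{y}+\bm{Z}\circ\bm{\eta})-t(\bm{Z},\bm{y})\ge t(\bm{a},\bm{y}+\bm{Z}\circ\bm{\eta})-t(\bm{a},\bm{y})$, which is exactly Definition \ref{def:diff_increase} with its ``$\bm{a}$'' and ``$\bm{z}$'' taken to be the present $\bm{Z}$ and $\bm{a}$.

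\emph{Part (a), effect increasing.} Here the term-by-term comparison with $p_{\bm{Z},\bm{\tau}}$ breaks down, and the key device is the surrogate $g(\bm{a}):=t\bigl(\bm{a},\bm{Y}(0)+\bm{a}\circ(\bm{\tau}-\bm{\delta})\bigr)$, the statistic one would have computed had $\bm{a}$ been the realized assignment. Then $g$ is a deterministic function of $\bm{a}$, $g(\bm{Z})=t(\bm{Z},\bm{Y}_{\bm{Z},\bm{\delta}}(0))$ is the observed statistic, and $f(\bm{a})\ge g(\bm{a})$ for every $\bm{a}$: the difference of the two outcome arguments, $(\bm{Z}-\bm{a})\circ(\bm{\tau}-\bm{\delta})$, is nonnegative on units with $a_i=1$ and nonpositive on units with $a_i=0$, so Definition \ref{def:effect_increase} applies with its ``$\bm{z}$'' equal to $\bm{a}$. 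Consequently $p_{\bm{Z},\bm{\delta}}=\Pr\{f(\bm{A})\ge g(\bm{Z})\}\ge\Pr\{g(\bm{A})\ge g(\bm{Z})\}$, and the last quantity, being the conditional upper-tail probability of the random variable $g(\bm{Z})$ against an independent copy $g(\bm{A})$ of the same law, is stochastically larger than or equal to $\Unif[0,1]$ by the elementary fact that $\Pr(W'\ge W\mid W)$ has this property for i.i.d.\ $W,W'$. I expect this to be the crux of the theorem: there is no domination of one randomization distribution by another term by term, so one must simultaneously produce the right surrogate and verify both that the imputed statistics lie above it and that it is genuinely the tail of its own distribution, with the unit-by-unit sign bookkeeping handled precisely by the effect-increasing property.

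\emph{Part (b).} Fix $\bm{z}\in\mathcal{Z}$ and $\bm{\delta}\vecle\overline{\bm{\delta}}$; from \eqref{eq:impute_potential_outcome} one has $\bm{Y}_{\bm{z},\overline{\bm{\delta}}}(0)=\bm{Y}_{\bm{z},\bm{\delta}}(0)-\bm{z}\circ(\overline{\bm{\delta}}-\bm{\delta})$ with $\overline{\bm{\delta}}-\bm{\delta}\vecge\bm{0}$. If $t$ is differential increasing, I would again argue term by term in \eqref{eq:p_val_imp_control}: for each $\bm{a}$, $t(\bm{a},\bm{Y}_{\bm{z},\bm{\delta}}(0))\ge t(\bm{z},\bm{Y}_{\bm{z},\bm{\delta}}(0))$ forces the same inequality with $\overline{\bm{\delta}}$ in place of $\bm{\delta}$, because after setting $\bm{y}'=\bm{Y}_{\bm{z},\overline{\bm{\delta}}}(0)$ and $\bm{\eta}=\overline{\bm{\delta}}-\bm{\delta}$ the implication is precisely Definition \ref{def:diff_increase}; summing over $\bm{a}$ gives $p_{\bm{z},\bm{\delta}}\le p_{\bm{z},\overline{\bm{\delta}}}$. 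If instead $t$ is both effect increasing and distribution free, distribution-freeness lets me write $p_{\bm{z},\bm{\delta}}=\bar{F}\bigl(t(\bm{z},\bm{Y}_{\bm{z},\bm{\delta}}(0))\bigr)$, where $\bar{F}(c):=\Pr\{t(\bm{A},\bm{y})\ge c\}$ does not depend on $\bm{y}$ and is nonincreasing in $c$; effect-increasingness gives $t(\bm{z},\bm{Y}_{\bm{z},\overline{\bm{\delta}}}(0))\le t(\bm{z},\bm{Y}_{\bm{z},\bm{\delta}}(0))$ because only treated-unit coordinates were decreased, and applying the nonincreasing $\bar{F}$ reverses this to $p_{\bm{z},\bm{\delta}}\le p_{\bm{z},\overline{\bm{\delta}}}$.
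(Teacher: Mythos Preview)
Your proposal is correct and follows essentially the same route as the paper: in the effect-increasing case your surrogate $g(\bm{a})=t(\bm{a},\bm{Y}(0)+\bm{a}\circ(\bm{\tau}-\bm{\delta}))$ is exactly the paper's $t(\bm{a},\bm{Y}_{\bm{a},\bm{\delta}}(0))$, and your term-by-term comparisons in the differential-increasing case and the $G_0/\bar F$ factorization in the distribution-free case match the paper's proof. One minor wording quibble: the ``i.e.''\ linking the implication $q(\bm{a})\ge q(\bm{Z})\Rightarrow f(\bm{a})\ge f(\bm{Z})$ to the stronger inequality $f(\bm{a})-f(\bm{Z})\ge q(\bm{a})-q(\bm{Z})$ is not an equivalence, but since you then establish the stronger inequality via Definition~\ref{def:diff_increase}, the argument goes through.
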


From Theorem \ref{thm:null_broader_monotone_control}(a), 
the rejection of $H_{\bm{\delta}}$ also implies that there exists at least one unit $i$ whose treatment effect is larger than $\delta_i$. 
For the usual null $H_{c \bm{1}}$ of additive treatment effect $c$ for some $c \in \mathbb{R}$, 
the rejection of $H_{c \bm{1}}$ then implies that there exists at least one unit whose treatment effect is larger than $c$. 
Furthermore, 
Theorem
\ref{thm:null_broader_monotone_control}(a) also 
holds for general assignment mechanisms beyond BRE and CRE, such as blocking \citep{Miratrix2013} or rerandomization \citep{rerand2012, asymrerand}. 

Theorem \ref{thm:null_broader_monotone_control}(b) says all hypothesis in the ``shadow'' of a rejected hypothesis are also rejected.
For general outcomes, Theorem \ref{thm:null_broader_monotone_control}(b) helps provide meaningful confidence sets for $\bm{\tau}$ under practical computational constraints. 
For any $\alpha \in (0,1)$, 
the $1-\alpha$ confidence set for $\bm{\tau}$ by inverting randomization tests has the following equivalent forms: 
\begin{align}\label{eq:ci_both}
\left\{\bm{\delta}: p_{\bm{Z}, \bm{\delta}} > \alpha, \bm{\delta} \in \mathbb{R}^n \right\}  
& = 
\{\bm{\delta}: p_{\bm{Z}, \bm{\delta}} \le \alpha, \bm{\delta} \in \mathbb{R}^n \}^\c
= 
\bigcap_{\bm{\delta}: p_{\bm{Z}, \bm{\delta}} \le \alpha,   \bm{\delta} \in \mathbb{R}^n }
\left\{\bm{\eta}: \bm{\eta} \vecle \bm{\delta}, \bm{\eta} \in \mathbb{R}^n \right\}^\c.
\end{align}
Our confidence set consists of all points $\bm{\eta}$ not in the ``shadow'' of any given hypothesis that can be rejected. 
Therefore, for any vector $\bm{\delta}$ with a corresponding randomization $p$-value less than or equal to $\alpha$, 
the $1-\alpha$ confidence set \eqref{eq:ci_both} must be a subset of 
$\{\bm{\eta}: \bm{\eta} \vecle \bm{\delta}, \bm{\eta} \in \mathbb{R}^n \}^\c$, 
a set in which no element is uniformly bounded by $\bm{\delta}$.

\subsection{Inference for the maximum individual treatment effect}\label{sec:ci_max_effect}

Consider a sharp null hypothesis $H_{c\bm{1}}$ of constant treatment effect $c$ for some $c\in \mathbb{R}$. 
From Theorem \ref{thm:null_broader_monotone_control}(a), 
given our conditions, the randomization $p$-value for the sharp null $H_{c\bm{1}}$ can also be valid for the bounded null $H_{\vecle c\bm{1}}$, which, letting $\tau_{\max} \equiv \max_{1\le i\le n} \tau_i$, is equivalent to the null hypothesis that $\tau_{\max}\le c$. 
This immediately implies that 
inverting randomization tests for a sequence of constant treatment effects 
can provide confidence sets for the maximum individual effect $\tau_{\max}$. 
Moreover, from Theorem \ref{thm:null_broader_monotone_control}(b), 
the resulting confidence sets 
can be
intervals %
of forms $(\underline{c}, \infty)$ or $[\underline{c}, \infty)$. 
We summarize the results in the following corollary.

\begin{corollary}\label{cor:ci_max_imp_control}
	(a) If the test statistic $t(\cdot, \cdot)$ is either differential increasing or effect increasing, 
	then for any $\alpha \in (0,1)$, 
	the set 
	$
	\{c: p_{\bm{Z}, c\bm{1}} > \alpha, c \in \mathbb{R} \}
	$
	is a $1-\alpha$ confidence set for the maximum individual effect $\tau_{\max}$. 
	(b) 
	If the test statistic is differential increasing, or it is both effect increasing and distribution free, 
	then  
	the confidence set must have the form of $(\underline{c}, \infty)$ or $[\underline{c}, \infty)$ with $\underline{c} = \inf \{c: p_{\bm{Z}, c\bm{1}} > \alpha, c \in \mathbb{R} \}$. 
\end{corollary}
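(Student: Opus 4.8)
The plan is to obtain both parts as essentially immediate corollaries of Theorem~\ref{thm:null_broader_monotone_control}, the only substantive observation being that the bounded null $H_{\vecle c\bm 1}$ is verbatim the hypothesis ``$\tau_{\max}\le c$''. First I would record that, because all potential outcomes are treated as fixed, $\tau_{\max}=\max_{1\le i\le n}\tau_i$ is a fixed (nonrandom) real number, and that $\bm\tau\vecle c\bm 1\iff \tau_i\le c \text{ for all } i\iff\tau_{\max}\le c$. Hence $H_{\vecle c\bm 1}$ holds precisely when $c\ge\tau_{\max}$. For part (a), apply Theorem~\ref{thm:null_broader_monotone_control}(a) with $\bm\delta=c\bm 1$ at the particular value $c=\tau_{\max}$, for which $H_{\vecle\tau_{\max}\bm 1}$ is trivially true: this gives $\Pr(p_{\bm Z,\tau_{\max}\bm 1}\le\alpha)\le\alpha$, equivalently $\Pr(p_{\bm Z,\tau_{\max}\bm 1}>\alpha)\ge 1-\alpha$. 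Since $\tau_{\max}$ is deterministic, the event $\{\tau_{\max}\in\{c:p_{\bm Z,c\bm 1}>\alpha\}\}$ coincides with $\{p_{\bm Z,\tau_{\max}\bm 1}>\alpha\}$, so the displayed random set covers $\tau_{\max}$ with probability at least $1-\alpha$, which is exactly the asserted coverage. Note this step only uses the weaker hypothesis (``differential increasing or effect increasing''), matching part (a).

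For part (b), the extra input is the monotonicity in Theorem~\ref{thm:null_broader_monotone_control}(b): under the stronger conditions (``differential increasing, or both effect increasing and distribution free''), for every fixed realization $\bm z\in\mathcal Z$ the map $\bm\delta\mapsto p_{\bm z,\bm\delta}$ is increasing with respect to $\vecle$. Since $c\le c'$ implies $c\bm 1\vecle c'\bm 1$, the one-dimensional map $c\mapsto p_{\bm z,c\bm 1}$ is monotone increasing on $\mathbb R$. Consequently the set $S_{\bm z}=\{c:p_{\bm z,c\bm 1}>\alpha\}$ is upward closed: if $c\in S_{\bm z}$ and $c'>c$ then $p_{\bm z,c'\bm 1}\ge p_{\bm z,c\bm 1}>\alpha$, so $c'\in S_{\bm z}$. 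An upward-closed subset of $\mathbb R$ is necessarily of the form $(\underline c,\infty)$ or $[\underline c,\infty)$ with $\underline c=\inf S_{\bm z}$, adopting the usual conventions $\inf\emptyset=+\infty$ (the degenerate empty case) and $\inf\mathbb R=-\infty$ (the case $S_{\bm z}=\mathbb R$). Specializing $\bm z$ to the observed assignment $\bm Z$ gives the claim.

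I do not expect a genuine obstacle: the corollary is a reformulation of the theorem plus the trivial identity relating $H_{\vecle c\bm 1}$ to $\tau_{\max}$. The only points needing a little care are (i) being explicit that $\tau_{\max}$ is deterministic, so that ``evaluating at $c=\tau_{\max}$'' is legitimate in the coverage argument, and (ii) handling the degenerate and boundary cases in the characterization of upward-closed subsets of $\mathbb R$. In particular, whether the left endpoint $\underline c$ is included depends on whether the infimum is attained, i.e.\ on the one-sided continuity/discreteness of $c\mapsto p_{\bm Z,c\bm 1}$, which is why the statement deliberately leaves both $(\underline c,\infty)$ and $[\underline c,\infty)$ open rather than pinning down the endpoint; if desired one could append a remark that right-continuity of the $p$-value in $c$ forces the open form with $p_{\bm Z,\underline c\bm 1}\le\alpha$, but this refinement is not needed.
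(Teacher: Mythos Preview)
Your proposal is correct and follows essentially the same approach as the paper's own proof: both argue coverage by evaluating Theorem~\ref{thm:null_broader_monotone_control}(a) at $c=\tau_{\max}$ (where $H_{\vecle\tau_{\max}\bm 1}$ holds by definition), and both derive the interval form from the monotonicity of $c\mapsto p_{\bm Z,c\bm 1}$ given by Theorem~\ref{thm:null_broader_monotone_control}(b). Your write-up is slightly more explicit about $\tau_{\max}$ being deterministic and about the upward-closed characterization (including degenerate cases), but the logic is the same.
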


The confidence intervals in Corollary \ref{cor:ci_max_imp_control}
can also be thought of as intervals stating where at least some of the individual treatment effects lie, and the more homogeneous the effects are, the more individual effects these intervals will contain.

\section{Randomization test for quantiles of individual treatment effects}\label{sec:inf_quant_all}

\subsection{Randomization test for general null hypotheses}\label{sec:ri_general_null}
For a general null hypothesis of interest, 
e.g., $\bm{\tau} \in \mathcal{H} \subset \mathbb{R}^n$, 
we can always obtain a valid $p$-value by maximizing the randomization $p$-value $p_{\bm{Z}, \bm{\delta}}$ in \eqref{eq:p_val_imp_control} over $\bm{\delta} \in \mathcal{H}$. 
That is, 
$\sup_{\bm{\delta} \in \mathcal{H}} p_{\bm{Z}, \bm{\delta}}$ 
is a valid $p$-value for testing the null hypothesis of $\bm{\tau} \in \mathcal{H}$. 
Unfortunately, 
{\rev 
such an optimization can be quite challenging, due to the complicated dependence of the imputed null distribution $G_{\bm{Z}, \bm{\delta}} (\cdot)$ in \eqref{eq:tail_prob_tilde} on the hypothesized effects $\bm{\delta}$. 
Moreover, the $p$-value $p_{\bm{Z}, \bm{\delta}}$ is the value of $G_{\bm{Z}, \bm{\delta}} (\cdot)$ evaluated at the realized value of the test statistic, $t(\bm{Z},\bm{Y}_{\bm{Z}, \bm{\delta}}(0))$, which itself 
also depends on
$\bm{\delta}$. 
To ease computation, we use a distribution free test statistic, reducing the optimization for $\sup_{\bm{\delta} \in \mathcal{H}} p_{\bm{Z}, \bm{\delta}}$ to one for the realized value of the test statistic. 

From Definition \ref{def:dist_free}, 
the imputed randomization distribution \eqref{eq:tail_prob_tilde} of a distribution free test statistic $t(\cdot, \cdot)$ under the null hypothesis $H_{\bm{\delta}}$ in \eqref{eq:null_delta} %
has the following equivalent forms: 
\begin{align}\label{eq:G0}
G_{\bm{Z}, \bm{\delta}} (c) & =  \sum_{\bm{a} \in \mathcal{Z}} \Pr(\bm{A} = \bm{a}) \I\left\{
t(\bm{a}, \bm{Y}_{\bm{Z},\bm{\delta}}(0))
\ge 
c
\right\}
= \sum_{\bm{a} \in \mathcal{Z}} \Pr(\bm{A} = \bm{a}) \I\left\{
t(\bm{a}, \bm{y})
\ge 
c
\right\}
= 
G_0(c), 
\end{align}
where $\bm{y} \in \mathbb{R}^n$ can be any fixed vector and 
$G_0(c)$ is a tail probability function that does not depend on the observed assignment $\bm{Z}$ or the null hypothesis of interest $\bm{\delta}$. 
By the definition in \eqref{eq:p_val_imp_control} and the fact that $G_0(c)$ is 
decreasing 
in $c$, 
\eqref{eq:G0} implies that 
\begin{align}\label{eq:sup_p_val}
\sup_{\bm{\delta} \in \mathcal{H}} p_{\bm{Z}, \bm{\delta}} = 
\sup_{\bm{\delta} \in \mathcal{H}} G_{\bm{Z}, \bm{\delta}} \left\{ t(\bm{Z},\bm{Y}_{\bm{Z}, \bm{\delta}}(0))\right\} 
= 
\sup_{\bm{\delta} \in \mathcal{H}} G_0 \left\{ t(\bm{Z},\bm{Y}_{\bm{Z}, \bm{\delta}}(0))\right\} 
\le 
G_0 \left\{ \inf_{\bm{\delta} \in \mathcal{H}} t(\bm{Z},\bm{Y}_{\bm{Z}, \bm{\delta}}(0))\right\}. 
\end{align} 
Therefore, the right hand side of \eqref{eq:sup_p_val} is also a valid $p$-value for testing the null of $\bm{\tau} \in \mathcal{H}$, 
and 
more importantly, its optimization becomes much simpler, 
because it now involves only minimization over a known and generally closed-form function of $\bm{\delta}$. 
Furthermore, as demonstrated in the next subsection, when considering null hypotheses on quantiles of individual effects and using rank sum statistics, such an optimization can have a closed-form solution.
For completeness, we summarize the results below. 

\begin{theorem}\label{thm:general_null}
	For any distribution free test statistic and any constant region $\mathcal{H}\subset \mathbb{R}^n$, %
	the supremum of the randomization $p$-value $\sup_{\bm{\delta} \in \mathcal{H}} p_{\bm{Z}, \bm{\delta}}$,  
    as well as its upper bound on the right hand side of \eqref{eq:sup_p_val}, 
	is valid for testing the null hypothesis of $\bm{\tau} \in \mathcal{H}$. 
\end{theorem}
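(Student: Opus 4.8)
\textbf{Proof proposal for Theorem \ref{thm:general_null}.}
The plan is to reduce the composite null $\bm{\tau}\in\mathcal{H}$ to the family of sharp nulls $\{H_{\bm{\delta}}:\bm{\delta}\in\mathcal{H}\}$ and then exploit two facts already established: (i) for \emph{any} fixed $\bm{\delta}$, the randomization $p$-value $p_{\bm{Z},\bm{\delta}}$ in \eqref{eq:p_val_imp_control} is super-uniform whenever the sharp null $H_{\bm{\delta}}$ in \eqref{eq:null_delta} holds, as noted in Section \ref{sec:frt}; and (ii) for a distribution free statistic, \eqref{eq:G0} collapses the imputed randomization distribution to the single survival function $G_0$, which yields the chain of inequalities in \eqref{eq:sup_p_val}. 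Everything then follows from a "feasible point'' argument plus the elementary fact that a random variable dominating a super-uniform random variable is itself super-uniform.

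Concretely, I would first fix any potential-outcome configuration with $\bm{\tau}\in\mathcal{H}$ and take the true effect vector $\bm{\delta}=\bm{\tau}$ as a feasible point of the optimization. Since $\bm{\tau}\in\mathcal{H}$, we have $\sup_{\bm{\delta}\in\mathcal{H}}p_{\bm{Z},\bm{\delta}}\ge p_{\bm{Z},\bm{\tau}}$ pointwise in $\bm{Z}$. Because the sharp null $H_{\bm{\tau}}$ actually holds here, the discussion following \eqref{eq:p_val_imp_control} gives that $p_{\bm{Z},\bm{\tau}}$ is stochastically larger than or equal to $\Unif[0,1]$. Hence, for every $\alpha\in[0,1]$,
\[
\Pr\!\left(\sup_{\bm{\delta}\in\mathcal{H}}p_{\bm{Z},\bm{\delta}}\le\alpha\right)\le\Pr\!\left(p_{\bm{Z},\bm{\tau}}\le\alpha\right)\le\alpha,
\]
which is exactly the claimed validity of $\sup_{\bm{\delta}\in\mathcal{H}}p_{\bm{Z},\bm{\delta}}$. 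This half uses nothing about the statistic beyond validity of the Fisher randomization test for sharp nulls.

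Next I would treat the upper bound $G_0\{\inf_{\bm{\delta}\in\mathcal{H}}t(\bm{Z},\bm{Y}_{\bm{Z},\bm{\delta}}(0))\}$ on the right-hand side of \eqref{eq:sup_p_val}. Here the distribution free property of Definition \ref{def:dist_free} enters: by \eqref{eq:G0}, $G_{\bm{Z},\bm{\delta}}(\cdot)\equiv G_0(\cdot)$ irrespective of $\bm{Z}$ and $\bm{\delta}$, and since $G_0$ is nonincreasing, the chain in \eqref{eq:sup_p_val} shows this upper bound dominates $\sup_{\bm{\delta}\in\mathcal{H}}p_{\bm{Z},\bm{\delta}}$ pointwise. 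Applying the $\Pr(\cdot\le\alpha)$ comparison once more yields validity of the upper bound as well. For completeness I would remark that the final inequality in \eqref{eq:sup_p_val} can be strict when $G_0$ is discontinuous at the relevant point, but this only makes the resulting test more conservative and does not disturb validity.

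I do not expect a genuine technical obstacle; the only care needed is logical bookkeeping—keeping straight that ``valid $p$-value'' means ``super-uniform under every potential-outcome configuration consistent with the null'', and that both the supremum step and the domination step preserve this property. The substantive difficulty is deliberately deferred: the upper bound is useful only because it replaces an intractable sup over $p$-values by a minimization of the \emph{single} function $\bm{\delta}\mapsto t(\bm{Z},\bm{Y}_{\bm{Z},\bm{\delta}}(0))$ over $\mathcal{H}$, and whether that minimization is tractable depends on the geometry of $\mathcal{H}$ and the form of $t$—precisely what the subsequent development exploits for quantile null hypotheses.
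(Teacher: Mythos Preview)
Your proposal is correct and follows essentially the same approach as the paper's proof: take $\bm{\delta}=\bm{\tau}$ as a feasible point so that $\sup_{\bm{\delta}\in\mathcal{H}}p_{\bm{Z},\bm{\delta}}\ge p_{\bm{Z},\bm{\tau}}$, invoke super-uniformity of $p_{\bm{Z},\bm{\tau}}$ from the validity of the Fisher randomization test for the true sharp null, and then note that the right-hand side of \eqref{eq:sup_p_val} dominates the supremum and is therefore also valid. The paper's proof is simply a terser version of exactly this argument.
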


We give two additional remarks. 
First, the distribution free property for test statistics is also utilized by \citet{Rosenbaum2007interference} for analyzing the magnitude of treatment effects in the presence of interference. 
In particular, it helps derive the distributions of certain statistics under a uniformity trial. 
By contrast, the distribution free property here mainly 
helps ease the computation of the valid $p$-value in \eqref{eq:sup_p_val}. 
{\rev 
Second, as suggested by a reviewer, another way to overcome the complex dependence of $G_{\bm{Z}, \bm{\delta}}(\cdot)$ on $\bm{\delta}$ is through asymptotic approximation, under which we can approximate $G_{\bm{Z}, \bm{\delta}}(\cdot)$ by a closed-form expression of $\bm{\delta}$. 
However, as discussed shortly, when considering null hypotheses on quantiles of individual effects, 
some units are allowed to have infinitely large individual effects. 
Consequently, we will consider cases allowing elements of $\bm{\delta}$ to take extreme and even infinite values, 
which 
may 
destroy the asymptotic approximation for $G_{\bm{Z}, \bm{\delta}}(\cdot)$.
On the contrary, using Theorem \ref{thm:general_null} with rank-based distribution free test statistics,  
the inference will be not only finite-sample valid but also robust to extreme values in $\bm{\delta}$; see also the related discussion in Remark \ref{rmk:rank_robust}. 
}

\subsection{Inference for quantiles of individual treatment effects}\label{sec:test_quantile}

We sort the true individual treatment effects in an increasing order: $\tau_{(1)} \le \tau_{(2)} \le \ldots \le \tau_{(n)}$, 
where $\tau_{(n)}$ is equivalently the maximum individual effect $\tau_{\max}$ studied in Section \ref{sec:ci_max_effect}. 
In this subsection, instead of only the maximum individual effect, we intend to infer general quantiles of the individual treatment effects $\tau_{(k)}$'s for $1 \le k \le n$, where $k=n$ corresponds to the maximum or largest individual effect, $k=n-1$ corresponds to the second largest individual effect, and so on.
Specifically, for any $1 \le k \le n$ and any constant $c \in \mathbb{R}$, we consider the following null hypothesis that the individual effect of rank $k$ is at most $c$: 
\begin{align}\label{eq:H_nc_k} 
H_{k, c}: \tau_{(k)} \le c. 
\end{align}
In the special case of $k=n$, $H_{n, c}$ reduces to the bounded null $H_{\vecle c\bm{1}}$ as in \eqref{eq:null_less_delta}.
Define 
$
\mathcal{H}_{k, c} = \{\bm{\delta} \in \mathbb{R}^n: \delta_{(k)} \le c \} \subset \mathbb{R}^n
$
as the set of vectors whose %
elements of rank $k$
are smaller than or equal to $c$. 
Then the null hypothesis $H_{k, c}$ in \eqref{eq:H_nc_k} can be equivalently represented as $\bm{\tau} \in \mathcal{H}_{k, c}$.

We consider testing the null hypothesis $H_{k, c}$ in \eqref{eq:H_nc_k} using the randomization $p$-value $p_{\bm{Z}, \bm{\delta}}$ in \eqref{eq:p_val_imp_control} with an effect increasing and distribution free statistic $t(\cdot, \cdot)$.  
Specifically, we focus on 
randomized experiments with exchangeable treatment assignment including BRE and CRE, 
and use the test statistic  
in \eqref{eq:stat_class_rank} with a monotone %
increasing 
function $\phi(\cdot)$ and 
a random tie-breaking rule based on the ordering of the units, assuming
that the ordering of the units has been randomly permuted and is independent of the treatment assignment.  
For descriptive convenience, we call such a statistic a rank score statistic, formally defined as follows. 

\begin{definition}\label{def:rank_score}
	A statistic $t(\cdot, \cdot)$ is a rank score statistic, if it 
	 can be written as $t(\bm{z}, \bm{y}) =  \sum_{i=1}^n z_i \phi (\rank_i(\bm{y}))$, where the score function $\phi(\cdot)$ is monotone increasing and the rank function $\rank (\cdot)$ uses a random tie-breaking rule.  
\end{definition}

From Proposition \ref{prop:stat_property_rank}, 
under experiments with exchangeable treatment assignment, 
the rank score statistic in Definition \ref{def:rank_score} is both effect increasing and distribution free. 
Consequently, 
from Theorem \ref{thm:general_null}, 
to test the null hypothesis $H_{k, c}$ in \eqref{eq:H_nc_k}, it suffices to minimize the value of the test statistic $t(\bm{Z}, \bm{Y}_{\bm{Z}, \bm{\delta}}(0))$ over $\bm{\delta} \in \mathcal{H}_{k, c}$. 
As we demonstrate below, this minimization has a closed-form solution. 
Let $m = \sum_{i=1}^n Z_i$ be the number of treated units, 
and $\mathcal{I}_k$ be the set of indices of treated units with the largest $\min(n-k, m)$ observed outcomes for $1\le k \le n$; 
when $k=n$, $\mathcal{I}_n$ is an empty set. 
We then define a column vector as follows: 
\begin{align}\label{eq:xi_k}
\bm{\xi}_{k,c} = (\xi_{1k,c}, \xi_{2k, c}, \ldots, \xi_{nk, c}) \in \mathbb{R}^n, 
\quad 
\text{ where } \ 
\xi_{ik, c} = 
\begin{cases}
\infty, & \text{if } i \in \mathcal{I}_k, \\
c, & \text{otherwise},
\end{cases}
\quad
(1\le i \le n). %
\end{align}

\begin{theorem}\label{thm:test_Hkc}
	Take a randomized experiment with exchangeable treatment assignment as in Definition \ref{def:exchange}, 
	and any rank score statistic $t(\cdot, \cdot)$
	in Definition \ref{def:rank_score}.
 Then, for any $1 \le k \le n$ and any constant $c \in \mathbb{R}$, 
	\begin{align}\label{eq:p_kc}
	p_{\bm{Z}, k, c}
	& \equiv 
	\sup_{\bm{\delta} \in \mathcal{H}_{k,c}} p_{\bm{Z}, \bm{\delta}} 
	= 
	G_0 \left\{ \inf_{\bm{\delta} \in \mathcal{H}_{k,c}} t(\bm{Z},\bm{Y}_{\bm{Z}, \bm{\delta}}(0))\right\}
	= 
	G_0 \left\{ t(\bm{Z},\bm{Y} - \bm{Z} \circ \bm{\xi}_{k,c} )\right\}
	\end{align} 
	is a valid $p$-value for testing the null hypothesis $H_{k, c}$ in \eqref{eq:H_nc_k}, 
	where $G_0$ and $\bm{\xi}_{k,c}$ are defined in \eqref{eq:G0} and \eqref{eq:xi_k}, respectively. 
	Specifically, 
	under $H_{k, c}$, 
	$\Pr(p_{\bm{Z}, k, c} \le \alpha) \le \alpha$ for any $\alpha \in (0,1)$. 
\end{theorem}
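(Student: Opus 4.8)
The plan is to apply Theorem~\ref{thm:general_null} with $\mathcal{H} = \mathcal{H}_{k,c}$: since the rank score statistic in Definition~\ref{def:rank_score} is distribution free under an exchangeable assignment (by Proposition~\ref{prop:stat_property_rank}(b)), Theorem~\ref{thm:general_null} already gives that $\sup_{\bm{\delta}\in\mathcal{H}_{k,c}} p_{\bm{Z},\bm{\delta}}$ and its upper bound $G_0\{\inf_{\bm{\delta}\in\mathcal{H}_{k,c}} t(\bm{Z},\bm{Y}_{\bm{Z},\bm{\delta}}(0))\}$ are both valid $p$-values for $H_{k,c}$. So the only thing left to prove is the last equality in \eqref{eq:p_kc}: that the infimum of $t(\bm{Z},\bm{Y}_{\bm{Z},\bm{\delta}}(0))$ over $\bm{\delta}\in\mathcal{H}_{k,c}$ equals $t(\bm{Z},\bm{Y}-\bm{Z}\circ\bm{\xi}_{k,c})$, i.e. the minimizing $\bm{\delta}$ can be taken to be $\bm{\xi}_{k,c}$, and moreover that the infimum is actually attained (so \eqref{eq:sup_p_val} holds with equality, though for validity even the inequality direction suffices).

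First I would unpack the objective. Recall $\bm{Y}_{\bm{Z},\bm{\delta}}(0) = \bm{Y} - \bm{Z}\circ\bm{\delta}$ from \eqref{eq:impute_potential_outcome}, so only the coordinates $\delta_i$ with $Z_i=1$ enter the statistic at all; changing $\delta_i$ for control units is free and we may as well send those to $-\infty$ (or any value) — in particular we can always arrange $\bm{\delta}\in\mathcal{H}_{k,c}$ by dumping the "slack" into control coordinates, so effectively the constraint $\delta_{(k)}\le c$ only bites on the treated coordinates. For the treated units, writing $\bm{Y}_{\bm{Z},\bm{\delta}}(0)$ coordinatewise, a treated unit $i$ has imputed control outcome $Y_i - \delta_i$; making $\delta_i$ large pushes this outcome down, lowering its rank, which (since $\phi$ is monotone increasing) lowers $t$. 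So we want each treated $\delta_i$ as large as possible. The constraint is that among all $n$ coordinates of $\bm{\delta}$, the $k$-th smallest is $\le c$; since control coordinates can be made arbitrarily small, this is equivalent to: \emph{at most $n-k$ of the treated coordinates exceed $c$} is too weak — more precisely, using $n-m$ control coordinates driven to $-\infty$, the constraint $\delta_{(k)}\le c$ becomes "at least $k$ coordinates are $\le c$", which is automatically satisfied once $k \le (n-m) + \#\{i\in \text{treated}: \delta_i \le c\}$, i.e. we may let up to $\min(m, n-k)$ of the treated $\delta_i$'s be $+\infty$ and must keep the rest $\le c$. To minimize $t$ we should send those $\delta_i$'s to $+\infty$ for which it helps most — and because $\phi$ is monotone and the statistic only sees ranks, sending $\delta_i\to\infty$ for unit $i$ effectively removes $\phi(\text{something small})$ from the sum and the best use of a "free" $+\infty$ slot is on the treated units with the largest observed $Y_i$ (since those are the ones currently contributing the largest $\phi$-values, and demoting them to the bottom and setting the remaining treated $\delta_i=c$ — a uniform downward shift of the remaining treated outcomes by $c$ — minimizes the rank sum). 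That identifies the minimizer as $\bm{\xi}_{k,c}$ with the $+\infty$'s placed on $\mathcal{I}_k$, the treated units with the largest $\min(n-k,m)$ observed outcomes.

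The main obstacle is the combinatorial exchange argument that $\bm{\xi}_{k,c}$ is genuinely optimal — i.e. that (i) one should saturate the constraint by setting all non-$\infty$ treated coordinates equal to $c$ rather than anything smaller, (ii) one should use exactly $\min(n-k,m)$ infinities rather than fewer, and (iii) the infinities belong on the top-$Y_i$ treated units. Part (i) is a monotonicity observation: increasing any treated $\delta_i$ (within $[-\infty,c]$) weakly decreases every rank and hence weakly decreases $t$, so push them all to $c$. Parts (ii)–(iii) need a swap argument: given any feasible $\bm{\delta}$, if some treated unit $i$ with a large $Y_i$ has $\delta_i \le c$ while some treated unit $j$ with smaller $Y_j$ has $\delta_j=\infty$, swapping makes the configuration of imputed treated control outcomes "more bottom-heavy" in the treated group relative to the control group, which — because $t$ counts (a monotone function of) how high the treated units rank — cannot increase $t$; formally one checks the rank vector changes in a way that weakly decreases $\sum_i z_i\phi(\rank_i)$. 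I would phrase this cleanly using the "first" tie-breaking convention so ranks are a genuine permutation of $1,\dots,n$ and the effect of sending a coordinate to $-\infty$ on the rank sum is transparent. One must also handle the edge cases $k=n$ (so $\mathcal{I}_n=\emptyset$, $\bm{\xi}_{n,c}=c\bm{1}$, recovering $H_{\vecle c\bm1}$) and $n-k \ge m$ (all treated coordinates get $\infty$, the statistic is minimized at its global minimum), and note that since $\bm{\xi}_{k,c}$ lies in $\mathcal{H}_{k,c}$ and attains the infimum, the inequality in \eqref{eq:sup_p_val} is an equality. The final line "$\Pr(p_{\bm{Z},k,c}\le\alpha)\le\alpha$ under $H_{k,c}$" is then immediate from Theorem~\ref{thm:general_null} since $\bm{\tau}\in\mathcal{H}_{k,c}$ exactly when $\tau_{(k)}\le c$.
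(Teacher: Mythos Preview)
Your high-level plan matches the paper's exactly: invoke Theorem~\ref{thm:general_null} for validity, then show the infimum of $t(\bm{Z},\bm{Y}-\bm{Z}\circ\bm{\delta})$ over $\mathcal{H}_{k,c}$ is attained at $\bm{\xi}_{k,c}$ (so that the inequality in \eqref{eq:sup_p_val} is an equality). The paper packages the latter as Lemmas~\ref{lemma:inf_t_over_Hk0}--\ref{lemma:inf_t_over_Hkc}; your steps (i)--(ii) (push non-infinite treated $\delta_i$'s up to $c$, and use all $\min(n-k,m)$ infinities) coincide with the paper's first reduction, which is just the effect increasing property.

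Where your proposal diverges is step (iii). The paper does \emph{not} use a swap argument: it computes $t(\bm{z},\bm{y}-\bm{z}\circ\bm{\eta})$ explicitly for any choice of $\mathcal{J}_k$ (the set receiving $\infty$) as $\sum_{i=1}^{l}\phi(i)+\sum_{p=1}^{m-l}\phi(r_{j_p}+l+p-j_p)$, and then uses $r_{j_p}-r_p\ge j_p-p$ (since consecutive rank gaps are $\ge 1$) to compare to the special case $\mathcal{J}_k=\mathcal{I}_k$. Your swap argument can be made to work, but as written it has a gap: when you move the $\infty$ from a low-$Y$ treated unit $j$ to a high-$Y$ treated unit $i$, the ``active'' treated value drops from $Y_i-c$ to $Y_j-c$, but this can \emph{raise} the ranks of the other active treated units, so term-by-term on the treated side the sign is ambiguous. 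The clean fix is to pass to the complement, writing $t(\bm{z},\bm{w})=\sum_{r=1}^n\phi(r)-\sum_{p:z_p=0}\phi(\rank_p(\bm{w}))$ and noting that every \emph{control} rank weakly increases under the swap (since the only changed non-infinite value went down); this gives $t^A\le t^B$ immediately. Relatedly, your claim in (i) that increasing a treated $\delta_i$ ``weakly decreases every rank'' is false for the same reason --- just cite the effect increasing property directly.

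One small technical point you glossed over: $\bm{\xi}_{k,c}$ has $\infty$ entries and so is not literally in $\mathcal{H}_{k,c}\subset\mathbb{R}^n$. The paper handles this (Lemma~\ref{lemma:inf_t_over_Hkc}(b)) by replacing $\infty$ with any finite $\Delta>\max_{j:z_j=1}Y_j-\min_{j:z_j=0}Y_j$, which gives the same statistic value and lies in $\mathcal{H}_{k,c}$, so the infimum is genuinely attained.
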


From Theorem \ref{thm:test_Hkc}, we are able to test whether any quantile of the individual treatment effects is bounded above by any constant. 
When $k=n$, 
the null hypothesis $H_{n, c}$ in \eqref{eq:H_nc_k} reduces to $H_{c\bm{1}}$ in \eqref{eq:null_less_delta}, 
the vector $\bm{\xi}_{n, c}$ reduces to $c\bm{1}$, 
the $p$-value $p_{\bm{Z}, k, c}$ in \eqref{eq:p_kc} reduces to $p_{\bm{Z}, c\bm{1}}$ in \eqref{eq:p_val_imp_control}, 
and Theorem \ref{thm:test_Hkc} reduces to a special case of Theorem \ref{thm:null_broader_monotone_control}. 

In Theorem \ref{thm:test_Hkc}, 
intuitively, when testing null hypothesis $H_{k,c}$ of $\tau_{(k)} \le c$, we allow the $\tau_{(j)}$'s with $j>k$ to be arbitrarily large. 
Moreover, we assign these infinity values to the treated units with largest outcomes to minimize the value of the test statistic, or equivalently to maximize the randomization $p$-value. 
{\rev As a result,} 
the calculation of $p_{\bm{Z}, k, c}$ in \eqref{eq:p_kc} involves ranking vectors with infinite elements. 
In practice, we can replace those infinite elements of $\bm{\xi}_{k,c}$ by any constant larger than the difference between the maximum treated observed outcome and the minimum control observed outcome, and the value of $p_{\bm{Z}, k, c}$ will remain the same. 
For simplicity, we 
use infinity, and view two negative infinite elements as equal in ranking.
This is also compatible with the \textsf{R} software.

\begin{rmk}\label{rmk:rank_robust}
    The ranking aspect of rank statistics plays an important role in making the test statistic distribution free and thus eases the computation. 
    From Theorem \ref{thm:test_Hkc} and the discussion before, 
    the rank statistic also has the advantage that it is robust to extreme outcome values. Specifically, although we allow some individual treatment effects to be infinity when maximizing the $p$-value over $\bm{\delta}\in \mathcal{H}_{k,c}$, 
    the rank statistic is still able to provide significant $p$-values against the null; see, e.g.,  the simulation in Section \ref{sec:simu_visual} and the application in Section \ref{sec:profession}. 
\end{rmk}

\subsection{Confidence intervals for quantiles of individual treatment effects}\label{sec:ci_qi}

Similar to Corollary \ref{cor:ci_max_imp_control}, 
we are now able to construct confidence sets for quantiles of the individual treatment effects based on  Theorem \ref{thm:test_Hkc}. 
Moreover, the $p$-value $p_{\bm{Z}, k, c}$ in \eqref{eq:p_kc} 
enjoys a certain monotonicity property that 
helps simplify the confidence sets.  
We summarize the results in the following theorem.

\begin{theorem}\label{thm:con_interval_nc}
	Take a randomized experiment with exchangeable treatment assignment as in Definition \ref{def:exchange}, and any rank score statistic in Definition \ref{def:rank_score}.
 Then, for any $1\le k\le n$ and any $\alpha \in (0,1)$, we have
	(a) for any fixed $\bm{z}$ and $k$, $p_{\bm{z}, k, c}$, defined as in \eqref{eq:p_kc}, is increasing in $c$, and 
	(b) a $1-\alpha$ confidence set for $\tau_{(k)}$ is $\{c: p_{\bm{Z}, k, c} > \alpha, c \in \mathbb{R} \}$, which must have the form of 
	$(\underline{c}, \infty)$ or $[\underline{c}, \infty)$ with $\underline{c} = \inf \{c: p_{\bm{Z}, k, c} > \alpha, c \in \mathbb{R} \}$. 
\end{theorem}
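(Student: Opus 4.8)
The plan is to establish Theorem \ref{thm:con_interval_nc} in two parts, leveraging the closed-form expression for $p_{\bm{Z}, k, c}$ in \eqref{eq:p_kc} together with the effect increasing property of the rank score statistic. For part (a), I would fix $\bm{z}$ and $k$ and examine how $p_{\bm{z}, k, c} = G_0\{t(\bm{z}, \bm{Y} - \bm{z} \circ \bm{\xi}_{k,c})\}$ depends on $c$. The key observation is that the index set $\mathcal{I}_k$ of treated units with the largest $\min(n-k, m)$ observed outcomes does \emph{not} depend on $c$ (it is determined purely by the observed data $\bm{Z}, \bm{Y}$). Therefore, increasing $c$ to some $\overline{c} \ge c$ changes $\bm{\xi}_{k,c}$ only in its finite coordinates, each of which increases from $c$ to $\overline{c}$; the infinite coordinates stay infinite. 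Consequently $\bm{Y} - \bm{z}\circ\bm{\xi}_{k,\overline{c}} \vecle \bm{Y} - \bm{z}\circ\bm{\xi}_{k,c}$, i.e., subtracting a larger amount from the treated coordinates. Since the rank score statistic is effect increasing (Proposition \ref{prop:stat_property_rank}(a) under the ``first''/``random'' tie convention), decreasing treated outcomes decreases $t$, so $t(\bm{z}, \bm{Y} - \bm{z}\circ\bm{\xi}_{k,\overline{c}}) \le t(\bm{z}, \bm{Y} - \bm{z}\circ\bm{\xi}_{k,c})$. Because $G_0(\cdot)$ is a decreasing (nonincreasing) tail-probability function, this flips the inequality and yields $p_{\bm{z}, k, \overline{c}} \ge p_{\bm{z}, k, c}$, which is exactly the claimed monotonicity.

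For part (b), I would first invoke Theorem \ref{thm:test_Hkc}, which says $p_{\bm{Z}, k, c}$ is a valid $p$-value for $H_{k,c}: \tau_{(k)} \le c$, so by the standard test-inversion argument the set $C_k \equiv \{c: p_{\bm{Z}, k, c} > \alpha\}$ is a $1-\alpha$ confidence set for $\tau_{(k)}$: if the true $\tau_{(k)} = c^*$, then $\Pr(c^* \notin C_k) = \Pr(p_{\bm{Z}, k, c^*} \le \alpha) \le \alpha$. It then remains to show $C_k$ has the form of a right-unbounded interval. This is immediate from part (a): if $c \in C_k$, meaning $p_{\bm{Z}, k, c} > \alpha$, then for any $c' \ge c$ we have $p_{\bm{Z}, k, c'} \ge p_{\bm{Z}, k, c} > \alpha$, so $c' \in C_k$ as well. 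Hence $C_k$ is ``upward closed'' in $\mathbb{R}$, and setting $\underline{c} = \inf C_k$ gives $C_k = (\underline{c}, \infty)$ or $[\underline{c}, \infty)$ depending on whether the infimum is attained.

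A few technical points need care. First, I should confirm the closed-form $\inf_{\bm{\delta} \in \mathcal{H}_{k,c}} t(\bm{Z}, \bm{Y}_{\bm{Z},\bm{\delta}}(0)) = t(\bm{Z}, \bm{Y} - \bm{Z}\circ\bm{\xi}_{k,c})$ is already supplied by Theorem \ref{thm:test_Hkc}, so I can cite it directly rather than rederiving it; the monotonicity argument in part (a) only uses the structure of $\bm{\xi}_{k,c}$ and not the optimality claim. Second, I should address the degenerate case $k = n$, where $\mathcal{I}_n = \emptyset$ and $\bm{\xi}_{n,c} = c\bm{1}$, so all treated coordinates are finite and the argument goes through unchanged; likewise $\min(n-k, m)$ may equal $m$ when $n - k \ge m$, in which case all treated units are in $\mathcal{I}_k$, $\bm{\xi}_{k,c}$ has all-infinite treated coordinates, $p_{\bm{Z}, k, c}$ is constant in $c$ (trivially monotone), and the confidence set is either all of $\mathbb{R}$ or empty --- still of the stated form with $\underline{c} = -\infty$ or $+\infty$. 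Third, handling the infinite entries of $\bm{\xi}_{k,c}$ rigorously requires the remark preceding the theorem (replacing $\infty$ by a sufficiently large constant), but since that replacement does not affect the value of $t$ or $p_{\bm{Z}, k, c}$, the monotonicity comparison is unaffected. I do not anticipate a genuine obstacle here: the main content --- the closed form and the validity --- is already in Theorem \ref{thm:test_Hkc}, so the proof is essentially a short monotonicity-plus-test-inversion argument. The one spot requiring attention is making the effect-increasing step airtight given that we are comparing the statistic at two outcome vectors that differ by a nonnegative perturbation concentrated on the treated coordinates, which is precisely the form covered by Definition \ref{def:effect_increase}.
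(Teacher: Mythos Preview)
Your proof is correct, and part (b) matches the paper's argument essentially verbatim. For part (a), however, you take a genuinely different route. The paper proves monotonicity in one line using the \emph{first} expression in \eqref{eq:p_kc}: since $c \le \overline{c}$ implies $\mathcal{H}_{k,c} \subset \mathcal{H}_{k,\overline{c}}$, the supremum over the larger set is at least as large, so $p_{\bm{Z},k,c} = \sup_{\bm{\delta} \in \mathcal{H}_{k,c}} p_{\bm{Z},\bm{\delta}} \le \sup_{\bm{\delta} \in \mathcal{H}_{k,\overline{c}}} p_{\bm{Z},\bm{\delta}} = p_{\bm{Z},k,\overline{c}}$. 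This uses nothing about rank score statistics, the set $\mathcal{I}_k$, or the effect increasing property --- only the nesting of hypothesis regions. Your argument instead works with the \emph{third} (closed-form) expression in \eqref{eq:p_kc}, tracking how $\bm{\xi}_{k,c}$ changes with $c$ and invoking the effect increasing property to control $t$. This is sound (and your handling of the infinite coordinates via the finite-$\Delta$ replacement is adequate), but it is longer and leans on more structure than is actually needed. The paper's nesting argument is the cleaner one and would also extend immediately to any other family of nested composite hypotheses, whereas your argument is tied to the specific form of $\bm{\xi}_{k,c}$.
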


Theorem \ref{thm:con_interval_nc} generalizes Corollary \ref{cor:ci_max_imp_control} to all quantiles of the individual effects. 
The intervals from Theorem \ref{thm:con_interval_nc} also give a sense of the sizes of effects across all units, and help understand effect heterogeneity. 
For a specific $k$, the interval for $\tau_{(k)}$ states where the largest $n-k+1$ individual effects lie with certain confidence.

{\rev Importantly, the inference in Theorem \ref{thm:con_interval_nc} on quantiles of individual effects}
can sometimes be more appropriate than \citet{Neyman23a}'s inference on the average treatment effect. 
Specifically, 
when the outcomes have heavy tails and outliers, 
the average effect may be sensitive to these outliers, 
and the finite population asymptotic approximation \citep{lidingclt2016} may work poorly. 
However, the quantiles are more robust to outliers than the average. 
Moreover, the inference in Theorem \ref{thm:con_interval_nc} is exactly valid in finite samples and does not require any large-sample approximation. 

\begin{rmk}	
	When $k \le n-m$, 
	$\mathcal{I}_k$ in \eqref{eq:xi_k} contains the indices of all treated units, 
	whose treatment effects are all hypothesized to be arbitrarily large. 
	The resulting confidence interval for $\tau_{(k)}$ is usually
	the uninformative $(-\infty, \infty).$  
	Therefore, $m$, the size of the treatment group, can affect the performance of the method in Theorems \ref{thm:test_Hkc} and \ref{thm:con_interval_nc}. 
	Moreover, generally larger $m$ can lead to more quantiles of effects with informative confidence intervals. 
	This asymmetric role of the treatment and control group sizes comes from the fact that the randomization $p$-value $p_{\bm{Z}, \bm{\delta}}$ uses only the imputed control potential outcomes. 
	When the treatment group size is expected to be small, we may want to 
    use the randomization $p$-value involving the imputed treatment potential outcomes, which can be achieved
    by switching the labels for treatment and control and changing the signs of the outcomes. 
\end{rmk}

\subsection{Inference for the number of units with effects larger than a threshold}\label{sec:number_effect_larger_c}

We now consider an equivalent form of the null hypothesis $H_{k, c}$ in \eqref{eq:H_nc_k}, which relates to the proportion of units with effects larger than a certain threshold. 
Because such a quantity can often be of interest in practice, we 
give a detailed discussion on its statistical inference below. 

For any constant $c \in \mathbb{R}$, define 
\begin{align}\label{eq:n_c}
	n(c) = \sum_{i=1}^{n} \I(\tau_i > c)
\end{align}
as the number of units whose treatment effects are larger than $c$. 
We can verify that, for any $1\le k \le n$ and $c\in \mathbb{R}$,  $\tau_{(k)} \le c$ if and only if $n(c) \le n-k$. 
Therefore, the null hypothesis $H_{k,c}$ in \eqref{eq:H_nc_k} has the following equivalent forms: 
\begin{align}\label{eq:H_nc_k_equ} 
H_{k, c}: \tau_{(k)} \le c  \Longleftrightarrow \bm{\tau} \in \mathcal{H}_{k, c} \Longleftrightarrow n(c) \le n-k, \qquad (1\le k \le n, c\in \mathbb{R}).
\end{align}

Theorem \ref{thm:test_Hkc} immediately implies that we are able to test null hypotheses about the number of units with effects larger than any threshold, as shown in the following theorem. 
For descriptive convenience, we define $p_{\bm{z}, 0, c} = 1$ for any $\bm{z}$ and $c$, due to the fact that $H_{0,c}$ is true by definition.

\begin{corollary}\label{cor:infer_number_larger_threshold}
	Take a randomized experiment with exchangeable treatment assignment as in Definition \ref{def:exchange}, and any rank score statistic in Definition \ref{def:rank_score}.
    Then, (a) the $p$-value $p_{\bm{Z}, k, c}$ in \eqref{eq:p_kc} is valid for testing the null hypothesis $H_{k, c}$ as given in \eqref{eq:H_nc_k} and \eqref{eq:H_nc_k_equ} for any $1\le k \le n$ and $c \in \mathbb{R}$; 
	(b) 
	{\rev for any fixed $\bm{z}$ and $c$, $p_{\bm{z}, k, c}$, defined as in \eqref{eq:p_kc}, is decreasing in $k$;}
	(c) 
	a $1-\alpha$ confidence set for $n(c)$ in \eqref{eq:n_c}, i.e, the number of units with effects larger than $c$, is 
	$
	\{n-k: p_{\bm{Z}, k, c} > \alpha, 0\le k \le n \}, 
	$
    and this set has
  the form of 
	$\{j: n -\overline{k}  \le j \le n\}$
	with $\overline{k} = \sup \{k: p_{\bm{Z}, k, c} > \alpha, 0\le k \le n \}$. 
\end{corollary}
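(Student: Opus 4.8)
The plan is to extract all three parts from Theorem~\ref{thm:test_Hkc} together with the equivalence $H_{k,c}\Leftrightarrow\{n(c)\le n-k\}$ recorded in \eqref{eq:H_nc_k_equ}, so that part~(c) becomes a routine test-inversion argument made rigorous by the monotonicity in part~(b). Part~(a) is immediate: for $1\le k\le n$ the events $H_{k,c}$ and $\{n(c)\le n-k\}$ are the same subset of configurations of $\bm{\tau}$ by \eqref{eq:H_nc_k_equ}, and Theorem~\ref{thm:test_Hkc} already states that $p_{\bm{Z},k,c}$ is a valid $p$-value for $H_{k,c}$; for $k=0$ the null holds by definition and $p_{\bm{z},0,c}=1$ is trivially valid.

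For part~(b) I would use the first identity in \eqref{eq:p_kc}, $p_{\bm{z},k,c}=\sup_{\bm{\delta}\in\mathcal{H}_{k,c}}p_{\bm{z},\bm{\delta}}$. Because $\delta_{(k)}\le\delta_{(k')}$ whenever $k\le k'$, we have the nested inclusion $\mathcal{H}_{k',c}\subseteq\mathcal{H}_{k,c}$, so taking the supremum over the smaller set can only decrease the value: $p_{\bm{z},k',c}\le p_{\bm{z},k,c}$, i.e.\ $k\mapsto p_{\bm{z},k,c}$ is decreasing (and $p_{\bm{z},0,c}=1$ dominates $p_{\bm{z},1,c}$ trivially, extending this to $k=0$). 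Equivalently, one can argue from the closed form $p_{\bm{z},k,c}=G_0\{t(\bm{z},\bm{Y}-\bm{z}\circ\bm{\xi}_{k,c})\}$: as $k$ grows $\min(n-k,m)$ shrinks, so $\mathcal{I}_{k'}\subseteq\mathcal{I}_k$ and hence $\bm{\xi}_{k',c}\vecle\bm{\xi}_{k,c}$; the difference $\bm{z}\circ(\bm{\xi}_{k,c}-\bm{\xi}_{k',c})\vecge\bm{0}$ is added only on treated coordinates, so the effect increasing property of the rank score statistic, together with the fact that $G_0$ is decreasing, yields $p_{\bm{z},k',c}\le p_{\bm{z},k,c}$ (after replacing the infinite entries of $\bm{\xi}_{k,c}$ by a sufficiently large constant, which leaves the $p$-value unchanged).

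For part~(c), set $K=\{k\in\{0,\dots,n\}:p_{\bm{Z},k,c}>\alpha\}$. By part~(b) the map $k\mapsto p_{\bm{Z},k,c}$ is decreasing, so $K$ is downward closed in $\{0,\dots,n\}$; it is nonempty since $0\in K$, hence $K=\{0,1,\dots,\overline{k}\}$ with $\overline{k}=\sup K=\sup\{k:p_{\bm{Z},k,c}>\alpha\}$, and $\{n-k:k\in K\}=\{j:n-\overline{k}\le j\le n\}$, the asserted shape. For coverage, note that $n(c)$ is a fixed quantity determined by $\bm{\tau}$; put $k^{\ast}=n-n(c)\in\{0,\dots,n\}$. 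Since exactly $n(c)$ of the $\tau_i$ exceed $c$, exactly $k^{\ast}$ of them are $\le c$, so $\tau_{(k^{\ast})}\le c$ and the null $H_{k^{\ast},c}$ holds. Now $n(c)$ lies in the interval iff $n-\overline{k}\le n(c)$, i.e.\ iff $\overline{k}\ge k^{\ast}$, which by downward-closedness of $K$ is exactly the event $\{p_{\bm{Z},k^{\ast},c}>\alpha\}$. Therefore the probability that $n(c)$ is covered equals $\Pr(p_{\bm{Z},k^{\ast},c}>\alpha)\ge 1-\alpha$ by part~(a) applied to the true null $H_{k^{\ast},c}$ (trivially so when $k^{\ast}=0$, since then $p_{\bm{z},0,c}=1$).

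The argument involves no real analytic difficulty; the only care needed is the logical bookkeeping in part~(c)---choosing the fixed index $k^{\ast}=n-n(c)$, verifying that $H_{k^{\ast},c}$ is genuinely true, checking that the acceptance set $K$ has exactly the claimed interval shape (precisely where part~(b) is used), and dispatching the degenerate cases $k^{\ast}\in\{0,n\}$. The substantive content has already been established in Theorem~\ref{thm:test_Hkc}; the monotonicity in part~(b) is what upgrades the inversion of this nested one-sided family into a genuine confidence interval for $n(c)$.
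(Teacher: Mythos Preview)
Your proof is correct and follows essentially the same approach as the paper: part~(a) is immediate from Theorem~\ref{thm:test_Hkc}, part~(b) follows from the nesting $\mathcal{H}_{k',c}\subseteq\mathcal{H}_{k,c}$ for $k\le k'$ and the supremum definition in \eqref{eq:p_kc}, and part~(c) is test inversion at the true index $k^\ast=n-n(c)$ together with the monotonicity from~(b) to get the interval shape. Your write-up is in fact slightly more detailed than the paper's (you explicitly handle $k=0$, give a second argument for~(b) via the closed form and the effect increasing property, and spell out the downward-closedness of the acceptance set), but the core logic is identical.
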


From Theorem \ref{thm:con_interval_nc} and Corollary \ref{cor:infer_number_larger_threshold}, 
 we can know that, by construction, 
the $1-\alpha$ lower confidence limit of $n(c)$ is 
equivalently 
the number of quantiles of individual effects
$\tau_{(k)}$'s whose $1-\alpha$ 
confidence intervals do not cover $c$.
See Appendix A1.3 of the supplementary materials for further connections with related work.

\subsection{Simultaneous inference for quantiles $\tau_{(k)}$'s and numbers $n(c)$'s }\label{sec:simultaneous_equivalence}

From Theorem \ref{thm:con_interval_nc}, 
we are able to construct $1-\alpha$ confidence intervals for all quantiles of individual treatment effects $\tau_{(k)}$'s. 
Similarly, 
from Corollary \ref{cor:infer_number_larger_threshold}, 
we are able to construct $1-\alpha$ confidence intervals for the numbers $n(c)$'s of units with effects larger than the thresholds $c$'s. 
As demonstrated shortly, 
these confidence intervals
will cover their corresponding truth simultaneously with probability at least $1-\alpha$, in the sense that there is no need for any correction due to multiple analyses.  

The set $\mathcal{H}_{k,c}$ introduced in Section \ref{sec:test_quantile} has the following equivalent forms: 
\begin{align}\label{eq:setHkc}
\mathcal{H}_{k, c} = \left\{\bm{\delta} \in \mathbb{R}^n: \delta_{(k)} \le c \right\} = 
\Big\{\bm{\delta} \in \mathbb{R}^n:  \sum_{i=1}^{n} \I(\delta_i > c) \le n-k \Big\} 
\subset \mathbb{R}^n,
\end{align}
in parallel with the equivalence relationship in  \eqref{eq:H_nc_k_equ}. 
Using \eqref{eq:setHkc}, 
we can represent the confidence intervals for the quantiles $\tau_{(k)}$'s and the numbers $n(c)$'s as confidence sets for the treatment effect vector $\bm{\tau}$. 
Specifically, 
for any $1\le k \le n$, the $1-\alpha$ confidence interval for $\tau_{(k)}$ in Theorem \ref{thm:con_interval_nc} has the following equivalent form as a $1-\alpha$ confidence set for $\bm{\tau}$: 
\begin{align}\label{eq:con_tau_k_equi_form}
	\tau_{(k)} \in \{c: p_{\bm{Z}, k, c} > \alpha, c \in \mathbb{R} \} 
	& \Longleftrightarrow 
	\bm{\tau} \in 
	\bigcap_{c: p_{\bm{Z}, k, c} \le \alpha } \mathcal{H}_{k, c}^\c,
\end{align}
and 
for any $c \in \mathbb{R}$, the $1-\alpha$ confidence interval for $n(c)$ in Corollary \ref{cor:infer_number_larger_threshold} has the following equivalent form: 
\begin{align}\label{eq:con_n_c_equi_form}
	n(c) \in \{n-k: p_{\bm{Z}, k, c} > \alpha, 0\le k \le n \}
	& \Longleftrightarrow 
	\bm{\tau} \in 
	\bigcap_{k: p_{\bm{Z}, k, c} \le \alpha } \mathcal{H}_{k, c}^\c.
\end{align}
Therefore, the combination of all the confidence intervals for the $\tau_{(k)}$'s can be viewed as a confidence set for all the individual treatment effects $\bm{\tau}$, which is the intersection of the sets in \eqref{eq:con_tau_k_equi_form} over 
$1\le k \le n.$
Similarly, the combination of all the confidence intervals for the $n(c)$'s can be viewed as a confidence set for $\bm{\tau}$, which is the intersection of the sets in \eqref{eq:con_n_c_equi_form} over all $c\in \mathbb{R}$. 
As shown in the following theorem, these two confidence sets for $\bm{\tau}$ are the same, 
and more importantly, 
they are indeed confidence sets with at least $1-\alpha$ coverage probability. 

\begin{theorem}\label{thm:simultaneous_valid}
	Under 
    a randomized experiment with exchangeable treatment assignment as in Definition \ref{def:exchange} and 
	using the $p$-value $p_{\bm{Z}, k, c}$ in \eqref{eq:p_kc} with any rank score statistic $t(\cdot, \cdot)$ in Definition \ref{def:rank_score}, 
	for any $\alpha \in (0,1)$, 
	the intersection of $1-\alpha$ confidence intervals for all $\tau_{(k)}$'s, viewed as a confidence set for the individual treatment effect vector $\bm{\tau}$, is the same as that for all $n(c)$'s. 
	In particular, it has the following equivalent forms: 
	\begin{align}\label{eq:joint_set}
		\bigcap_{k = 1}^n \bigcap_{c: p_{\bm{Z}, k, c} \le \alpha } \mathcal{H}_{k, c}^\c 
		= 
		\bigcap_{c \in \mathbb{R}}
		\bigcap_{k: p_{\bm{Z}, k, c} \le \alpha } \mathcal{H}_{k, c}^\c
		= 
		\bigcap_{k, c: \ p_{\bm{Z}, k, c} \le \alpha } 
		\mathcal{H}_{k, c}^\c
	\end{align}
	Moreover, %
	it 
	has at least $1-\alpha$ probability to cover the true individual treatment effects $\bm{\tau}$, i.e., 
	\begin{align*}
	\Pr\Big(
	\bm{\tau} \in 
	\bigcap_{k, c: p_{\bm{Z}, k, c} \le \alpha } \mathcal{H}_{k, c}^\c 
	\Big) \ge 1-\alpha. 
	\end{align*}
\end{theorem}

From Theorem \ref{thm:simultaneous_valid}, in practice, we can simultaneously construct confidence intervals for all quantiles of individual effects, or equivalently numbers of units with effects larger than any threshold. 
These intervals for individual effects are simply projections (summaries) of the complex confidence set in \eqref{eq:joint_set} of possible individual treatment effect vectors.
Moreover, these confidence intervals can be conveniently visualized, as illustrated in Section \ref{sec:moti_example} using Figure \ref{fig:conf_profession}(b).
    Note that the confidence interval for the maximum individual effect is the same as that under usual randomization inference with a constant treatment effects assumption. 
    By the simultaneous validity in Theorem \ref{thm:simultaneous_valid}, 
    we can get confidence intervals on all quantiles of individual effects as free lunches, 
    because these additional intervals will not reduce our confidence levels.

\section{Extension: two-sided alternatives and effect range}
\label{sec:twosided}

In the previous discussion, we mainly focused on one-sided testing for the treatment effect $\bm{\tau}$, where the alternative hypotheses favor larger treatment effects. 
In fact, these results immediately imply that we are also able to test alternative hypotheses favoring smaller treatment effects. 
We can achieve this simply by multiplying the outcomes by $-1$ or by switching the labels for treatment and control.
By Bonferroni correction, we can also construct confidence intervals for all quantiles of individual effects using both sides of alternatives.

It is also 
possible to combine the confidence intervals for the maximum and minimum individual effects into a single confidence statement about the range of treatment effects. 
Suppose $\hat{\tau}_{\max}^L$
is a $1-\alpha/2$ lower confidence limit for the maximum individual effect $\tau_{\max}$, 
and 
$\hat{\tau}_{\min}^U$
is a $1-\alpha/2$ upper confidence limit for the minimum individual effect $\tau_{\min}$. 
Using Bonferroni correction, 
we are $1-\alpha$ confident that the effect range $\tau_{\max} - \tau_{\min}$ is at least $\hat{\tau}_{\max}^L - \hat{\tau}_{\min}^U$, 
based on which we are able to test whether the treatment effect is constant, an issue discussed in detail in \citet{Ding16a}. 
For completeness, 
we summarize the results in the following theorem. 

\begin{theorem}\label{thm:effect_range}
	Suppose that $[\hat{\tau}_{\max}^L, \infty)$ is a $1-\alpha/2$ confidence interval for $\tau_{\max}$, 
	and $(-\infty, \hat{\tau}_{\min}^U]$ is a $1-\alpha/2$ confidence interval for $\tau_{\min}$. 
	Then 
	(a) $[\max\{\hat{\tau}_{\max}^L - \hat{\tau}_{\min}^U, 0\}, \infty)$ is a $1-\alpha$ confidence interval for the effect range $\tau_{\max} - \tau_{\min}$; 
	(b) for the null hypothesis of constant treatment effect, i.e., $H_{c\bm{1}}$ holds for some $c\in \mathbb{R}$, rejecting the null if and only if $\hat{\tau}_{\max}^L - \hat{\tau}_{\min}^U>0$ leads to a valid test at significance level $\alpha$. 
\end{theorem}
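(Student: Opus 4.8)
The plan is to prove both parts by an elementary Bonferroni (union bound) argument, using only the two given one-sided coverage guarantees together with the deterministic fact that $\tau_{\max} \ge \tau_{\min}$. For part (a), I would begin from the assumptions $\Pr(\tau_{\max} \ge \hat{\tau}_{\max}^L) \ge 1-\alpha/2$ and $\Pr(\tau_{\min} \le \hat{\tau}_{\min}^U) \ge 1-\alpha/2$. The union bound implies that, with probability at least $1-\alpha$, both events occur simultaneously; on that event, subtracting the two inequalities gives $\tau_{\max} - \tau_{\min} \ge \hat{\tau}_{\max}^L - \hat{\tau}_{\min}^U$. Since the ordered effects always satisfy $\tau_{\max} \ge \tau_{\min}$, we also have $\tau_{\max} - \tau_{\min} \ge 0$ with certainty, and combining the two bounds yields $\tau_{\max} - \tau_{\min} \ge \max\{\hat{\tau}_{\max}^L - \hat{\tau}_{\min}^U, 0\}$ with probability at least $1-\alpha$. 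This is precisely the statement that $[\max\{\hat{\tau}_{\max}^L - \hat{\tau}_{\min}^U, 0\}, \infty)$ covers the effect range at level $1-\alpha$; notably, no independence between the two confidence intervals is needed, even though they are built from the same data and assignment.

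For part (b), I would note that under $H_{c\bm{1}}$ every individual effect equals $c$, so $\tau_{\max} = \tau_{\min} = c$ and the true effect range is $0$. The proposed test rejects exactly when $\hat{\tau}_{\max}^L - \hat{\tau}_{\min}^U > 0$, i.e. when $\hat{\tau}_{\max}^L > \hat{\tau}_{\min}^U$; but this inequality is incompatible with having simultaneously $\hat{\tau}_{\max}^L \le c$ and $\hat{\tau}_{\min}^U \ge c$, so rejection forces the event $\{\hat{\tau}_{\max}^L > c\} \cup \{\hat{\tau}_{\min}^U < c\}$. Each of these two events has probability at most $\alpha/2$ by the assumed one-sided coverage applied with $\tau_{\max} = c$ and $\tau_{\min} = c$, so the union bound caps the Type I error at $\alpha$. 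Equivalently, this is just the statement that $0$, the true range under $H_{c\bm{1}}$, lies outside the interval of part (a) with probability at most $\alpha$, which is immediate from (a).

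Neither step involves any real difficulty, so there is no substantive obstacle; the only points requiring a line of care are the small case analysis in part (b) --- that $\hat{\tau}_{\max}^L > \hat{\tau}_{\min}^U$ cannot hold when both limits sit on the correct side of $c$ --- and the truncation at $0$ in part (a), which only ever enlarges coverage because $\tau_{\max} \ge \tau_{\min}$ holds deterministically. One should also state explicitly that the coverage probabilities are taken over the randomness of the treatment assignment $\bm{Z}$ with the potential outcomes held fixed, consistent with the randomization-inference framework used throughout.
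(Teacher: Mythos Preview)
Your proposal is correct and follows essentially the same approach as the paper: a Bonferroni/union-bound argument on the two one-sided coverage events for part~(a), combined with the deterministic fact $\tau_{\max}\ge\tau_{\min}$ to justify the truncation at~$0$, and then deducing part~(b) from~(a) by noting that under $H_{c\bm{1}}$ the true range is~$0$. The paper's proof of~(b) uses your second formulation (rejection means the untruncated interval fails to cover~$0$), which you already identify as equivalent to your first.
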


\section{Simulation studies}\label{sec:simu}

\subsection{A simulation study for inferring the maximum individual effect}\label{sec:simu_max}

We first conduct a simulation study to investigate the power of randomization tests with different test statistics for detecting positive maximum individual treatment effect $\tau_{\max}$, including the settings where the average treatment effect is close to zero or even negative. 
In particular, we investigate difference-in-means, Wilcoxon rank sum and Stephenson rank sum as test statistics in a completely randomized experiment.
From Theorem \ref{thm:null_broader_monotone_control}, the randomization $p$-values $p_{\bm{Z}, \bm{\delta}}$ with these test statistics are all valid for testing the bounded null $H_{\vecle \bm{\delta}}$.

We generate the potential outcomes as i.i.d.\  $(Y_i(0), \tau_i)$ pairs from the following model and randomize half of the units to treatment group and the remaining to control group:
\begin{align}\label{eq:generate}
	\begin{pmatrix}
    Y_i(0)\\
    \tau_i
    \end{pmatrix}
    \sim 
    \mathcal{N}
    \left(
    \begin{pmatrix}
    0\\
    \tau_0
    \end{pmatrix}, \ 
    \begin{pmatrix}
    1 & \rho \omega \\
    \rho\omega & \omega^2
    \end{pmatrix}
    \right), 
    \quad 
    Y_i(1) = Y_i(0) + \tau_i, 
\end{align}
where $\tau_0$ characterizes the magnitude of the average treatment effect, 
$\rho$ reflects the correlation between the individual treatment effect and the control potential outcome, 
and $\omega$ characterizes the variability of the individual treatment effect. 
If $\rho$ takes a positive value, then units with larger control potential outcomes tend to have larger individual treatment effects.

We test the bounded null $H_{\vecle \bm{0}}$ that all individual treatment effects are non-positive (or equivalently 
$\tau_{\max}\le 0$).
Figures \ref{fig:max_effect} shows the power of the 
test 
using different test statistics with sample size $n=120$ and significance level $ 0.1$, under different parameter values of $(\tau_0, \omega, \rho)$. 

We first see that the performance of the difference-in-means and Wilcoxon rank sum (equivalent to Stephenson rank sum with $s=2$ under the CRE) statistics are very similar. 
When the average treatment effect is 
non-positive, 
both of them have almost no power to detect positive maximum  effect. 
However, when $s$ increases the Stephenson rank sum statistic is
able to detect the presence of positive maximum effects, even when the average treatment effect is non-positive (see bottom-left).

\begin{figure}[t!]
	\centering
		\includegraphics[width=0.75\linewidth]{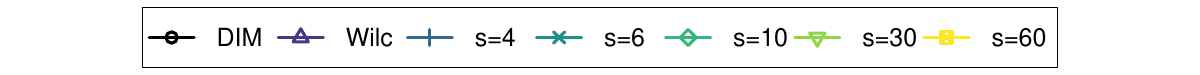}
	\begin{subfigure}{.33\textwidth}
		\centering
		\includegraphics[width=1\linewidth]{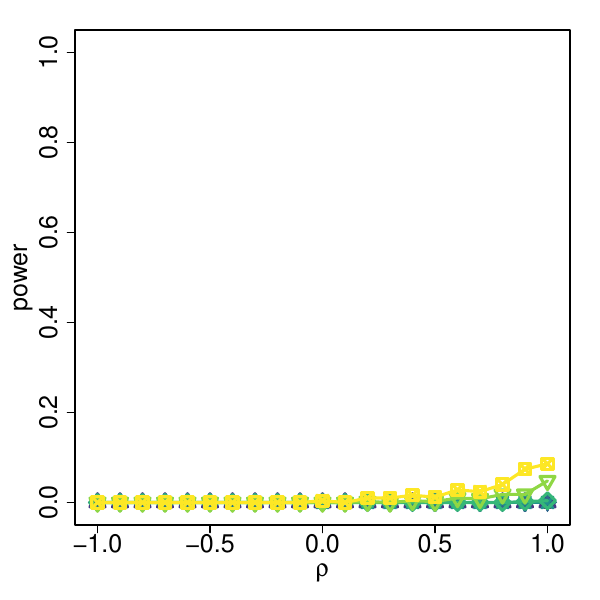}
		\caption{$\tau_0 = -1, \omega = 0.5$}
	\end{subfigure}%
	\begin{subfigure}{.33\textwidth}
		\centering
		\includegraphics[width=1\linewidth]{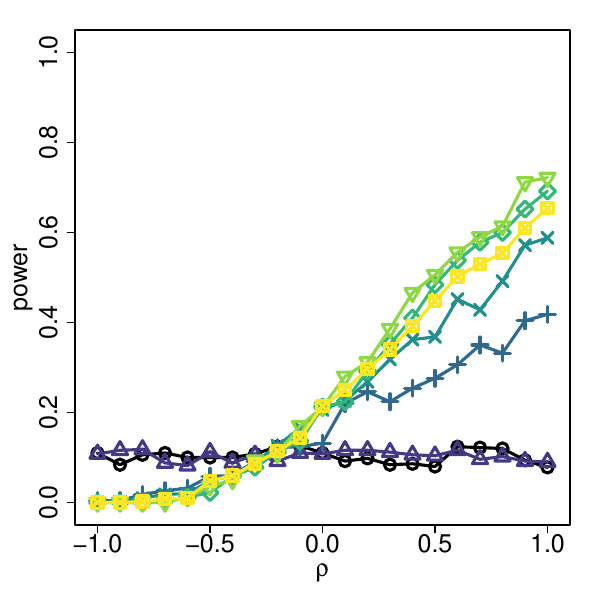}
		\caption{$\tau_0 = 0, \omega = 0.5$}
	\end{subfigure}%
	\begin{subfigure}{.33\textwidth}
		\centering
		\includegraphics[width=1\linewidth]{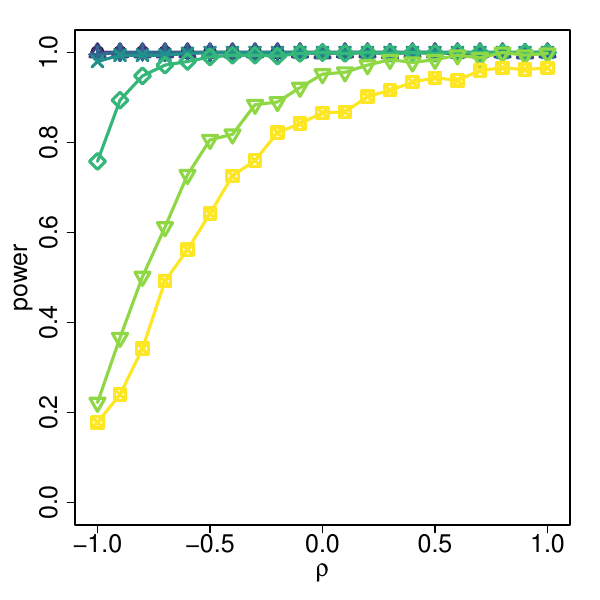}
		\caption{$\tau_0 = 1, \omega = 0.5$}
	\end{subfigure}
	\begin{subfigure}{.33\textwidth}
		\centering
		\includegraphics[width=1\linewidth]{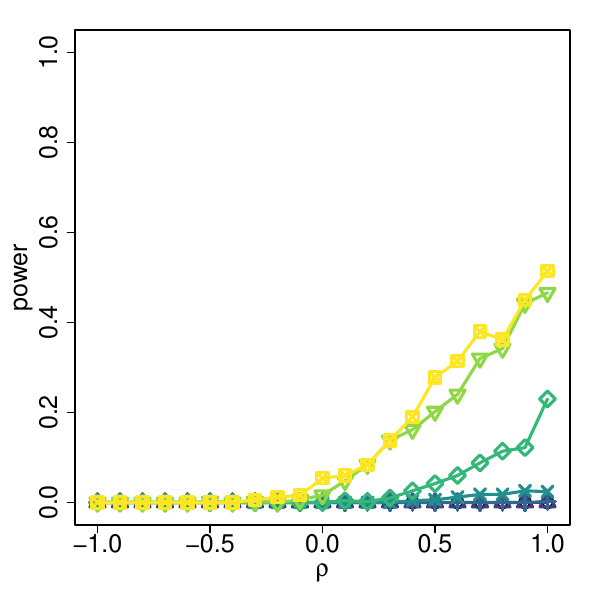}
		\caption{$\tau_0 = -1, \omega = 1$} 
	\end{subfigure}
	\begin{subfigure}{.33\textwidth}
		\centering
		\includegraphics[width=1\linewidth]{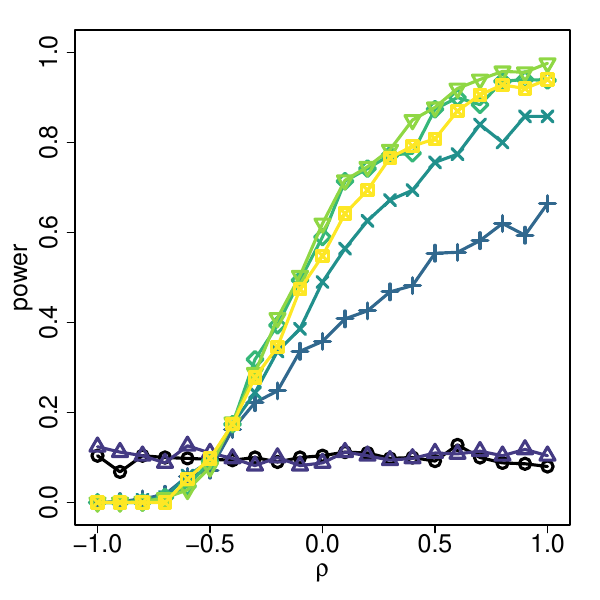}
		\caption{$\tau_0 = 0, \omega = 1$} 
	\end{subfigure}%
	\begin{subfigure}{.33\textwidth}
	\centering
	\includegraphics[width=1\linewidth]{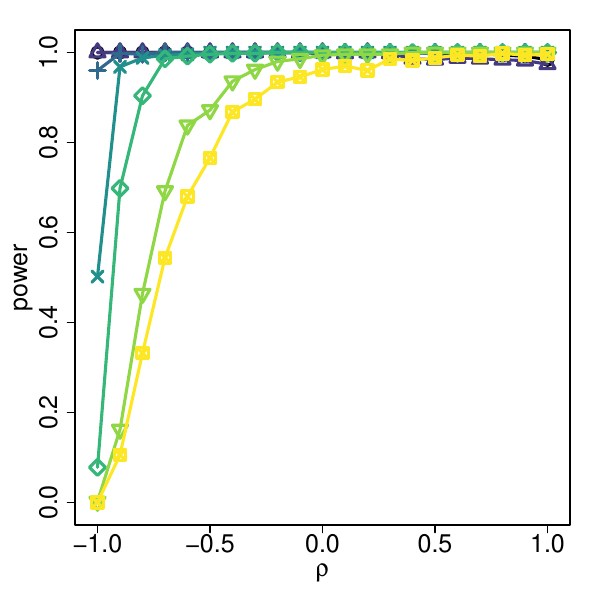}
	\caption{$\tau_0 = 1, \omega = 1$}
	\end{subfigure}
	\caption{Power of randomization tests using $p_{\bm{Z}, \bm{0}}$ with different test statistics for the null hypothesis $H_{\vecle \bm{0}}$ or equivalently $\tau_{\max} \le 0$
    at significance level equals 0.1.
    The potential outcomes are generated from \eqref{eq:generate} with sample size $n=120$ and different values of $(\tau_0, \omega, \rho)$. 
    }\label{fig:max_effect}
\end{figure}

The choice of $s$ for the Stephenson rank sum statistic
is a researcher choice.
As the value of $s$ increases, 
the Stephenson rank places greater weights on larger outcomes.
Therefore, intuitively, the ``optimal'' choice of $s$ will depend on the right tails of the distributions of treatment and control potential outcomes. 
First, from Figures \ref{fig:max_effect}(a), (b), (d) and (e), when the average treatment effect is non-positive in expectation and the individual effects are not very negatively correlated with the control potential outcomes (i.e., $\rho$ is not very small), 
the power of the Stephenson rank sum test generally increases with $s$. 
This is intuitive, since in this case the treated group will tend to have larger outcomes than the control group. 
Second, from Figures \ref{fig:max_effect}(c) and (f), when the average treatment effect is positive, the power of the Stephenson rank test can decrease with $s$, especially for small or negative $\rho$.
This is also intuitive, since in this case the control group is more likely to have larger outcomes, which will reduce the observed Stephenson rank sum statistic, and thus reduce power.
Overall, we suggest 
a moderately large $s$ for randomization tests to infer maximum treatment effects.
As a side note, \citet{Conover88a} examined the asymptotic relative efficiency of a closely related class of test statistics, 
and found that when only a small fraction of treated respond, the optimal subset size $s$ is between 5 and 6.

In sum, although the Wilcoxon rank sum statistic is commonly used in practice and has greatest relative power when treatment effects are close to constant, 
the Stephenson rank sum statistic can be preferred %
due to its sensitivity to extreme treatment effects, which can lead to tighter confidence intervals for the maximum effect when the maximum differs greatly from the mean or median.
It is even possible for a Stephenson rank sum test to reject the bounded null $H_{c\bm{1}}$ for positive values of $c$ when the average treatment effect estimate is negative, if some individual treatment effects are sufficiently positive. 
Thus, when treatment effects are heterogeneous, the behavior of the Stephenson rank sum test can differ markedly from the Wilcoxon rank sum or difference-in-means, while, like them, still providing a valid test for the bounded null hypothesis.
This also means that it can have greater power when using permutation testing in the classic sense of testing whether there is any violation of the sharp null of no treatment effects whatsoever.

\subsection{A simulation study for inferring quantiles of individual effects}\label{sec:simu_visual}

\begin{figure}[htb]
    \centering
		\includegraphics[width=0.8\linewidth]{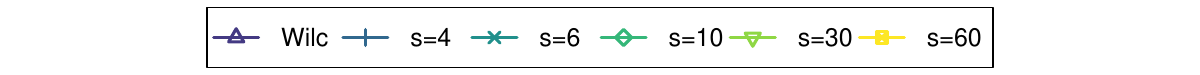}
	\begin{subfigure}{.33\textwidth}
		\centering
		\includegraphics[width=1\linewidth]{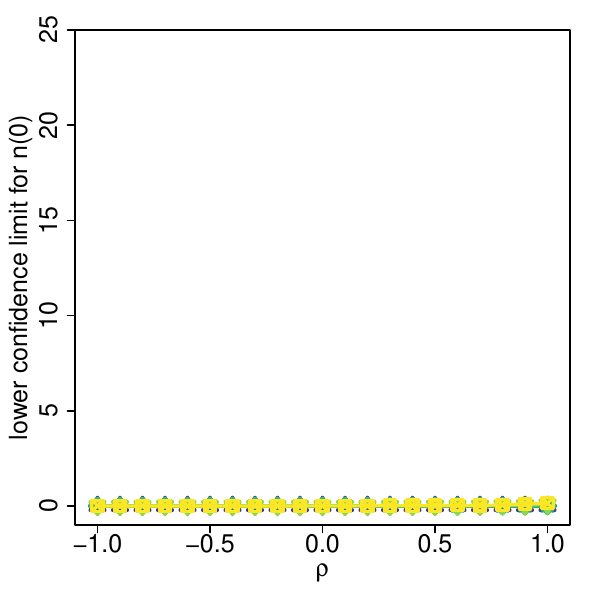}
		\caption{$\tau_0 = -1, \omega = 0.5$}
	\end{subfigure}%
	\begin{subfigure}{.33\textwidth}
		\centering
		\includegraphics[width=1\linewidth]{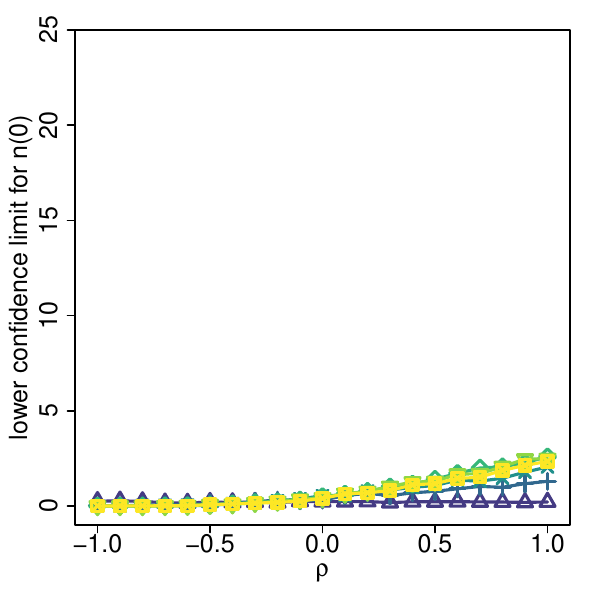}
		\caption{$\tau_0 = 0, \omega = 0.5$}
	\end{subfigure}%
	\begin{subfigure}{.33\textwidth}
		\centering
		\includegraphics[width=1\linewidth]{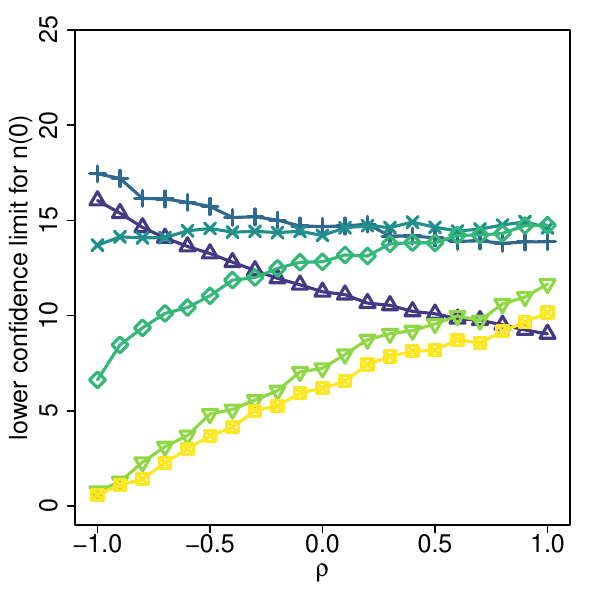}
		\caption{$\tau_0 = 1, \omega = 0.5$}
	\end{subfigure}
	\begin{subfigure}{.33\textwidth}
		\centering
		\includegraphics[width=1\linewidth]{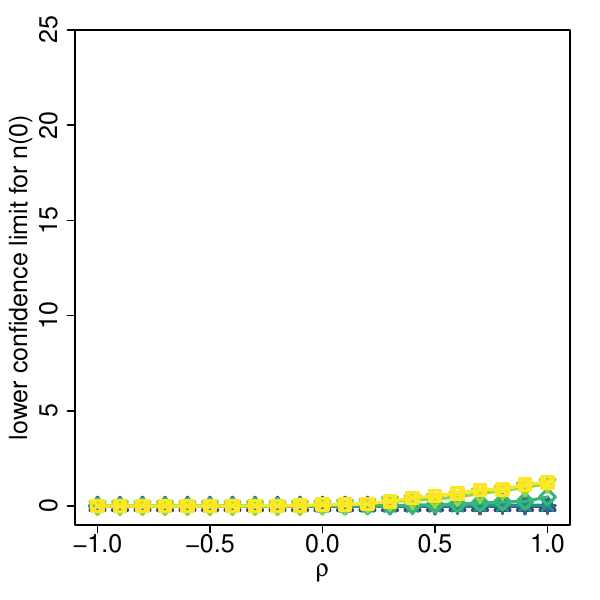}
		\caption{$\tau_0 = -1, \omega = 1$} 
	\end{subfigure}
	\begin{subfigure}{.33\textwidth}
		\centering
		\includegraphics[width=1\linewidth]{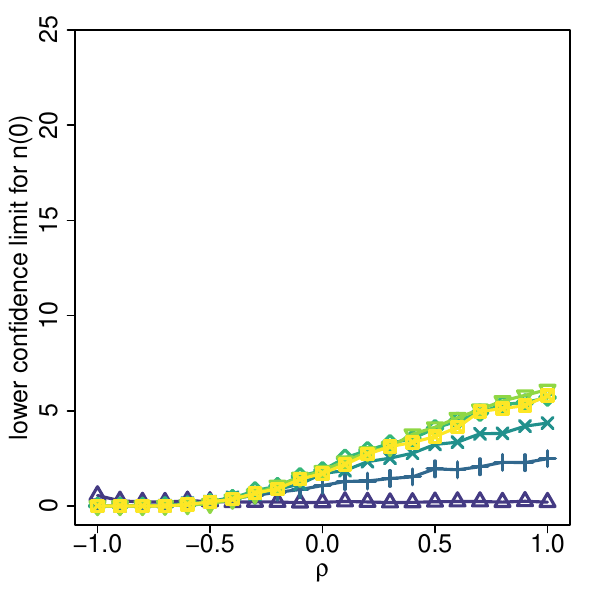}
		\caption{$\tau_0 = 0, \omega = 1$} 
	\end{subfigure}%
	\begin{subfigure}{.33\textwidth}
	\centering
	\includegraphics[width=1\linewidth]{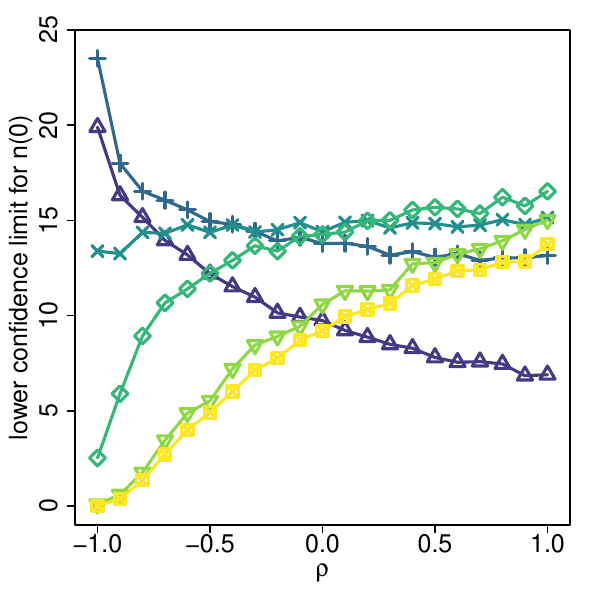}
	\caption{$\tau_0 = 1, \omega = 1$}
	\end{subfigure}
	\caption{Average lower limits of $90\%$ confidence intervals for the number of units with positive effects $n(0)$. 
		The potential outcomes are generated from \eqref{eq:generate} with sample size $n=120$ and different values of $(\tau_0, \omega, \rho)$. 
	}\label{fig:simu_quant}
\end{figure}

We next conduct a simulation study to investigate the power of different rank score statistics for detecting positive quantiles of individual treatment effects. 
We generate the data in the same way as Section \ref{sec:simu_max}, 
and 
focus on the inference of the number of units with effects larger than zero, i.e, $n(0)$ as defined in 
\eqref{eq:n_c}. 
The lower confidence limit of $n(0)$ is 
equivalently 
the number of %
$\tau_{(k)}$'s whose 
confidence intervals do not cover zero.

Figure \ref{fig:simu_quant}
shows the average lower bounds of the $90\%$ confidence intervals for 
$n(0)$ 
using 
the Stephenson rank sum statistics with parameter $s$ ranging from 2 to 60, where $s=2$ corresponds to the Wilcoxon rank sum statistic. 
From Figures \ref{fig:simu_quant}(a), (b), (d) and (e), 
when the average treatment effect is 
less than or equal to zero, 
the Wilcoxon rank sum statistic has almost no power to provide informative confidence intervals for the number of units with positive effects. 
However, the Stephenson rank sum statistics are able to detect significant numbers of positive treatment effects, 
where larger $s$ tends to give larger lower confidence limits. 
From Figures \ref{fig:simu_quant}(c) and (f), 
when the average treatment effect is positive, 
the power of the Stephenson rank sum statistic 
becomes non-monotone in $s$. 
In particular, very large value of $s$ can lead to deteriorated confidence limits for $n(0)$, especially when 
the individual treatment effect is negatively correlated with the control potential outcome. 
The intuition is similar to that discussed in Section \ref{sec:simu_max}: 
the control group is likely to have larger outcomes and the Stephenson rank with large $s$ places greater weights on these larger outcomes, making the test less powerful to detect positive individual effects.

Intuitively, $s$ values that are too high may lose power due to overly relying on few units, and $s$ values that are too low  may not be able to
take advantage of the tail behavior of larger effects. 
Moreover, 
additional simulation results in Appendix A6.1 of the supplementary materials show that different Stephenson rank statistics can be preferred for different estimands of interest, even under the same data generating process.

In general, selecting $s$ is a decision the researcher has to make when planning their data analysis.
To aid with this, we provide functions in our developed package that, under a user-specified data generating model, compare the performance of various Stephenson rank sum statistics in terms of either power for testing the null hypotheses for given quantiles of individual effects or the median magnitude of confidence limits for the number of units with effects passing any given threshold. 
Users can then specify a range of possible distributions of control-side potential outcomes, treatment impact models, and potential correlations of the two, along with their primary estimand of interest (such as $\tau_{(k)}$ for some $k$ or $n(c)$ for some $c$), and compare the performances of different $s$ values for these hypothetical scenarios at a given sample size $n$.
They would then select $s$ based on which value generally had superior performance for the targeted estimands of interest across the explored scenarios.
The distributions of outcomes and effects, and their correlation, would be obtained from (historical) empirical data and prior knowledge.

In our applications, we mainly consider the Stephenson rank sum statistic with $s=6$ or $10$,
based on this process and the other simulations in this paper.
Generally $s=6$ appears to be a versatile initial choice in the absence of empirical information.

\section{Evaluating the effectiveness of professional development}\label{sec:applications}
\label{sec:profession}

\citet{HSMHSD10} studied the effectiveness of professional development on elementary teachers, classrooms and students using a randomized experiment conducted at eight national research sites. 
A sample of fourth grade teachers were randomly assigned to treatment and control, where treated teachers would participate in a professional development course encompassing eight three-hour sessions focusing on the teaching of electric circuits. We are interested in the effect of professional development on the teachers' electric circuits content knowledge, as measured by the gain scores based on tests before and after the professional development courses. 
The actual experiment was randomized within site and school, and the active treatment had three versions with regard to additional activities for the development of pedagogical content knowledge. 
For simplicity and to illustrate our approach, we analyze it as a completely randomized treatment-control experiment and exclude teachers with missing outcomes\footnote{Since the outcomes of all control teachers are missing in one of the eight sites, we exclude that site in our analysis.}, resulting in 164 treated teachers and 69 control teachers. 
Figure \ref{fig:conf_profession}(a) shows the histograms 
of the observed gain scores in the treatment and control groups.
The treated teachers tend to  have larger gain scores, and the corresponding histogram is close to a positive shift of that for the control teachers, with a magnitude of 15 to 20. Therefore, intuitively, we expect our approach to infer a significant proportion of teachers with positive effects as there is little sign of treatment effect heterogeneity and a large share of treated teachers' outcomes are larger than nearly all control teacher outcomes.

Before proceeding with our analysis, we need to select the tuning parameter $s$ for our Stephenson statistic.
We do this via a power simulation, generating a series of datasets with an empirical control-side distribution bootstrapped from the control units in our data, the average treatment effect as estimated from our data, different hypothesized distributions of treatment effects (constant, normal, exponential in shape) with different levels of impact variation, and different correlations of treatment impact with baseline outcome.
We calibrate an assumed degree of treatment variation by assuming zero correlation and comparing the variances of the treated and control groups, but explore other values as well.
We additionally explore both positive and negative correlations as a sensitivity check.
For each dataset we calculate simultaneous confidence intervals for the largest 100 individual effects, using the method described in Section \ref{sec:inf_quant_all}.
For each of these top 100, we then calculate the median lower bound of their confidence intervals along with power (the probability of the CI excluding zero) across simulation runs.

Overall, we found the lower confidence bounds were fairly consistent for $s$ between around 3 and 8 across scenarios, indicating some degree of robustness against choice of $s$, with better performance for lower $s$ for uncorrelated and negatively correlated effects.
We also examined the performance on the number of significant units by examining the lower confidence bound on $n(0)$ (see Section~\ref{sec:simultaneous_equivalence}) across runs; here, $s = 8$ generally gave the largest number of significant units.
Considering that a small amount of variation is consistent with the marginal distributions of the treatment and control groups, we select $s = 6$ for our analysis.
Appendix A7.1 of the supplementary materials and the 
replication file both give further details of the above procedure and aggregation, along with plots showing how power changes as a function of $s$ under different scenarios.

With our selected $s=6$ we analyze the original data, calculating confidence intervals for all quantiles.
Figure \ref{fig:conf_profession}(b), from our motivating example at the beginning of our paper, shows the $90\%$ lower confidence limits for all the $\tau_{(k)}$'s using the Stephenson rank sum statistic with $s=6$ that have finite lower limits (the lower confidence limits for $\tau_{(k)}$'s with $k \le 116$ are all negative infinity).
From Figure \ref{fig:conf_profession}(b), the lower confidence limits of $\tau_{(k)}$'s with $146 \le k \le 233$ are all larger than zero, implying that a $90\%$ confidence interval for $n(0)$ is $[88, 233]$. 
Equivalently, we are $90\%$ confident that at least $88/233 = 37.8\%$ units would 
benefit from the professional development courses. 
Similarly, a $90\%$ confidence interval for $n(6)$ is $[69, 233]$. That is, we are $90\%$ confident that at least $69/233 = 29.6\%$ teachers would have gained six more points in the test if they had participated in the professional development.

We repeated the above using the usual Wilcoxon rank sum statistic.
This gave noninformative limits for all $\tau_{(k)}$'s with $k \le 159$, rather than $116$.
The lower confidence limits for $n(0)$ and $n(6)$ are respectively $59$ and $48$, corresponding to $25.3\%$ and $20.6\%$ of the teachers. 
Obviously, the inference results using Stephenson ranks are far more informative.

{\rev Finally, we apply Theorem \ref{thm:effect_range} to study the effect range. 
The confidence intervals for the minimum and maximum effects overlap substantially, yielding a completely uninformative confidence interval for the effect range and precluding rejection of the hypothesis of a constant effect.
This is consistent with the graphical evidence in Figure \ref{fig:conf_profession}(a), where the distribution of the observed outcome for treated units is close to a shift of that for control units, suggesting a constant treatment effect is plausible.}

\section{Conclusion and Discussion}\label{sec:discuss}

The rise of nonparametric causal inference in the Neyman-Rubin tradition, with its emphasis on average effects and on effect heterogeneity as the rule rather than the exception, has rightly prompted greater skepticism of statistical methods that rely on parametric assumptions. It is perhaps no surprise that this skepticism has also extended to RI, which despite its freedom from other distributional assumptions has traditionally been motivated in terms of shift hypotheses or other highly structured models of treatment effects \cite[e.g.,][]{Lehmann63a, Rosenbaum02a}.

We have argued that the view of RI prevalent among statisticians and applied researchers
---that RI is useful only for assessing the typically uninteresting sharp hypothesis that treatment had no effect at all---is too limited. We have proved that %
randomization tests can be valid under a more general bounded null, that this fact can be exploited to derive confidence intervals for the maximum or minimum effect, and that many familiar test statistics can lead to tests with this property. 
We then extended the RI for the maximum (or minimum) effect to general quantiles of the individual effects, which in turn provides confidence intervals for the number (or equivalently proportion) of units with effects larger than (or smaller than) any threshold. 
Moreover, the confidence intervals for all quantiles of individual effects are simultaneously valid.

Mainly, we hope our novel perspective on RI tempers the skepticism that many applied scientists hold towards this otherwise-appealing mode of statistical inference.
RI is by no means a cure-all; nor is it a substitute for average treatment effect estimation when that is the goal of the analysis. But in many cases, particularly when sample size is small or outcomes are heavy-tailed, 
it is the most reliable form of statistical inference. Even when this is not the case, it can yield unique insight into the pattern of treatment effects across the whole sample. For these reasons, RI deserves a secure place in the applied statistician's toolbox.

\section*{Supplementary Materials}

The supplementary materials contain two files. 
The first file includes 
(i) further discussion of primary theoretical results, 
(ii) a discussion on the validity of randomization tests using test statistics of form $t(\bm{Z}, \bm{Y})$ for bounded null hypotheses,  
(iii) the proofs of all theorems, corollaries and propositions, 
(iv) further simulation studies, 
and (v) further applications.
The second file includes (i) installation of the developed R package, 
(ii) replication for the three data analyses in the paper and the supplementary materials, 
and (iii) illustration for the R functions to compare the power of different Stephenson rank sum statistics. 
The R package \textsf{RIQITE} implementing the proposed methods is available at \url{https://github.com/li-xinran/RIQITE}.

\section*{Acknowledgments}

We thank the Editor, the Associate Editor and 
all the
reviewers for insightful and constructive comments. 
For helpful input we also thank Peter Aronow, Jake Bowers, Joanna Dafoe, 
Peng Ding, 
Danny Hidalgo, Greg Huber, Kosuke Imai, Luke Keele, Kelly Rader, 
Paul Rosenbaum,
Brandon Stewart, and seminar participants at PolMeth 2016.
We also thank the late Natasha Chichilnisky-Heal for posing the question that stimulated some of our thinking on this paper.

\section*{Data availability}
All data used are either publicly available or available within the supplementary materials. 
The dataset in Section \ref{sec:applications} from \citet{HSMHSD10} is available in the R package \textsf{RIQITE} at \url{https://github.com/li-xinran/RIQITE}. 
The dataset in Appendix A7.2 is generated from \citet{BlackEtAl11a} with details in the supplementary materials.
The dataset in Appendix 7.3 is available at Harvard Dataverse with link \url{https://doi.org/10.7910/DVN/38X3LX}.

\bibliographystyle{plainnat}
\bibliography{permutation}

\newpage

\setcounter{equation}{0}
\setcounter{section}{0}
\setcounter{figure}{0}
\setcounter{example}{0}
\setcounter{proposition}{0}
\setcounter{corollary}{0}
\setcounter{theorem}{0}
\setcounter{lemma}{0}
\setcounter{table}{0}
\setcounter{condition}{0}
\setcounter{page}{1}
\begin{center}
	\bf \LARGE 
	Supplementary Material 
\end{center}

\renewcommand {\theproposition} {A\arabic{proposition}}
\renewcommand {\theexample} {A\arabic{example}}
\renewcommand {\thefigure} {A\arabic{figure}}
\renewcommand {\thetable} {A\arabic{table}}
\renewcommand {\theequation} {A\arabic{section}.\arabic{equation}}
\renewcommand {\thelemma} {A\arabic{lemma}}
\renewcommand {\thesection} {A\arabic{section}}
\renewcommand {\thetheorem} {A\arabic{theorem}}
\renewcommand {\thecorollary} {A\arabic{corollary}}
\renewcommand {\thecondition} {A\arabic{condition}}

\renewcommand {\thepage} {A\arabic{page}}

\section{Further discussion of primary theoretical results}

\subsection{\lxr Extended discussion of the tie-breaking methods}

{\lxr We} first
discuss 
the subtle issue of defining ranks when there exist ties. 
There are multiple ways to define ranks for ties, see, e.g., the R documentation for the function \textsf{rank} \citep{Rteam}. 
Consider an arbitrary outcome vector $\bm{y} \in \mathbb{R}^n$. 
The method ``random'' puts equal values of $y_i$'s in a random order, 
the methods ``first'' and ``last'' rank equal values of $y_i$'s based on their indices in an increasing {\rev or} decreasing way\footnote{For example, if we rank the coordinates of $\bm{y}$ using the ``first'' method, 
then 
$\rank_i (\bm{y}) < \rank_j(\bm{y})$ if and only if (a) $y_i < y_j$ or (b) $y_i = y_j$ and $i < j$.}, 
and the method ``average'' replaces their ranks by the corresponding average. 
The first three methods will always produce ranks from $1$ to $n$, which is important for the distribution free property in Definition \ref{def:dist_free}, while the ranks from %
the last 
generally depend on the value of $\bm{y}$.

{\lxr
We then give a more detailed discussion of Proposition \ref{prop:stat_property_rank}(b). 
The statistic $t_2(\cdot, \cdot)$ in \eqref{eq:stat_class_rank} is distribution free under the ``random'' method for ties.}
In practice, we can implement the ``random'' method by first randomly permuting the coordinates of $\bm{y}$ and then using the ``first'' {\lxr or ``last'' method}.
We emphasize that using the ``random'' method for ties is crucial.  
First, a defined method for ties is important even for continuous outcomes, because some outcomes will become the same when we later invert tests for, e.g., a sequence of constant treatment effects. 
Second, it is important to rank equal outcomes randomly, since in practice 
a received data set
may be ordered, e.g., 
with all treated units ahead of control ones. 

{\lxr We finally give some discussion regarding Proposition \ref{prop:stat_property_rank}(a).}
For the usual ``average'' method for ties, whether the statistic in \eqref{eq:stat_class_rank} is effect increasing depends on the values of $\phi(\cdot)$ for tied ranks. If we define the value of $\phi(\cdot)$ for tied ranks as the average value of $\phi(\cdot)$ evaluated at those ranks under the ``first'' or ``last'' method, then the statistic \eqref{eq:stat_class_rank} is effect increasing. This follows directly from Proposition \ref{prop:stat_property_rank} by noting that the resulting value of the statistic is essentially the average of it under the ``first'' method of ties over all possible permutations of the ordering.
From the above, we can also know that the classical Wilcoxon rank sum statistic with ``average'' method for ties is effect increasing.

\subsection{Discussion of Theorem {\lxr \ref{thm:null_broader_monotone_control}} in Section \ref{sec:valid_bound}}

Theorem \ref{thm:null_broader_monotone_control}(b) presents a stronger conclusion than (a), but with stronger condition on the test statistic. 
Specifically, 
for any $\bm{\delta} \in \mathbb{R}^n$, 
if the bounded null $H_{\vecle\bm{\delta}}$ holds (i.e., $\bm{\tau} \vecle \bm{\delta}$), 
then Theorem \ref{thm:null_broader_monotone_control}(b) implies that 
the randomization $p$-value
$
p_{\bm{Z}, \bm{\delta}}
\ge 
p_{\bm{Z}, \bm{\tau}}
$
and is thus 
stochastically larger than or equal to $\Unif[0,1]$. 
More importantly, Theorem \ref{thm:null_broader_monotone_control}(b) is useful for constructing confidence sets for the true treatment effect $\bm{\tau}${\lxr, as discussed in the main paper}.
In principle, 
one could invert randomization tests for all possible sharp null hypotheses $H_{\bm{\delta}}$'s to get confidence sets for the true treatment effect $\bm{\tau}$. 
However, enumerating all possible sharp null hypotheses is generally computationally intractable, except in the cases of binary or discrete outcomes \citep[see][]{Rigdon:2015aa}. 
For discrete outcomes where enumeration is possible, 
Theorem \ref{thm:null_broader_monotone_control}(b) can 
help reduce the number of enumerations \citep{Li:2016tw}.

\subsection{Discussion of Theorems {\lxr \ref{thm:con_interval_nc} and \ref{thm:simultaneous_valid} and Corollary \ref{cor:infer_number_larger_threshold}} in 
Section{\lxr s
\ref{sec:ci_qi}--\ref{sec:simultaneous_equivalence}
}
}

    Theorem \ref{thm:con_interval_nc} and Corollary \ref{cor:infer_number_larger_threshold} provide confidence intervals for quantiles of individual effects $\tau_{(k)}$'s as well as number (or equivalently proportion) of units with effects larger than any threshold. 
    However, both theorems do not provide point estimation for these quantities. 
    Indeed, these quantities are generally not identifiable due to no joint observation of the treatment and control potential outcomes for any unit. 
    Consequently, consistent estimators for them generally do not exist. 
    Recently, for binary or ordinal outcomes, 
    \citet{lu2018} and \citet{Rosenblum2019} studied sharp bounds and constructed confidence intervals for the proportions of units with positive effects, i.e, $n(0)/n$. 
    Importantly, our confidence intervals in Corollary \ref{cor:infer_number_larger_threshold} work for general outcomes. 

Furthermore, we could, in principle,  construct confidence sets for the $n$-dimensional individual effect vector $\bm{\tau}$ by inverting tests for all sharp null hypotheses. 
In general, however, testing arbitrary sharp null hypotheses is computationally infeasible, and it does not provide informative inferences because the parameter space is typically too unwieldy (with $n$ units, the space of possible effects is $n$-dimensional). 
As noted by \citet{Rosenbaum10a}, such a confidence set would not be intelligible, since it would be a subset of an $n$-dimensional space. 
Some special forms of outcomes or test statistics can lead to efficient computation and intuitive confidence sets. 
For example, 
\citet{Rosenbaum01a} used carefully designed test statistics involving the attributable effects, 
and \citet{Rigdon:2015aa} considered binary outcomes. 
A key property utilized by these approaches is that many sharp null hypotheses are equally likely in the sense of producing the same randomization $p$-value, 
which not only avoids enumeration over all possible values of $\bm{\tau}$ but also provides a convenient form for the resulting confidence sets. 
By contrast, 
our confidence sets constructed in Theorem \ref{thm:simultaneous_valid} work for general outcomes and rely mainly on the valid $p$-value \eqref{eq:p_kc} for testing null hypotheses about quantiles of individual effects, 
which involves efficient optimization of randomization $p$-value in Theorem \ref{thm:test_Hkc}. 
Moreover, as illustrated in {\rev Section \ref{sec:moti_example}}, 
Theorem \ref{thm:simultaneous_valid} provides nice confidence sets in $\mathbb{R}^n$ that are easy to understand, interpret and visualize.

{\rev 
In fact, our confidence set in Theorem \ref{thm:simultaneous_valid} essentially provides a confidence band for the quantile (or equivalently distribution) function of the individual treatment effects. 
Relatedly, \citet{LeiCandes2021} constructed prediction intervals for the (random) individual treatment effect, assuming random sampling of units from some superpopulation. 
One main difference is that we focus on the fixed population distribution of individual effects, while they focus on a random draw from the superpopulation distribution of individual effects.}

\section{Broader justification for \citet{Imbens15a} style Fisher randomization tests}

    First, as mentioned in the main paper, it should not be surprising that we can also use test statistics of the form $t(\bm{Z}, \bm{Y}_{\bm{Z}, \bm{\delta}}(1))$, which involves the imputed treatment potential outcomes instead of imputed control ones. 
	This can be achieved by 
    switching the labels of treatment and control and changing the signs of the outcomes. 

More broadly, the Fisher randomization tests 
    in \citet{Imbens15a} use test statistics of form $t(\bm{Z}, \bm{Y})$, which often compare the observed outcomes of treated units to those of control units (e.g., the difference in outcome means between treatment and control groups). 
    Analogous to the main paper, 
    such a randomization test can also be valid for testing bounded null hypotheses as we discuss below.
    Unfortunately, our generalization for inferring quantiles of individual effects (as in Section \ref{sec:inf_quant_all}) relies crucially on the use of test statistics of form $t(\bm{Z}, \bm{Y}_{\bm{Z}, \bm{\delta}}(0))$; our results do not extend to this class of statistic.
    This is why we focus on the randomization $p$-value of form \eqref{eq:p_val_imp_control} in the main paper. 

That being said, we next provide the broader justification for the Fisher randomization test
in \citet{Imbens15a}, 
which differs from that in \citet{Rosenbaum02a} in the choice of test statistic. 
We first formally describe the randomization $p$-value in \citet{Imbens15a}, and then study its property in parallel with that in Section \ref{sec:broader}.

\subsection{The randomization $p$-value in \citet{Imbens15a}}

The Fisher randomization test in \citet{Imbens15a} considers 
test statistics of form $t(\bm{Z}, \bm{Y} )$, which often compare the observed outcomes of treated units to those of control units. 
For the null $H_{\bm{\delta}}$ in \eqref{eq:null_delta}, the imputed potential outcomes are $\bm{Y}_{\bm{Z}, \bm{\delta}}(1)$ and $\bm{Y}_{\bm{Z}, \bm{\delta}}(0)$ in \eqref{eq:impute_potential_outcome}. 
For any treatment assignment vector $\bm{a} \in \mathcal{Z}$, 
the corresponding imputed observed outcome vector would then be 
\begin{align*}
	\bm{Y}_{\bm{Z}, \bm{\delta}}(\bm{a}) \equiv \bm{a} \circ \bm{Y}_{\bm{Z}, \bm{\delta}}(1) + (1-\bm{a}) \circ \bm{Y}_{\bm{Z},\bm{\delta}}(0), 
\end{align*}
and the corresponding test statistic would have value $t(\bm{a}, \bm{Y}_{\bm{Z},\bm{\delta}}(\bm{a}))$. 
Thus, 
for the null $H_{\bm{\delta}}$, 
the imputed randomization distribution of the test statistic has the following tail probability: 
\begin{align}\label{eq:tail_prob}
	\tilde{G}_{\bm{Z}, \bm{\delta}} (c) \equiv 
	\Pr\left\{
	t(\bm{A}, \bm{Y}_{\bm{Z},\bm{\delta}}(\bm{A}))
	\ge 
	c
	\right\}
	= 
	\sum_{\bm{a} \in \mathcal{Z}} \Pr(\bm{A} = \bm{a}) \I\left\{
	t(\bm{a}, \bm{Y}_{\bm{Z},\bm{\delta}}(\bm{a}))
	\ge 
	c
	\right\}, \qquad (c\in \mathbb{R})
\end{align}
and the corresponding randomization $p$-value is the tail probability \eqref{eq:tail_prob} evaluated at the observed value of the test statistic: 
\begin{align}\label{eq:p_val_imp_both}
	\tilde{p}_{\bm{Z}, \bm{\delta}} \equiv \tilde{G}_{\bm{Z}, \bm{\delta}}(t(\bm{Z}, \bm{Y})) 
	= 
	\sum_{\bm{a} \in \mathcal{Z}} \Pr(\bm{A} = \bm{a}) \I\left\{
	t(\bm{a}, \bm{Y}_{\bm{Z},\bm{\delta}}(\bm{a}))
	\ge 
	t(\bm{Z}, \bm{Y})
	\right\}. 
\end{align}
When %
$H_{\bm{\delta}}$ is true, the imputed randomization distribution $G_{\bm{Z}, \bm{\delta}}(\cdot)$ in \eqref{eq:tail_prob} is the same as the true one, and the randomization $p$-value $\tilde{p}_{\bm{Z}, \bm{\delta}}$ in \eqref{eq:p_val_imp_both} is stochastically larger than or equal to $\Unif[0,1]$,  i.e., it is a valid $p$-value for testing $H_{\bm{\delta}}$.

The two $p$-values $\tilde{p}_{\bm{Z}, \bm{\delta}}$ in \eqref{eq:p_val_imp_both} and $p_{\bm{Z}, \bm{\delta}}$ in \eqref{eq:p_val_imp_control} are equivalent for testing the Fisher's null $H_{\bm{0}}$ of no effect. 
As commented in 
Appendix \ref{sec:comment_p_value}, 
they are also equivalent for testing the null $H_{c\bm{1}}$ of constant effect $c$ for any $c\in \mathbb{R}$,  if  both of them use difference-in-means as the test statistic.
However, they are generally not equivalent. 
Two  obvious differences are as follows. 
First, given the observed data $(\bm{Z}, \bm{Y})$, $\tilde{p}_{\bm{Z}, \bm{\delta}}$ depends on the speculation of individual effects for all units, 
while  $p_{\bm{Z}, \bm{\delta}}$ depends only on the speculation of individual effects for treated units or equivalently $\bm{Z}\circ \bm{\delta}$. 
Second, 
for $\tilde{p}_{\bm{Z}, \bm{\delta}}$,
the tail probability $\tilde{G}_{\bm{Z}, \bm{\delta}}(\cdot)$ in \eqref{eq:tail_prob} depends on the null of interest $\bm{\delta}$, but the cutoff $t(\bm{Z}, \bm{Y})$ does not depend on $\bm{\delta}$; 
while for $p_{\bm{Z}, \bm{\delta}}$, 
both the tail probability $G_{\bm{Z}, \bm{\delta}}(\cdot)$ in \eqref{eq:tail_prob_tilde} and the cutoff $t(\bm{Z}, \bm{Y}_{\bm{Z}, \bm{\delta}}(0))$ depend on $\bm{\delta}$.

\subsection{Validity of the randomization $p$-value for testing bounded nulls}

Below we study the randomization $\tilde{p}_{\bm{Z}, \bm{\delta}}$ in \eqref{eq:p_val_imp_both} for testing bounded nulls, its monotonicity property and its usage for constructing confidence intervals for the maximum individual effect. 
We summarize the results in the following 
theorem and corollary, 
in parallel with 
Theorem \ref{thm:null_broader_monotone_control} and Corollary \ref{cor:ci_max_imp_control}
in the main paper for the randomization $p$-value $p_{\bm{Z}, \bm{\delta}}$ in \eqref{eq:p_val_imp_control}.

\begin{theorem}\label{thm:null_broader_monotone_both}
    If the statistic $t(\cdot, \cdot)$ is effect increasing,
	then 
	\begin{itemize}
	    \item[(a)] for any constant $\bm{\delta} \in \mathbb{R}^n$, 
	the corresponding randomization $p$-value $\tilde{p}_{\bm{Z}, \bm{\delta}}$ in \eqref{eq:p_val_imp_both} for the sharp null $H_{\bm{\delta}}$ in \eqref{eq:null_delta} is also valid for testing the bounded null $H_{\vecle\bm{\delta}}$ in \eqref{eq:null_less_delta}.  
	Specifically, under $H_{\vecle\bm{\delta}}$,  
	$
	\Pr(\tilde{p}_{\bm{Z}, \bm{\delta}} \le \alpha ) \le \alpha
	$
	for any $\alpha \in (0,1)$; 
	
	    \item[(b)] for any possible assignment $\bm{z} \in \mathcal{Z}$, 
	the corresponding randomization $p$-value $\tilde{p}_{\bm{z}, \bm{\delta}}$ in \eqref{eq:p_val_imp_both}, 
	viewed as a function of $\bm{\delta}\in \mathbb{R}^n$, is monotone %
	increasing. 
	Specifically,  
	$
	\tilde{p}_{\bm{z}, \bm{\delta}} \le \tilde{p}_{\bm{z}, \overline{\bm{\delta}}} 
	$
	for any $\bm{\delta} \vecle \overline{\bm{\delta}}$. 
	\end{itemize}
\end{theorem}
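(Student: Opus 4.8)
The plan is to reduce both parts to the effect-increasing property (Definition \ref{def:effect_increase}) applied to the imputed observed-outcome vectors $\bm{Y}_{\bm{Z},\bm{\delta}}(\bm{a})$, mirroring the structure of the sketch in Section \ref{sec:sketch_proof_broader_control} but now with the test statistic acting on a single imputed observed-outcome vector whose dependence on $\bm{a}$ is explicit. The workhorse identity, obtained by substituting \eqref{eq:impute_potential_outcome} into the definition of $\bm{Y}_{\bm{Z},\bm{\delta}}(\bm{a})$, is that for any $\bm{a},\bm{z}\in\mathcal{Z}$ and $\bm{\delta}\in\mathbb{R}^n$,
\begin{align*}
	\bm{Y}_{\bm{z},\bm{\delta}}(\bm{a}) = \bm{Y} + \bm{a}\circ(\bm{1}-\bm{z})\circ\bm{\delta} - (\bm{1}-\bm{a})\circ\bm{z}\circ\bm{\delta}.
\end{align*}
In particular $\bm{Y}_{\bm{z},\bm{\delta}}(\bm{z}) = \bm{Y}$, so the cutoff $t(\bm{Z},\bm{Y})$ in \eqref{eq:p_val_imp_both} equals $t(\bm{Z},\bm{Y}_{\bm{Z},\bm{\delta}}(\bm{Z})) = t(\bm{Z},\bm{Y}(\bm{Z}))$ and, separately, does not depend on $\bm{\delta}$.

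For part (b), fix $\bm{z}$ and let $\bm{\delta}\vecle\overline{\bm{\delta}}$, so $\bm{\Delta}\equiv\overline{\bm{\delta}}-\bm{\delta}\vecge\bm{0}$. The identity gives $\bm{Y}_{\bm{z},\overline{\bm{\delta}}}(\bm{a}) = \bm{Y}_{\bm{z},\bm{\delta}}(\bm{a}) + \bm{a}\circ\bm{\eta} + (\bm{1}-\bm{a})\circ\bm{\xi}$ with $\bm{\eta}=(\bm{1}-\bm{z})\circ\bm{\Delta}\vecge\bm{0}$ and $\bm{\xi}=-\bm{z}\circ\bm{\Delta}\vecle\bm{0}$, so effect increasing yields $t(\bm{a},\bm{Y}_{\bm{z},\overline{\bm{\delta}}}(\bm{a}))\ge t(\bm{a},\bm{Y}_{\bm{z},\bm{\delta}}(\bm{a}))$ for every $\bm{a}\in\mathcal{Z}$. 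Since the cutoff $t(\bm{z},\bm{Y})$ is identical for the two nulls, comparing the sums in \eqref{eq:p_val_imp_both} term by term gives $\tilde{p}_{\bm{z},\bm{\delta}}\le\tilde{p}_{\bm{z},\overline{\bm{\delta}}}$; note this argument uses no hypothesis on $\bm{\tau}$.

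For part (a), suppose $H_{\vecle\bm{\delta}}$ holds, i.e.\ $\bm{\tau}\vecle\bm{\delta}$. Combining the identity at $\bm{z}=\bm{Z}$ with $\bm{Y}(\bm{a})=\bm{a}\circ\bm{Y}(1)+(\bm{1}-\bm{a})\circ\bm{Y}(0)$ and the relations in \eqref{eq:order_impute_control_proof} gives
\begin{align*}
	\bm{Y}_{\bm{Z},\bm{\delta}}(\bm{a}) - \bm{Y}(\bm{a}) = \bm{a}\circ(\bm{1}-\bm{Z})\circ(\bm{\delta}-\bm{\tau}) - (\bm{1}-\bm{a})\circ\bm{Z}\circ(\bm{\delta}-\bm{\tau}),
\end{align*}
which, because $\bm{\delta}-\bm{\tau}\vecge\bm{0}$, has the form $\bm{a}\circ\bm{\eta}'+(\bm{1}-\bm{a})\circ\bm{\xi}'$ with $\bm{\eta}'=(\bm{1}-\bm{Z})\circ(\bm{\delta}-\bm{\tau})\vecge\bm{0}$ and $\bm{\xi}'=-\bm{Z}\circ(\bm{\delta}-\bm{\tau})\vecle\bm{0}$. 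Effect increasing then gives $t(\bm{a},\bm{Y}_{\bm{Z},\bm{\delta}}(\bm{a}))\ge t(\bm{a},\bm{Y}(\bm{a}))$ for every $\bm{a}\in\mathcal{Z}$ (with equality at $\bm{a}=\bm{Z}$), and hence
\begin{align*}
	\tilde{p}_{\bm{Z},\bm{\delta}} = \sum_{\bm{a}\in\mathcal{Z}}\Pr(\bm{A}=\bm{a})\,\I\{t(\bm{a},\bm{Y}_{\bm{Z},\bm{\delta}}(\bm{a}))\ge t(\bm{Z},\bm{Y})\} \ge \sum_{\bm{a}\in\mathcal{Z}}\Pr(\bm{A}=\bm{a})\,\I\{t(\bm{a},\bm{Y}(\bm{a}))\ge t(\bm{Z},\bm{Y}(\bm{Z}))\}.
\end{align*}
The right-hand side is exactly $\tilde{p}_{\bm{Z},\bm{\tau}}$, the randomization $p$-value for the true (sharp) configuration of potential outcomes, which is stochastically larger than or equal to $\Unif[0,1]$; therefore so is $\tilde{p}_{\bm{Z},\bm{\delta}}$, giving $\Pr(\tilde{p}_{\bm{Z},\bm{\delta}}\le\alpha)\le\alpha$.

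The main obstacle I anticipate is purely bookkeeping: checking that, on the relevant coordinates, the perturbations point ``up'' where $a_i=1$ and ``down'' where $a_i=0$, so that Definition \ref{def:effect_increase} genuinely applies. In part (a) this is precisely where the bounded-null assumption $\bm{\tau}\vecle\bm{\delta}$ is used, and in part (b) it is where $\overline{\bm{\delta}}\vecge\bm{\delta}$ is used. Everything else is term-by-term comparison of the finite sums defining the randomization $p$-value. In contrast to Theorem \ref{thm:null_broader_monotone_control}, only the effect-increasing property is needed here (differential increasing plays no role), because the statistic is evaluated at one explicitly given imputed observed-outcome vector rather than at an imputed control vector that is itself a function of $\bm{Z}$.
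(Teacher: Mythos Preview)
Your proposal is correct and is essentially the same argument as the paper's: both establish, via the effect-increasing property applied to the decomposition $\bm{Y}_{\bm{z},\overline{\bm{\delta}}}(\bm{a})-\bm{Y}_{\bm{z},\bm{\delta}}(\bm{a})=\bm{a}\circ(\bm{1}-\bm{z})\circ(\overline{\bm{\delta}}-\bm{\delta})+(\bm{1}-\bm{a})\circ\bm{z}\circ(\bm{\delta}-\overline{\bm{\delta}})$, that $t(\bm{a},\bm{Y}_{\bm{z},\overline{\bm{\delta}}}(\bm{a}))\ge t(\bm{a},\bm{Y}_{\bm{z},\bm{\delta}}(\bm{a}))$, and then compare the sums in \eqref{eq:p_val_imp_both} term by term against the $\bm{\delta}$-free cutoff $t(\bm{z},\bm{Y})$. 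The only organizational difference is that the paper first observes that (b) immediately yields (a) (take $\overline{\bm{\delta}}=\bm{\delta}$, $\bm{\delta}=\bm{\tau}$, and use $\bm{Y}_{\bm{Z},\bm{\tau}}(\bm{a})=\bm{Y}(\bm{a})$) and so proves only (b), whereas you carry out the identical computation twice.
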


\begin{corollary}\label{cor:ci_max_imp_both}
	(a) If the test statistic $t(\cdot, \cdot)$ is effect increasing, 
	then for any $\alpha \in (0,1)$, 
	the set 
	$
	\{c: \tilde{p}_{\bm{Z}, c\bm{1}} > \alpha, c \in \mathbb{R} \}
	$
	is a $1-\alpha$ confidence set for the maximum individual effect $\tau_{\max}$. 
	(b) Furthermore, 
	the confidence set must have the form of $(\underline{c}, \infty)$ or $[\underline{c}, \infty)$ with $\underline{c} = \inf \{c: \tilde{p}_{\bm{Z}, c\bm{1}} > \alpha, c \in \mathbb{R} \}$. 
\end{corollary}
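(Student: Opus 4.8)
The plan is to mirror exactly the way Corollary~\ref{cor:ci_max_imp_control} was deduced from Theorem~\ref{thm:null_broader_monotone_control} in the main paper, now invoking Theorem~\ref{thm:null_broader_monotone_both} in place of Theorem~\ref{thm:null_broader_monotone_control}. The substantive content is already packaged in Theorem~\ref{thm:null_broader_monotone_both}; what remains is to translate its two parts into the two assertions about the confidence set.

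For part (a), the first step is the observation that the bounded null $H_{\vecle c\bm{1}}$ of \eqref{eq:null_less_delta} with $\bm{\delta} = c\bm{1}$ is precisely the statement ``$\tau_i \le c$ for all $i$'', i.e.\ $\tau_{\max} \le c$. Since $t(\cdot,\cdot)$ is effect increasing, Theorem~\ref{thm:null_broader_monotone_both}(a) says that whenever $\tau_{\max} \le c$ we have $\Pr(\tilde{p}_{\bm{Z}, c\bm{1}} \le \alpha) \le \alpha$, equivalently $\Pr(\tilde{p}_{\bm{Z}, c\bm{1}} > \alpha) \ge 1-\alpha$. Applying this to the fixed (unknown) constant $c = \tau_{\max}$, for which $\tau_{\max} \le c$ holds trivially, gives
\begin{align*}
\Pr\big( \tau_{\max} \in \{c : \tilde{p}_{\bm{Z}, c\bm{1}} > \alpha,\ c \in \mathbb{R}\} \big)
= \Pr\big( \tilde{p}_{\bm{Z}, \tau_{\max}\bm{1}} > \alpha \big) \ge 1 - \alpha,
\end{align*}
which is the claimed $1-\alpha$ coverage.

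For part (b), I would use the monotonicity in Theorem~\ref{thm:null_broader_monotone_both}(b). Because $c \le c'$ implies $c\bm{1} \vecle c'\bm{1}$, that result yields $\tilde{p}_{\bm{z}, c\bm{1}} \le \tilde{p}_{\bm{z}, c'\bm{1}}$, so $c \mapsto \tilde{p}_{\bm{z}, c\bm{1}}$ is weakly increasing for each fixed $\bm{z}$. Consequently the acceptance set $\{c : \tilde{p}_{\bm{z}, c\bm{1}} > \alpha\}$ is upward closed in $\mathbb{R}$: if it contains $c$ it contains every $c' \ge c$. An upward-closed subset of $\mathbb{R}$ is necessarily of the form $(\underline{c}, \infty)$ or $[\underline{c}, \infty)$ with $\underline{c}$ its infimum (the degenerate cases $\underline{c} = +\infty$, the empty set, and $\underline{c} = -\infty$, all of $\mathbb{R}$, being covered as well). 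Whether the left endpoint $\underline{c}$ itself belongs to the set depends on whether $\tilde{p}_{\bm{z}, \underline{c}\bm{1}} > \alpha$, i.e.\ on the left/right continuity of $c \mapsto \tilde{p}_{\bm{z}, c\bm{1}}$; since the statement allows either form, this point needs no further argument.

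I do not anticipate a genuine obstacle here: everything probabilistic lives in Theorem~\ref{thm:null_broader_monotone_both}, and the only care needed is the bookkeeping in part (a) — recognizing that plugging $\bm{\delta}=c\bm{1}$ collapses the element-wise bounded null to the scalar constraint $\tau_{\max}\le c$ — and the handling of the degenerate endpoint cases in part (b).
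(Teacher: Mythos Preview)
Your proposal is correct and matches the paper's own approach essentially line for line: the paper explicitly says this corollary follows from Theorem~\ref{thm:null_broader_monotone_both} by the same argument as Corollary~\ref{cor:ci_max_imp_control}, and your write-up is precisely that argument—plugging $c=\tau_{\max}$ into part~(a) for coverage, and using the monotonicity in part~(b) to get the upward-closed interval form.
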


\section{Proofs for properties of test statistics}\label{sec:proof_property}

\begin{proof}[\bf Proof of Proposition \ref{prop:stat_property_dim}]
Suppose that both $\psi_{1i}(\cdot)$ and $\psi_{0i}(\cdot)$ are %
	monotone increasing 
	functions for all $1\le i \le n$. 
	We first show that the statistic $t_1(\cdot, \cdot)$ in \eqref{eq:stat_class_y} is effect increasing. By definition, for any $\bm{z} \in \mathcal{Z}$ and $\bm{y}, \bm{\eta}, \bm{\xi} \in \mathbb{R}^n$ with $\bm{\eta} \vecge \bm{0} \vecge \bm{\xi}$, 
	\begin{align}\label{eq:diff_t1_eff_incre}
	& \quad \ t_1(\bm{z}, \bm{y} + \bm{z} \circ \bm{\eta} + (\bm{1}-\bm{z}) \circ \bm{\xi} ) - t_1(\bm{z}, \bm{y}) 
	\nonumber
	\\
	& = 
	\left\{ \sum_{i=1}^n z_i \psi_{1i}(y_i + \eta_i ) - \sum_{i=1}^n (1-z_i)\psi_{0i}(y_i + \xi_i) \right\}
	- 
	\left\{ \sum_{i=1}^n z_i \psi_{1i}(y_i) - \sum_{i=1}^n (1-z_i)\psi_{0i}(y_i) \right\}
	\nonumber
	\\
	& = 
	\sum_{i=1}^n z_i \left\{ \psi_{1i}(y_i + \eta_i ) - \psi_{1i}(y_i) \right\} + 
	\sum_{i=1}^n (1-z_i) \left\{ \psi_{0i}(y_i) - \psi_{0i}(y_i + \xi_i)   \right\}. 
	\end{align}
	Because $\eta_i \ge 0 \ge \xi_i$ and $\psi_{1i}(\cdot)$ and $\psi_{1i}(\cdot)$ are %
	monotone increasing 
	functions for all $i$, we can know that all the terms in \eqref{eq:diff_t1_eff_incre} are non-negative. Therefore, \eqref{eq:diff_t1_eff_incre} must also be non-negative. From Definition \ref{def:effect_increase}, $t_1(\cdot, \cdot)$ in \eqref{eq:stat_class_y} is effect increasing. 
	
	We then show that the statistic $t_1(\cdot, \cdot)$ in \eqref{eq:stat_class_y} is differential increasing. 
	By definition, for any $\bm{z}, \bm{a} \in \mathcal{Z}$ and $\bm{y}, \bm{\eta} \in \mathbb{R}^n$ with $\bm{\eta} \vecge 0$, 
	\begin{align}\label{eq:diff_t1_dif_incre}
	& \quad \ t_1(\bm{z}, \bm{y} + \bm{a} \circ \bm{\eta}  ) - t_1(\bm{z}, \bm{y}) 
	\nonumber
	\\
	& = 
	\left\{ \sum_{i=1}^n z_i \psi_{1i}(y_i + a_i\eta_i ) - \sum_{i=1}^n (1-z_i)\psi_{0i}(y_i + a_i \eta_i ) \right\}
	- 	\left\{ \sum_{i=1}^n z_i \psi_{1i}(y_i) - \sum_{i=1}^n (1-z_i)\psi_{0i}(y_i) \right\}
	\nonumber
	\\
	& = 
	\sum_{i=1}^n z_i \left\{ \psi_{1i}(y_i + a_i\eta_i ) - \psi_{1i}(y_i) + \psi_{0i}(y_i + a_i \eta_i ) - \psi_{0i}(y_i)  \right\} 
	- 
	\sum_{i=1}^n 
	\left\{
	\psi_{0i}(y_i + a_i \eta_i ) - \psi_{0i}(y_i)
	\right\}. 
	\end{align}
	For each $1\le i \le n$, because $a_i \in \{0,1\}$, $\eta_i \ge 0$, and both $\psi_{1i}(\cdot)$ and $\psi_{1i}(\cdot)$ are %
	increasing 
	functions, we can know that 
	\begin{align*}
	& \quad \ z_i \left\{ \psi_{1i}(y_i + a_i\eta_i ) - \psi_{1i}(y_i) + \psi_{0i}(y_i + a_i \eta_i ) - \psi_{0i}(y_i)  \right\}\\
	& 
	\begin{cases}
	\le a_i \left\{ \psi_{1i}(y_i + a_i\eta_i ) - \psi_{1i}(y_i) + \psi_{0i}(y_i + a_i \eta_i ) - \psi_{0i}(y_i)  \right\}, & \text{if } a_i = 1, \\
	= 0 = a_i \left\{ \psi_{1i}(y_i + a_i\eta_i ) - \psi_{1i}(y_i) + \psi_{0i}(y_i + a_i \eta_i ) - \psi_{0i}(y_i)  \right\}, & \text{if } a_i = 0.  
	\end{cases}
	\end{align*}
	This immediately implies that \eqref{eq:diff_t1_dif_incre} can be bounded above by
	\begin{align*}
	& \quad \ t_1(\bm{z}, \bm{y} + \bm{a} \circ \bm{\eta}  ) - t_1(\bm{z}, \bm{y}) 
	\\
	& = 
	\sum_{i=1}^n z_i \left\{ \psi_{1i}(y_i + a_i\eta_i ) - \psi_{1i}(y_i) + \psi_{0i}(y_i + a_i \eta_i ) - \psi_{0i}(y_i)  \right\} 
	- 
	\sum_{i=1}^n 
	\left\{
	\psi_{0i}(y_i + a_i \eta_i ) - \psi_{0i}(y_i)
	\right\}\\
	& 
	\le 
	\sum_{i=1}^n a_i \left\{ \psi_{1i}(y_i + a_i\eta_i ) - \psi_{1i}(y_i) + \psi_{0i}(y_i + a_i \eta_i ) - \psi_{0i}(y_i)  \right\} 
	- 
	\sum_{i=1}^n 
	\left\{
	\psi_{0i}(y_i + a_i \eta_i ) - \psi_{0i}(y_i)
	\right\}\\
	& = t_1(\bm{a}, \bm{y} + \bm{a} \circ \bm{\eta}  ) - t_1(\bm{a}, \bm{y}). 
	\end{align*}
	From Definition \ref{def:diff_increase}, $t_1(\cdot, \cdot)$ in \eqref{eq:stat_class_y} is differential increasing. 
	
	From the above, Proposition \ref{prop:stat_property_dim} holds. 
\end{proof}

To prove Proposition \ref{prop:stat_property_rank}, we need  the  following three lemmas.
\begin{lemma}\label{lemma:rank_to_delta}
	For any $n\ge 1$, $\bm{y} \in \mathbb{R}^n$, 
	for any $1\le i, j \le n$, define
	\begin{align}\label{eq:delta}
	\delta_{ij}(y_i, y_j) & = 
	\begin{cases}
	\I(y_i > y_j) + \I(y_i = y_j) \I(i \ge j), & \text{if ``first'' method is used for ties}, \\
	\I(y_i > y_j) + \I(y_i = y_j) \I(i \le j), & \text{if ``last'' method is used for ties}. 
	\end{cases}
	\end{align}
	Then for $1\le i \le n$, $\rank_i (\bm{y}) = \sum_{j=1}^n \delta_{ij}(y_i, y_j)$. 
\end{lemma}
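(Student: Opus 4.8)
The plan is to unwind the definition of the rank under the ``first'' (resp.\ ``last'') tie-breaking rule and to recognize the right-hand side of the claimed identity as the count of exactly those indices that the rule places at or below coordinate $i$.

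First I would recall that, under the ``first'' method, the indices $1,\dots,n$ are totally ordered by the relation $j \preceq i$ defined to hold iff either $y_j < y_i$, or $y_j = y_i$ and $j \le i$; this is a genuine total order precisely because ties among the $y$-values are broken using the distinct indices. By construction, $\rank_i(\bm{y})$ is the position of $i$ in this order, that is, $\rank_i(\bm{y}) = |\{\, j : 1 \le j \le n,\ j \preceq i \,\}|$, where the count includes $i$ itself (since $i \preceq i$). Next I would observe that $\delta_{ij}(y_i,y_j)$ is exactly the indicator of the event $j \preceq i$: the events $y_i > y_j$ and $y_i = y_j$ are mutually exclusive, so $\delta_{ij}(y_i,y_j) \in \{0,1\}$, and it equals $1$ exactly when $y_j < y_i$ (the first term) or when $y_j = y_i$ and $j \le i$ (the second term, via $\I(i \ge j)$), and equals $0$ when $y_j > y_i$; these are precisely the cases $j \preceq i$ and $j \not\preceq i$. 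Summing over $j$ then gives $\sum_{j=1}^n \delta_{ij}(y_i,y_j) = |\{\, j : j \preceq i \,\}| = \rank_i(\bm{y})$. The ``last'' method is handled identically after replacing $\preceq$ by the total order that breaks ties in favor of larger indices, whose indicator is $\I(y_i > y_j) + \I(y_i = y_j)\I(i \le j)$.

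There is essentially no obstacle here; the only points needing a line of care are that the two indicator terms in $\delta_{ij}$ never both fire, and that the $j = i$ term contributes $1$, so that the sum counts $i$ exactly once and reproduces the standard convention that ranks run from $1$ through $n$. If one prefers an argument that avoids naming the order $\preceq$, the same computation can be done directly: split $\sum_{j=1}^n \delta_{ij}(y_i,y_j)$ according to whether $y_j < y_i$, $y_j = y_i$, or $y_j > y_i$, obtaining $|\{j : y_j < y_i\}| + |\{j : y_j = y_i,\ j \le i\}|$, and note this equals $1 + |\{j \neq i : y_j < y_i\}| + |\{j \neq i : y_j = y_i,\ j < i\}|$, which is the definition of $\rank_i(\bm{y})$ under the ``first'' method (and symmetrically for ``last'').
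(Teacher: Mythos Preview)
Your proposal is correct and takes essentially the same approach as the paper: the paper's proof is the single sentence ``Lemma~\ref{lemma:rank_to_delta} follows directly from the definition of ranks,'' and your argument simply makes that definition explicit by identifying $\delta_{ij}$ as the indicator that $j$ weakly precedes $i$ in the total order induced by the tie-breaking rule.
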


\begin{proof}[Proof of Lemma \ref{lemma:rank_to_delta}]
	Lemma \ref{lemma:rank_to_delta} follows directly from the definition of ranks. 
\end{proof}

\begin{lemma}\label{lemma:prop_delta}
	$\delta_{ij}(\cdot, \cdot)$'s defined in \eqref{eq:delta} have the following properties: 
	\begin{itemize}
		\item[(a)] for any $(i,j)$, $\delta_{ij}(x, y)$ is %
		increasing 
		in $x$ and is %
		decreasing 
		in $y$;
		\item[(b)] for any $(i,j,l)$, if $\delta_{ij}(x, y) = 1$ and $\delta_{jl}(y, z) = 1$, then we must have $\delta_{il}(x, z) = 1$. 
	\end{itemize}
\end{lemma}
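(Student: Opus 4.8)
The plan is to reduce both parts to elementary case analysis, the key first observation being that each $\delta_{ij}$ collapses to one of two simple indicators depending on the relative order of the indices $i$ and $j$.

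For part (a), I would note that under the ``first'' method $\delta_{ij}(x,y) = \I(x>y) + \I(x=y)\I(i\ge j)$ equals $\I(x \ge y)$ when $i \ge j$ and equals $\I(x>y)$ when $i < j$; symmetrically, under the ``last'' method it equals $\I(x \ge y)$ when $i \le j$ and $\I(x>y)$ when $i > j$. In every case $\delta_{ij}(\cdot,\cdot)$ is one of the two functions $\I(x>y)$ or $\I(x \ge y)$, each of which is nondecreasing in $x$ for fixed $y$ and nonincreasing in $y$ for fixed $x$. This yields (a) at once.

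For part (b), I would treat the ``first'' method in detail; the ``last'' case is identical after flipping the direction of the index inequalities. The useful reformulation is that $\delta_{ij}(x,y) = 1$ if and only if $x > y$, or $x = y$ and $i \ge j$. Assuming $\delta_{ij}(x,y) = 1$ and $\delta_{jl}(y,z) = 1$, I would split into four cases according to whether each hypothesis is realized through a strict inequality or through an equality together with an index condition. Whenever at least one of $x > y$ or $y > z$ holds, combining it with the other (weak or strict) relation gives $x > z$, hence $\delta_{il}(x,z) = 1$; in the remaining case $x = y = z$ with $i \ge j$ and $j \ge l$, transitivity of $\ge$ gives $i \ge l$, so $\delta_{il}(x,z) = 1$ again.

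I do not anticipate any genuine obstacle here: the statement is a bookkeeping exercise, and the only point requiring mild care is to pair each tie-breaking index condition with the correct direction of inequality ($\ge$ for ``first'', $\le$ for ``last'') and to use the matching transitivity in the all-ties subcase. Once (a) and (b) are in hand, Lemma~\ref{lemma:prop_delta} feeds directly into the proof of Proposition~\ref{prop:stat_property_rank} via the identity $\rank_i(\bm{y}) = \sum_{j=1}^n \delta_{ij}(y_i, y_j)$ from Lemma~\ref{lemma:rank_to_delta}.
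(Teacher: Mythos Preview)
Your proposal is correct and takes essentially the same approach as the paper: the paper's proof is a single line stating that the lemma ``follows directly from the definition of $\delta_{ij}(\cdot,\cdot)$'s,'' and your case analysis is simply an explicit unpacking of that remark. The reduction to the two indicators $\I(x>y)$ and $\I(x\ge y)$ and the four-case transitivity argument are exactly the right way to make that one-liner precise.
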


\begin{proof}[Proof of Lemma \ref{lemma:prop_delta}]
	Lemma \ref{lemma:prop_delta} follows directly from the definition of $\delta_{ij}(\cdot, \cdot)$'s. 
\end{proof}

\begin{lemma}\label{lemma:rank_sum_eff_incre}
	For any $n\ge 1$, $\bm{z} \in \{0,1\}^n$, and $\bm{y}, \bm{\eta}, \bm{\xi} \in \mathbb{R}^n$ with $\bm{\eta} \vecge \bm{0} \vecge \bm{\xi}$, 
	if $\phi(\cdot)$ is a 
	monotone increasing 
	function and ``first'' or ``last'' method is used for  ties, 
	then 
	$$
		\sum_{i=1}^n z_i \phi\left( \rank_i(\bm{y} + \bm{z} \circ \bm{\eta} + (\bm{1}-\bm{z})\circ \bm{\xi}) \right) 
		\ge 
		\sum_{i=1}^n z_i \phi( \rank_i( \bm{y} ) ).
	$$
\end{lemma}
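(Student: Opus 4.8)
The plan is to reduce to perturbing one coordinate of $\bm{y}$ at a time and, at each step, to keep track only of the treated units' contribution $\sum_{i=1}^n z_i \phi(\rank_i(\cdot))$; the monotonicity of $\phi$ is then exploited through a short telescoping inequality.

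First I would split the perturbation in two. Setting $\bm{y}'' = \bm{y} + \bm{z}\circ\bm{\eta}$, it suffices to establish (i) $\sum_i z_i \phi(\rank_i(\bm{y}'')) \ge \sum_i z_i \phi(\rank_i(\bm{y}))$ and (ii) $\sum_i z_i \phi(\rank_i(\bm{y}'' + (\bm{1}-\bm{z})\circ\bm{\xi})) \ge \sum_i z_i \phi(\rank_i(\bm{y}''))$, since composing them yields the lemma. Case (ii) is immediate: by the representation $\rank_i(\bm{y}) = \sum_{j=1}^n \delta_{ij}(y_i,y_j)$ from Lemma \ref{lemma:rank_to_delta} and the fact (Lemma \ref{lemma:prop_delta}(a)) that $\delta_{ij}$ is decreasing in its second argument, lowering the control coordinates (those $j$ with $z_j = 0$) can only increase each term $\delta_{ij}$ with $z_i = 1$; since the coordinates $y_i$ with $z_i=1$ are untouched, $\rank_i(\cdot)$ weakly increases for every treated $i$, and because $\phi$ is increasing the treated sum increases term by term.

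The substance is in case (i), which I would reduce further to raising a single treated coordinate $y_{i_0}$ at a time (any number of such moves composes to $\bm{y}\mapsto\bm{y}''$, and the single-move bound holds starting from any vector). Fix such a move and let $r$ and $r+k$ be the rank of unit $i_0$ before and after. Raising $y_{i_0}$ changes only the terms $\delta_{i_0 j}$ in $\rank_{i_0}$ (each can only increase, again by Lemma \ref{lemma:prop_delta}(a)) and, via the identity $\delta_{i_0 j} + \delta_{j i_0} = 1$ valid for $j\ne i_0$ under the ``first'' and ``last'' conventions, the single term $\delta_{j i_0}$ in each $\rank_j$; one shows that the $k$ units whose old ranks are $r+1,\dots,r+k$ have their ranks each decreased by exactly one while all remaining ranks are unchanged, using that these conventions make $(\rank_1,\dots,\rank_n)$ a permutation of $\{1,\dots,n\}$ together with the transitivity in Lemma \ref{lemma:prop_delta}(b). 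If $r+i_1 < \dots < r+i_a$ (with integers $1\le i_1<\dots<i_a\le k$) are the old ranks among these passed units that are themselves treated, the net change in $\sum_i z_i\phi(\rank_i)$ equals
\[
\big[\phi(r+k)-\phi(r)\big] - \sum_{l=1}^a \big[\phi(r+i_l)-\phi(r+i_l-1)\big],
\]
which is nonnegative because $\phi$ increasing gives $\phi(r+k)-\phi(r)=\sum_{t=1}^k[\phi(r+t)-\phi(r+t-1)]$, a sum of $k$ nonnegative increments, and the $a$ subtracted terms form a sub-collection of these (indexed by the distinct $i_1,\dots,i_a$).

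The main obstacle will be the bookkeeping in case (i): showing carefully that raising $y_{i_0}$ decrements the ranks of precisely the units occupying positions $r+1,\dots,r+k$, each by exactly one, with ties resolved consistently by the tie-breaking rule. Once this is verified, the two-stage reduction, case (ii), and the telescoping inequality are all routine.
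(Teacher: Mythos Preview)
Your argument is correct, but it takes a genuinely different route from the paper's. The paper does not decompose the perturbation at all: it sets $\bm{w}=\bm{y}+\bm{z}\circ\bm{\eta}+(\bm{1}-\bm{z})\circ\bm{\xi}$, sorts the treated units by rank in $\bm{y}$ as $i_1,\dots,i_m$ and by rank in $\bm{w}$ as $j_1,\dots,j_m$, and proves the pointwise domination $\rank_{j_k}(\bm{w})\ge\rank_{i_k}(\bm{y})$ for every $k$. The key step is a pigeonhole/transitivity argument showing $\delta_{j_k i_k}(w_{j_k},y_{i_k})=1$, after which the comparison of control contributions is immediate. Monotonicity of $\phi$ is then applied once to the sorted lists.

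Your approach trades that single slightly tricky lemma for more elementary bookkeeping: you split off the control perturbation (where each treated rank increases termwise), then reduce the treated perturbation to a sequence of single-coordinate increases and control each one via the telescoping identity $\phi(r+k)-\phi(r)=\sum_{t=1}^k[\phi(r+t)-\phi(r+t-1)]$. The crucial fact you need---that raising $y_{i_0}$ from rank $r$ to rank $r+k$ decrements exactly the units at old ranks $r+1,\dots,r+k$, each by one---follows cleanly from the observation that the $\delta_{jl}$ with $j,l\ne i_0$ are unchanged, so the relative order of the other $n-1$ units is preserved and $i_0$ simply slides up through them; combined with ranks being a permutation under ``first''/``last'', the claim drops out. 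This incremental strategy is arguably more transparent and would extend easily to other statistics built from order information, whereas the paper's global argument is shorter and handles both perturbations simultaneously at the cost of the less obvious intermediate step $\delta_{j_k i_k}(w_{j_k},y_{i_k})=1$.
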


\begin{proof}[Proof of Lemma \ref{lemma:rank_sum_eff_incre}]
	Let $m = \sum_{i=1}^n z_i$, and $\bm{w} = \bm{y} + \bm{z} \circ \bm{\eta} + (\bm{1}-\bm{z})\circ \bm{\xi}$. We use 
	$(i_1, i_2, \ldots, i_m)$ to denote permutation of the set $\{i: z_i = 1, 1\le i \le n\}$ such that $\rank_{i_1}(\bm{y}) < \rank_{i_2}(\bm{y}) < \ldots < \rank_{i_m}(\bm{y})$, 
	and $(j_1, j_2, \ldots, j_m)$ to denote permutation of the set $\{i: z_i = 1, 1\le i\le n\}$ such that $\rank_{j_1}(\bm{w}) < \rank_{j_2}(\bm{w}) < \ldots <  \rank_{j_m}(\bm{w})$. 
	For each $1\le k \le m$, from Lemma \ref{lemma:rank_to_delta}, we have
	\begin{align}\label{eq:rank_y_ik}
		\rank_{i_k}(\bm{y}) & = \sum_{l=1}^n \delta_{i_k l}(y_{i_k}, y_l) = \sum_{l:z_l = 1} \delta_{i_k l}(y_{i_k}, y_l) + \sum_{l:z_l = 0} \delta_{i_k l}(y_{i_k}, y_l) 
		= k + \sum_{l:z_l = 0} \delta_{i_k l}(y_{i_k}, y_l), 
	\end{align}
	where the last equality holds due to the construction of $i_k$. By the same logic, for $1\le k \le m$, 
	\begin{align}\label{eq:rank_w_jk}
			\rank_{j_k}(\bm{w})
		& = k + \sum_{l:z_l = 0} \delta_{j_k l}(w_{j_k}, w_l). 
	\end{align}
	
	First, we prove $\delta_{j_k i_k}(w_{j_k}, y_{i_k}) = 1$ for all $1\le  k \le m$. 
	For $1\le p\le k$, 
	because $w_{j_p} = y_{j_p} + \eta_{j_p} \ge y_{j_p}$, 
	from Lemma \ref{lemma:prop_delta}(a), we  have  $\delta_{j_k j_p}(w_{j_k}, y_{j_p}) \ge \delta_{j_k j_p}(w_{j_k}, w_{j_p}) =  1$. 
	By the definition of $i_k$, there must exist a $j_q \in \{j_1, j_2, \ldots, j_k\}$ such that 
	$\delta_{j_q i_k} (y_{j_q}, y_{i_k}) = 1$. 
	Thus, 
	by Lemma \ref{lemma:prop_delta}(b), we must have 
	$\delta_{j_k i_k}(w_{j_k}, y_{ik}) = 1$. 
	
	Second, we prove that $\rank_{j_k}(\bm{w}) \ge \rank_{i_k}(\bm{y})$ for all $1\le k \le m$. 
	By definition, 
	for  any $l$ with $z_l=0$, 
	we have $w_l = y_l + \xi_l \le y_l$. 
	Thus, Lemma \ref{lemma:prop_delta}(a) implies that $\delta_{j_k l}(w_{j_k}, w_l) \ge \delta_{j_k l}(w_{j_k}, y_l)$ for any $l$ with $z_l=0$. 
	From the previous discussion, $\delta_{j_k i_k}(w_{j_k}, y_{i_k}) = 1$ for all $1\le  k \le m$. 
	From Lemma \ref{lemma:prop_delta}(b), 
	for any $1\le k \le m$ and $l$ with $z_l = 0$, 
	if $\delta_{i_k l}(y_{i_k}, y_l) = 1$, then we must have $\delta_{j_k l}(w_{j_k}, w_l) \ge \delta_{j_k l}(w_{j_k}, y_l) = 1$. Consequently, $\delta_{j_k l}(w_{j_k}, w_l) \ge \delta_{i_k l}(y_{i_k}, y_l)$ for any  $1\le k \le m$ and $l$ with $z_l = 0$. 
	From \eqref{eq:rank_y_ik} and \eqref{eq:rank_w_jk}, this  immediately implies that for any $1\le k \le n$, 
	\begin{align}\label{eq:rank_jk_ik}
		\rank_{j_k}(\bm{w}) & = k + \sum_{l:z_l = 0} \delta_{j_k l}(w_{j_k}, w_l) 
		\ge 
		 k + \sum_{l:z_l = 0} \delta_{i_k l}(y_{i_k}, y_l) = \rank_{i_k}(\bm{y}). 
	\end{align}
	
	Third, we prove Lemma \ref{lemma:rank_sum_eff_incre}. From \eqref{eq:rank_jk_ik} and the fact  that $\phi(\cdot)$ is a %
	monotone increasing 
	function, we have 
	\begin{align*}
		\sum_{i=1}^n z_i \phi\left( \rank_i(\bm{w}) \right) & = \sum_{k=1}^m \phi\left( \rank_{j_k}(\bm{w}) \right)
		\ge  \sum_{k=1}^m \phi\left( \rank_{j_k}(\bm{y}) \right) = 
		\sum_{i=1}^n z_i \phi(\rank_i(\bm{y})).
	\end{align*}
	By the definition of $\bm{w}$, 
    we then derive Lemma \ref{lemma:rank_sum_eff_incre}. 
\end{proof}

\begin{proof}[\bf Proof of Proposition \ref{prop:stat_property_rank}]
    Below we consider the statistic $t_2(\cdot, \cdot)$ in \eqref{eq:stat_class_rank}, 
    and assume that $\phi(\cdot)$ is a %
	monotone increasing 
	function and ``first'' or ``last'' method is used for ties. 
	From Lemma \ref{lemma:rank_sum_eff_incre}, we  can immediately know that $t_2(\cdot, \cdot)$  is effect increasing. 
	Below we prove that it is also distribution free under exchangeable treatment assignment.  
	
	For any $\bm{y} \in \mathbb{R}^n$, 
	because we use ``first'' or ``last'' method for ties, 
	there must exists a permutations $\pi(\cdot)$ of $\{1, 2, \ldots, n\}$ such that 
	$(r_{\pi(1)}(\bm{y}), r_{\pi(2)}(\bm{y}), \ldots, r_{\pi(n)}(\bm{y}) ) = 
	(1, 2, \ldots, n),$
	where $\pi$ depends on both the outcome vector $\bm{y}$ and the original ordering of units. 
	Consequently, 
	\begin{align}\label{eq:proof_dist_free_rank_sum}
	    \sum_{i=1}^n Z_i \phi(\rank_i(\bm{y})) = \sum_{i=1}^n Z_{\pi(i)} \phi(\rank_{\pi(i)}(\bm{y})) = \sum_{i=1}^n Z_{\pi(i)} \phi(i). 
	\end{align}
	Because the ordering of units is independent of the treatment assignment, 
	by Definition \ref{def:exchange}, 
	conditional on $\pi(\cdot)$, 
	\eqref{eq:proof_dist_free_rank_sum} follows the same distribution as 
	$\sum_{i=1}^n Z_i \phi(i)$, which depends neither on the permutation $\pi(\cdot)$ nor the outcome vector $\bm{y}$. 
	Therefore, the statistic $t_2(\cdot, \cdot)$ in \eqref{eq:stat_class_rank} is distribution free, satisfying Definition \ref{def:dist_free}. 
	
	From the above, Proposition \ref{prop:stat_property_rank} holds. 
\end{proof}

\begin{proof}[\bf Examples of statistics satisfying exactly one of Definitions \ref{def:effect_increase}--\ref{def:dist_free}]
    Below we construct statistics that satisfy only one of the three properties defined in Definitions \ref{def:effect_increase}--\ref{def:dist_free}. 

First, we define $\tilde{t}_1(\bm{z}, \bm{y}) = y_{i(\bm{z})}$, where $i(\bm{z}) = \argmin_{i: z_i = 1} i$. 
It is easy to show that $\tilde{t}_1(\cdot, \cdot)$ is effect increasing, and it is not distribution free even under a CRE. 
Below we give a counterexample to show that it is not diffential increasing. 
Define $\bm{a} = (1,1,0)^\top, \bm{z} = (0,1,1)^\top,  \bm{y} = (0,0,0)^\top$, and $\bm{\eta} = (1, 2, 0)^\top \vecge \bm{0}$. 
Then we have 
$
\tilde{t}_1(\bm{a}, \bm{y}+ \bm{a}\circ \bm{\eta}) - \tilde{t}_1(\bm{a}, \bm{y}) = (y_1 + a_1 \eta_1) - y_1 = \eta_1 = 1,
$
and 
$
	\tilde{t}_1(\bm{z}, \bm{y}+ \bm{a}\circ \bm{\eta}) - \tilde{t}_1(\bm{z}, \bm{y}) = (y_2 + a_2 \eta_2) - y_2 = \eta_2 = 2 > \tilde{t}_1(\bm{a}, \bm{y}+ \bm{a}\circ \bm{\eta}) - \tilde{t}_1(\bm{a}, \bm{y}) . 
$
Thus, $\tilde{t}_1(\cdot, \cdot)$ is not differential increasing. 

Second, we define $\tilde{t}_2(\bm{z}, \bm{y}) = \sum_{i=1}^n z_i y_i - 2 \sum_{i=1}^n y_i$. 
Using Proposition \ref{prop:stat_property_dim} with $t_1(\cdot, \cdot)$ in \eqref{eq:stat_class_y}, $\phi_{1i}$ being identity function and $\phi_{0i}$ being zero function, 
we can know that $t_1(\bm{z}, \bm{y}) = \sum_{i=1}^n z_i y_i$ is differential increasing. 
Therefore, for any $\bm{z}, \bm{a} \in \mathcal{Z}$ and $\bm{y}, \bm{\eta} \in \mathbb{R}^n$ with $\bm{\eta} \vecge \bm{0}$, 
\begin{align*}
	\tilde{t}_2(\bm{z}, \bm{y} + \bm{a} \circ \bm{\eta}) - \tilde{t}_2(\bm{z}, \bm{y})
	& = t_1(\bm{z}, \bm{y} + \bm{a} \circ \bm{\eta}) - 2 \sum_{i=1}^n (y_i + a_i\eta_i)
	- t_1(\bm{z}, \bm{y}) + 2 \sum_{i=1}^n y_i\\
	& =  t_1(\bm{z}, \bm{y} + \bm{a} \circ \bm{\eta}) - t_1(\bm{z}, \bm{y}) - 2 \sum_{i=1}^n (y_i + a_i\eta_i)
	 + 2 \sum_{i=1}^n y_i\\
	 & \le 
	 t_1(\bm{a}, \bm{y} + \bm{a} \circ \bm{\eta}) - t_1(\bm{a}, \bm{y}) - 2 \sum_{i=1}^n (y_i + a_i\eta_i)
	 + 2 \sum_{i=1}^n y_i\\
	 & = \tilde{t}_2(\bm{a}, \bm{y} + \bm{a} \circ \bm{\eta}) - \tilde{t}_2(\bm{a}, \bm{y}).
\end{align*}
Thus, $\tilde{t}_2(\cdot, \cdot)$ is differential increasing. 
It is easy to show that $\tilde{t}_2(\cdot, \cdot)$ is not distribution free even under a CRE. 
Below we give counterexample to show $\tilde{t}_2(\cdot, \cdot)$ is not effect increasing. 
Define 
$\bm{z} = (1,0)^\top, \bm{y} = (0,0)^\top$ and $\bm{\eta} = (1,0)^\top\vecge \bm{0}$. 
Then we have
$\tilde{t}_2(\bm{z}, \bm{y}) = y_1 - 2(y_1+y_2) = 0$, 
and 
$\tilde{t}_2(\bm{z}, \bm{y} + \bm{z} \circ \bm{\eta}) = (y_1+\eta_1) - 2(y_1+\eta_1 + y_2) = -1 < \tilde{t}_2(\bm{z}, \bm{y})$. 
Thus,  $\tilde{t}_2(\cdot, \cdot)$ is not effect increasing.

Third, we define $\tilde{t}_3(\bm{z}, \bm{y}) = - \sum_{i=1}^2 z_i \rank_i(\bm{y})$ and consider a CRE where one unit is randomly assigned to treatment and the remaining one is randomly assigned to control.  
From Proposition \ref{prop:stat_property_rank}, 
it is not hard to see that $\tilde{t}_3(\bm{z}, \bm{y})$ is not effect increasing but it is distribution free. 
Below we give a numerical example to show it is not differential increasing. 
Let $\bm{y} = (1, 2)^\top$, $\bm{\eta} = (2, 0)^\top$, $\bm{a} = (1, 0)^\top$ and $\bm{z} = (0,1)^\top$. 
Then we have 
$
\tilde{t}_3(\bm{z}, \bm{y} + \bm{a} \circ \bm{\eta}) - \tilde{t}_3(\bm{z}, \bm{y}) = 1 > -1 = \tilde{t}_3(\bm{a}, \bm{y} + \bm{a} \circ \bm{\eta}) - \tilde{t}_3(\bm{a}, \bm{y})$.
\end{proof}

\section{Proofs for broader justification of Fisher randomization test}

\subsection{Proofs of Theorems \ref{thm:null_broader_monotone_control},  \ref{thm:null_broader_monotone_both} and \ref{thm:effect_range} 
and Corollaries \ref{cor:ci_max_imp_control} and \ref{cor:ci_max_imp_both}}

To prove Theorems \ref{thm:null_broader_monotone_control} and \ref{thm:null_broader_monotone_both}, we need  the following lemma. 

\begin{lemma}\label{lemma:valid}
	For any function $h(\cdot): \mathcal{Z} \rightarrow \mathbb{R}$, 
	let $\bm{Z}$ be the random treatment assignment vector, 
	and 
	$H(c) = \Pr\{h(\bm{Z}) \ge c \}$ be the tail probability of the random variable $h(\bm{Z})$. 
	Then, $H(h(\bm{Z}))$ is stochastically larger than or equal to $\Unif[0,1]$, in the sense that for any $\alpha \in (0, 1)$, 
	$
	\Pr\{ H(h(\bm{Z})) \le \alpha \} \le \alpha. 
	$
\end{lemma}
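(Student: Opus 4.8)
The plan is to exploit that $\mathcal{Z}$ is finite, so that $W \equiv h(\bm{Z})$ is a discrete random variable taking only finitely many values, and then reduce the claim to a short bookkeeping argument about the step function $H$. First I would enumerate the distinct values taken by $W$ as $v_1 < v_2 < \cdots < v_K$ and set $p_k = \Pr(W = v_k) > 0$, so that $\sum_{k=1}^K p_k = 1$ and $H(v_k) = \Pr(W \ge v_k) = \sum_{j=k}^{K} p_j$. In particular $H(v_1) = 1$, and $H(v_k)$ is nonincreasing in $k$. Note also that $H(W)$ takes the value $H(v_k)$ exactly on the event $\{W = v_k\}$.

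Next, fix $\alpha \in [0,1]$. Because $H(v_k)$ is nonincreasing in $k$, the index set $\{k : H(v_k) \le \alpha\}$ is upward closed in $\{1,\ldots,K\}$, hence either empty or of the form $\{k_0, k_0+1, \ldots, K\}$ for some $k_0$. In the empty case $\Pr\{H(W) \le \alpha\} = 0 \le \alpha$ trivially. In the nonempty case, $\{H(W) \le \alpha\} = \bigcup_{k \ge k_0}\{W = v_k\} = \{W \ge v_{k_0}\}$, so $\Pr\{H(W) \le \alpha\} = \sum_{k=k_0}^K p_k = H(v_{k_0}) \le \alpha$, where the final inequality holds precisely because $k_0$ itself belongs to $\{k : H(v_k) \le \alpha\}$. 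Since $\alpha \in [0,1]$ was arbitrary, this is exactly the asserted stochastic dominance over $\Unif[0,1]$, which then feeds directly into the validity proofs for $p_{\bm{Z},\bm{\delta}}$ and $\tilde p_{\bm{Z},\bm{\delta}}$.

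I do not expect a genuine obstacle here; the only point requiring care is the direction of the last inequality, which hinges on evaluating the tail function at the threshold value $v_{k_0}$ that lies \emph{in} the rejection region rather than just below it (equivalently, on the left-continuity of $H$). If one wishes to avoid invoking finiteness of $\mathcal{Z}$, the same argument extends to a general real random variable $W$ by taking $c^\ast = \inf\{c : H(c) \le \alpha\}$ and splitting into the cases $H(c^\ast) \le \alpha$ and $H(c^\ast) > \alpha$, using continuity from above or from below of the probability measure to compute $\Pr(W \ge c^\ast)$ or $\Pr(W > c^\ast)$ respectively; but the discrete version above already suffices since $\mathcal{Z} \subset \{0,1\}^n$.
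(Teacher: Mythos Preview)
Your argument is correct. The primary route you take differs from the paper's in that you exploit the finiteness of $\mathcal{Z}\subset\{0,1\}^n$ to reduce everything to a finite list of values $v_1<\cdots<v_K$, so the whole proof becomes a one-line computation $\Pr\{H(W)\le\alpha\}=H(v_{k_0})\le\alpha$ with no limits or infima. The paper instead works at the level of a general real random variable: it sets $c_\alpha=\inf\{c:H(c)\le\alpha\}$ and splits into the two cases $H(c_\alpha)\le\alpha$ and $H(c_\alpha)>\alpha$, using left-continuity of $H$ and a right-limit $\lim_{a\to 0+}H(c_\alpha+a)$ respectively. This is exactly the extension you sketch in your final paragraph. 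Your discrete version is shorter and fully adequate here; the paper's version has the advantage of not depending on $\mathcal{Z}$ being finite, which is not actually needed for this application but makes the lemma reusable in other settings.
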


\begin{proof}[Proof of Lemma \ref{lemma:valid}]
	By the property of probability measure, the tail probability function $H(\cdot)$ is decreasing and left-continuous. 
	For any $\alpha \in (0,1)$, define $c_{\alpha} = \inf_{c}\{c: H(c) \le \alpha\}$. 
	Below we consider two cases in which $H(c_\alpha) \le \alpha$ and  $H(c_\alpha) > \alpha$, respectively. 
	
	First, we consider the case in which $H(c_\alpha) \le \alpha$. 
	By definition, we can know that $H(h(\bm{Z})) \le \alpha$ is equivalent to $h(\bm{Z}) \ge c_{\alpha}$. 
	This further implies that 
	$
	\Pr\{ H(h(\bm{Z})) \le \alpha \} = 	\Pr\{ h(\bm{Z}) \ge c_{\alpha} \} = H(c_\alpha) \le \alpha.
	$
	
	Second, we consider the case in which $H(c_\alpha) > \alpha$. By definition, we can know that  $H(h(\bm{Z})) \le \alpha$ is equivalent to $h(\bm{Z}) > c_{\alpha}$.  This further implies that 
	$
	\Pr\{ H(h(\bm{Z})) \le \alpha \} = 	\Pr\{ h(\bm{Z}) > c_{\alpha} \} = \lim_{a \rightarrow 0+}\Pr\{ h(\bm{Z}) \ge c_{\alpha} + a \}
	=  \lim_{a \rightarrow 0+} H(c_{\alpha} + a) \le \alpha. 
	$
	
	From the above, Lemma \ref{lemma:valid} holds. 
\end{proof}

\begin{proof}[\bf \lxr Proof of Theorem \ref{thm:null_broader_monotone_control}(a)]
Suppose the bounded null $H_{\vecle\bm{\delta}}$ in \eqref{eq:null_less_delta} holds, i.e., the true treatment effect satisfies 
$
\bm{\tau} \vecle \bm{\delta}. 
$
Then, by the definition in \eqref{eq:impute_potential_outcome}, 
the imputed potential outcomes satisfy
\begin{align}\label{eq:order_impute_control_proof}
	\bm{Y}_{\bm{Z}, \bm{\delta}}(1) - \bm{Y}(1)
	= (\bm{1} - \bm{Z}) \circ  (\bm{\delta} - \bm{\tau})
	\vecge \bm{0}
	\ \ \ 
	\text{and}
	\ \ \ 
	\bm{Y}_{\bm{Z}, \bm{\delta}}(0) - \bm{Y}(0)
	=
	\bm{Z} \circ ( \bm{\tau} - \bm{\delta} ) \vecle \bm{0}. 
\end{align}
This is because any actual treated value is smaller than or equal to the imputed, 
and any actual control value is larger than or equal to the imputed.

We first consider the case in which the test statistic is effect increasing. 
Similar to \eqref{eq:impute_potential_outcome},
for any $\bm{a} \in \mathcal{Z}$, 
the imputed control potential outcome if the observed treatment assignment was $\bm{a}$ would be 
$
\bm{Y}_{\bm{a}, \bm{\delta}}(0) = \bm{Y}(\bm{a}) - \bm{a} \circ \bm{\delta} 
= \bm{a} \circ \{ \bm{Y}(1) - \bm{\delta} \} + (\bm{1}-\bm{a}) \circ \bm{Y}(0),
$
and 
the difference between $\bm{Y}_{\bm{Z}, \bm{\delta}}(0)$ and $\bm{Y}_{\bm{a}, \bm{\delta}}(0)$ has the following equivalent forms: 
\begin{align*}
	\bm{Y}_{\bm{Z}, \bm{\delta}}(0)  - \bm{Y}_{\bm{a}, \bm{\delta}}(0) & = 
	\bm{a} \circ \bm{Y}_{\bm{Z}, \bm{\delta}}(0) + (\bm{1}-\bm{a}) \circ \bm{Y}_{\bm{Z}, \bm{\delta}}(0)  - 
	\bm{a} \circ 
	\left\{ \bm{Y}(1) - \bm{\delta} \right\} 
	- (\bm{1} - \bm{a}) \circ \bm{Y}(0)\\
	& = 
	\bm{a} \circ 
	\left\{
	\bm{Y}_{\bm{Z}, \bm{\delta}}(1) -  \bm{Y}(1) \right\}
	+ (\bm{1} - \bm{a}) \circ 
	\left\{
	\bm{Y}_{\bm{Z}, \bm{\delta}}(0) - \bm{Y}(0)
	\right\}
\end{align*}
From \eqref{eq:order_impute_control_proof} and Definition \ref{def:effect_increase}, 
$t( \bm{a}, \bm{Y}_{\bm{Z}, \bm{\delta}}(0) ) \ge t( \bm{a}, \bm{Y}_{\bm{a}, \bm{\delta}}(0) )$. 
Thus, 
for any $c\in \mathbb{R}$,  the imputed tail probability $G_{\bm{Z}, \bm{\delta}}(c)$ in \eqref{eq:tail_prob_tilde} of the test statistic can be bounded by 
\begin{align*}
    G_{\bm{Z}, \bm{\delta}} (c) 
    = \sum_{\bm{a} \in \mathcal{Z}} \Pr(\bm{A} = \bm{a}) \I\left\{
    t(\bm{a}, \bm{Y}_{\bm{Z},\bm{\delta}}(0))
    \ge 
    c
    \right\}
    \ge 
    \sum_{\bm{a} \in \mathcal{Z}} \Pr(\bm{A} = \bm{a}) \I\left\{
    t(\bm{a}, \bm{Y}_{\bm{a},\bm{\delta}}(0))
    \ge 
    c
    \right\}, 
\end{align*}
where the right hand side is actually the tail probability of the true randomization distribution of the test statistic $t(\bm{Z}, \bm{Y}_{\bm{Z}, \bm{\delta}}(0))$. 
{\lxr From Lemma \ref{lemma:valid}}, 
we can derive
that $p_{\bm{Z}, \bm{\delta}} \equiv G_{\bm{Z}, \bm{\delta}}(t(\bm{Z}, \bm{Y}_{\bm{Z}, \bm{\delta}}(0)))$ is stochastically larger than or equal to $\Unif[0,1]$, i.e., it is a valid $p$-value for testing $H_{\vecle\bm{\delta}}$.
	
We then consider the case in which the test statistic is differential increasing. 
From \eqref{eq:order_impute_control_proof}, 
\begin{align}\label{eq:change_diff_increase_proof}
    t(\bm{a}, \bm{Y}(0)) - t(\bm{a}, \bm{Y}_{\bm{Z}, \bm{\delta}}(0))
	& = 
	t(\bm{a}, \bm{Y}_{\bm{Z}, \bm{\delta}}(0) + \bm{Z} \circ (  \bm{\delta} - \bm{\tau} ) ) - t(\bm{a}, \bm{Y}_{\bm{Z}, \bm{\delta}}(0)). 
\end{align}
From Definition \ref{def:diff_increase}, the change of the statistic in \eqref{eq:change_diff_increase_proof} is maximized at $\bm{a} = \bm{Z}$, which immediately implies that 
$
t(\bm{Z}, \bm{Y}(0) ) - t(\bm{Z}, \bm{Y}_{\bm{Z}, \bm{\delta}}(0))
\ge 
t(\bm{a}, \bm{Y}(0)) - t(\bm{a}, \bm{Y}_{\bm{Z}, \bm{\delta}}(0)) 
$
for any $\bm{a} \in \mathcal{Z}$. 
Thus, 
	$
	t(\bm{a}, \bm{Y}_{\bm{Z},\bm{\delta}}(0))
	- 
	t(\bm{Z}, \bm{Y}_{\bm{Z}, \bm{\delta}}(0))
	\ge 
	t(\bm{a}, \bm{Y}(0))
	- 
	t(\bm{Z}, \bm{Y}(0))
	$
for any $\bm{a} \in \mathcal{Z}$, 
and
the randomization $p$-value $p_{\bm{Z}, \bm{\delta}}$ in \eqref{eq:p_val_imp_control} is bounded by 
\begin{align*}
	p_{\bm{Z}, \bm{\delta}}
	& 
	= 
	\sum_{\bm{a} \in \mathcal{Z}} \Pr(\bm{A} = \bm{a}) \I\left\{
	t(\bm{a}, \bm{Y}_{\bm{Z},\bm{\delta}}(0))
	\ge 
	t(\bm{Z}, \bm{Y}_{\bm{Z}, \bm{\delta}}(0))
	\right\}\\
	& \ge 
	\sum_{\bm{a} \in \mathcal{Z}} \Pr(\bm{A} = \bm{a}) \I\left\{
	t( \bm{a}, \bm{Y}(0) )
	\ge 
	t(\bm{Z}, \bm{Y}(0))
	\right\},
\end{align*} 
which is actually the tail probability of the true randomization distribution of $t(\bm{Z}, \bm{Y}(0))$ evaluated at its realized value. 
{\lxr From Lemma \ref{lemma:valid}}, we can derive that 
$p_{\bm{Z}, \bm{\delta}}$ is stochastically larger than or equal to $\Unif[0,1]$, i.e., it is a valid $p$-value for testing $H_{\vecle\bm{\delta}}$.  
	
{\lxr From the above, Theorem \ref{thm:null_broader_monotone_control}(a) holds. 
As a side note,} although both effect increasing and differential increasing test statistics can lead to valid randomization tests for bounded null hypotheses, 
their proofs are rather different, as shown above. 
Specifically, 
the randomization $p$-value $p_{\bm{Z}, \bm{\delta}}$ using effect increasing test statistic is bounded by the tail probability of $t(\bm{Z}, \bm{Y}_{\bm{Z}, \bm{\delta}}(0))$ evaluated at its realized value, 
while that using differential increasing test statistic is 
bounded by the tail probability of $t(\bm{Z}, \bm{Y}(0))$ evaluated at its realized value. 
The two bounds are generally different, although they are both stochastically larger than or equal to $\Unif[0,1]$.
\end{proof}

\begin{proof}[\bf Proof of Theorem \ref{thm:null_broader_monotone_control}{\lxr (b)}]
    
    For any $\bm{\delta}, \overline{\bm{\delta}} \in \mathbb{R}^n$ with $\bm{\delta} \vecle \overline{\bm{\delta}}$, 
	by the definition in \eqref{eq:impute_potential_outcome}, 
	the differences between the imputed treatment and control potential outcomes under nulls $\overline{\bm{\delta}}$ and $\bm{\delta}$ satisfy
	\begin{align}\label{eq:diff_imput_poten_outcome_two_null}
	\bm{Y}_{\bm{Z}, \overline{\bm{\delta}}}(1) - \bm{Y}_{\bm{Z}, \bm{\delta}}(1)
	& = 
	\bm{Y}+ (\bm{1} - \bm{Z}) \circ \overline{ \bm{\delta} }
	- \left\{ \bm{Y}+ (\bm{1} - \bm{Z}) \circ \bm{\delta} \right\}
	= (\bm{1}-\bm{Z}) \circ ( \overline{ \bm{\delta} } - \bm{\delta}) \vecge \bm{0} , 
	\nonumber
	\\
	\bm{Y}_{\bm{Z}, \overline{ \bm{\delta} } }(0) - \bm{Y}_{\bm{Z}, \bm{\delta}}(0)
	& = 
	\bm{Y} - \bm{Z} \circ \overline{ \bm{\delta} }
	- \left\{ \bm{Y} - \bm{Z} \circ \bm{\delta}  \right\}
	= \bm{Z} \circ (\bm{\delta} - \overline{ \bm{\delta} }) \vecle \bm{0}. 
	\end{align}

	We first consider the case in which the test statistic is differential increasing. 
	From \eqref{eq:diff_imput_poten_outcome_two_null} and Definition \ref{def:diff_increase}, 
	for any $\bm{\delta}, \overline{\bm{\delta}} \in \mathbb{R}^n$ with $\bm{\delta} \vecle \overline{\bm{\delta}}$ and any $\bm{a} \in \mathcal{Z}$, 
	\begin{align*}
	t(\bm{Z}, \bm{Y}_{\bm{Z}, \bm{\delta} }(0) ) - 
	t( \bm{Z},  \bm{Y}_{\bm{Z}, \overline{ \bm{\delta} } }(0) )
	& = 
	t\left(\bm{Z}, \bm{Y}_{\bm{Z}, \overline{ \bm{\delta} } }(0) + \bm{Z} \circ (\overline{ \bm{\delta} } - \bm{\delta}) \right) 
	- 
	t\left( \bm{Z},  \bm{Y}_{\bm{Z}, \overline{ \bm{\delta} } }(0) \right)\\
	& \ge
	t\left(\bm{a}, \bm{Y}_{\bm{Z}, \overline{ \bm{\delta} } }(0) + \bm{Z} \circ (\overline{ \bm{\delta} } - \bm{\delta}) \right) 
	- 
	t\left( \bm{a},  \bm{Y}_{\bm{Z}, \overline{ \bm{\delta} } }(0) \right)\\
	& = 
	t(\bm{a}, \bm{Y}_{\bm{Z}, \bm{\delta} }(0) ) - 
	t( \bm{a},  \bm{Y}_{\bm{Z}, \overline{ \bm{\delta} } }(0) ).
	\end{align*}
	Consequently, for any $\bm{a} \in \mathcal{Z}$, 
	\begin{align*}
	t( \bm{a},  \bm{Y}_{\bm{Z}, \overline{ \bm{\delta} } }(0) )  - 
	t( \bm{Z},  \bm{Y}_{\bm{Z}, \overline{ \bm{\delta} } }(0) ) \ge 
	t(\bm{a}, \bm{Y}_{\bm{Z}, \bm{\delta} }(0) ) - t(\bm{Z}, \bm{Y}_{\bm{Z}, \bm{\delta} }(0) ).
	\end{align*}
	By the definition in \eqref{eq:p_val_imp_control}, 
	\begin{align*}
	p_{\bm{Z}, \overline{\bm{\delta}} }
	& = 
	\sum_{\bm{a} \in \mathcal{Z}} \Pr(\bm{A} = \bm{a}) \I\left\{
	t(\bm{a}, \bm{Y}_{\bm{Z}, \overline{\bm{\delta}} }(0))
	\ge 
	t(\bm{Z}, \bm{Y}_{\bm{Z}, \overline{ \bm{\delta}} }(0))
	\right\}\\
	& \ge 
	\sum_{\bm{a} \in \mathcal{Z}} \Pr(\bm{A} = \bm{a}) \I\left\{
	t(\bm{a}, \bm{Y}_{\bm{Z},\bm{\delta}}(0))
	\ge 
	t(\bm{Z}, \bm{Y}_{\bm{Z}, \bm{\delta}}(0))
	\right\}
	= p_{\bm{Z}, \bm{\delta}}. 
	\end{align*} 
	
	We then consider the case in which the test statistic is effect increasing and distribution free. 
	From \eqref{eq:diff_imput_poten_outcome_two_null} and Definition \ref{def:effect_increase}, 
	for any $\bm{\delta}, \overline{\bm{\delta}} \in \mathbb{R}^n$ with $\bm{\delta} \vecle \overline{\bm{\delta}}$, 
	\begin{align}\label{eq:order_t_control}
	t( \bm{Z}, \bm{Y}_{\bm{Z}, \bm{\delta}}(0) )
	& = 
	t\left(
	\bm{Z}, \bm{Y}_{\bm{Z}, \overline{ \bm{\delta} } }(0) + \bm{Z} \circ (\overline{ \bm{\delta} } - \bm{\delta})
	\right)
	\ge 
	t(
	\bm{Z}, \bm{Y}_{\bm{Z}, \overline{ \bm{\delta} } }(0) 
	).
	\end{align}
	From \eqref{eq:tail_prob_tilde} and Definition \ref{def:dist_free}, we can know that 
	\begin{align}\label{eq:G_0_dist_free}
	G_{\bm{Z}, \bm{\delta}} (c) \equiv \sum_{\bm{a} \in \mathcal{Z}} \Pr(\bm{A} = \bm{a}) \I\left\{
	t(\bm{a}, \bm{Y}_{\bm{Z},\bm{\delta}}(0))
	\ge 
	c
	\right\} = 
	G_0 (c), 
	\end{align}
	a function that does not depend on $\bm{Z}$ or $\bm{\delta}$. 
	Moreover, $G_0 (c)$ is decreasing in $c$. 
	From \eqref{eq:order_t_control} and \eqref{eq:G_0_dist_free}, 
	by the definition in \eqref{eq:p_val_imp_control}, 
	we then have 
	\begin{align*}
	p_{\bm{Z}, \bm{\delta}} & = G_{\bm{Z}, \bm{\delta}} \left\{ t(\bm{Z},\bm{Y}_{\bm{Z}, \bm{\delta}}(0))\right\} 
	= 
	G_0 \left\{ t(\bm{Z},\bm{Y}_{\bm{Z}, \bm{\delta}}(0))\right\} 
	\\
	& \le 
	G_0 \left\{ t(\bm{Z},\bm{Y}_{\bm{Z}, \overline{ \bm{\delta}} }(0))\right\}
	=
	G_{\bm{Z}, \overline{\bm{\delta}} } \left\{ t(\bm{Z},\bm{Y}_{\bm{Z}, \overline{ \bm{\delta}} }(0))\right\}
	=  p_{\bm{Z}, \overline{\bm{\delta}} }.
	\end{align*} 
	
	From the above, Theorem \ref{thm:null_broader_monotone_control}(b) holds. 
\end{proof}

\begin{proof}[\bf Proof of Theorem \ref{thm:null_broader_monotone_both}]
    Suppose Theorem \ref{thm:null_broader_monotone_both}(b) holds. 
    Then when the bounded null $H_{\vecle\bm{\delta}}$ in \eqref{eq:null_less_delta} holds, i.e., $\bm{\tau} \vecle \bm{\delta}$, 
    we must have $\tilde{p}_{\bm{Z}, \bm{\delta}} \ge \tilde{p}_{\bm{Z}, \bm{\tau}}$, which is stochastically larger than or equal to $\Unif[0,1]$ by the validity of Fisher randomization test for sharp null hypothesis $H_{\bm{\delta}}: \bm{\delta} = \bm{\tau}.$ 
    Therefore, Theorem \ref{thm:null_broader_monotone_both}(b) implies \ref{thm:null_broader_monotone_both}(a), and,  to prove Theorem \ref{thm:null_broader_monotone_both}, it suffices to prove Theorem \ref{thm:null_broader_monotone_both}(b). 
    Below we prove Theorem \ref{thm:null_broader_monotone_both}(b).

	Let $\bm{\delta}, \overline{\bm{\delta}} \in \mathbb{R}^n$ be two constant vectors satisfying $\bm{\delta} \vecle \overline{\bm{\delta}}$. 
	For any $\bm{a} \in \mathcal{Z}$, 
	the corresponding observed outcomes from
	the imputed potential outcomes under nulls $\overline{\bm{\delta}}$ and $\bm{\delta}$ satisfy 
	\begin{align*}
	\bm{Y}_{\bm{Z}, \overline{\bm{\delta}}}(\bm{a}) - \bm{Y}_{\bm{Z}, \bm{\delta}}(\bm{a})
	& = 
	\bm{a} \circ \bm{Y}_{\bm{Z}, \overline{\bm{\delta}}}(1) + (\bm{1} - \bm{a}) \circ \bm{Y}_{\bm{Z}, \overline{\bm{\delta}}}(0)
	- 
	\left\{
	\bm{a} \circ \bm{Y}_{\bm{Z}, \bm{\delta}}(1) + (\bm{1} - \bm{a}) \circ \bm{Y}_{\bm{Z}, \bm{\delta}}(0)
	\right\}\\
	& = 
	\bm{a} \circ 
	\left\{ \bm{Y}_{\bm{Z}, \overline{\bm{\delta}}}(1) - \bm{Y}_{\bm{Z}, \bm{\delta}}(1) \right\}
	+ 
	(\bm{1} - \bm{a}) \circ 
	\left\{
	\bm{Y}_{\bm{Z}, \overline{ \bm{\delta} } }(0) - \bm{Y}_{\bm{Z}, \bm{\delta}}(0)
	\right\}.
	\end{align*}
	From \eqref{eq:diff_imput_poten_outcome_two_null} and Definition \ref{def:effect_increase}, for any effect increasing statistic $t(\cdot, \cdot)$ and any $\bm{a} \in \mathcal{Z}$, 
	$
	t(\bm{a}, \bm{Y}_{\bm{Z}, \overline{\bm{\delta}}}(\bm{a}))
	\ge 
	t(\bm{a}, \bm{Y}_{\bm{Z}, \bm{\delta}}(\bm{a}) ).
	$
	By the definition in \eqref{eq:p_val_imp_both}, we then have 
	\begin{align*}
	\tilde{p}_{\bm{Z}, \overline{ \bm{\delta}} }
	& = 
	\sum_{\bm{a} \in \mathcal{Z}} \Pr(\bm{A} = \bm{a}) \I\left\{
	t(\bm{a}, \bm{Y}_{\bm{Z}, \overline{\bm{\delta}} }(\bm{a}))
	\ge 
	t(\bm{Z}, \bm{Y})
	\right\}\\
	& \ge
	\sum_{\bm{a} \in \mathcal{Z}} \Pr(\bm{A} = \bm{a}) \I\left\{
	t(\bm{a}, \bm{Y}_{\bm{Z}, \bm{\delta}}(\bm{a}))
	\ge 
	t(\bm{Z}, \bm{Y})
	\right\}
	= 
	\tilde{p}_{\bm{Z}, \bm{\delta} }. 
	\end{align*}
	Therefore, 
	Theorem \ref{thm:null_broader_monotone_both}(b) holds. 
\end{proof}

\begin{proof}[\bf Proof of Corollary \ref{cor:ci_max_imp_control}]
    We first prove (a) in Corollary \ref{cor:ci_max_imp_control}. 
	By definition, the coverage probability of the set $
	\{c: p_{\bm{Z}, c\bm{1}} > \alpha, c \in \mathbb{R} \}
	$ has the following equivalent forms:
	\begin{align}\label{eq:cov_ci_max_control}
	\Pr\left( \tau_{\max} \in \{c: p_{\bm{Z}, c\bm{1}} > \alpha, c \in \mathbb{R} \}
	\right)
	& = 
	\Pr\left( p_{\bm{Z}, \tau_{\max}\bm{1}} > \alpha
	\right)
	= 
	1 - \Pr\left(
	p_{\bm{Z}, \tau_{\max}\bm{1}} \le \alpha
	\right).
	\end{align}
	Because the test statistic is either differential increasing or effect increasing, 
	from Theorem \ref{thm:null_broader_monotone_control}(a), $p_{\bm{Z}, \tau_{\max}\bm{1}}$ is a valid $p$-value for testing the bounded null $H_{\tau_{\max}\bm{1}}$. Because the null $H_{\tau_{\max}\bm{1}}$ holds by definition, we have $\Pr(
	p_{\bm{Z}, \tau_{\max}\bm{1}} \le \alpha
	) \le \alpha,$ 
	and thus the coverage probability \eqref{eq:cov_ci_max_control} is larger than or equal to $1-\alpha$. 
	Therefore, $\{c: p_{\bm{Z}, c\bm{1}} > \alpha, c \in \mathbb{R} \}$ is a $1-\alpha$ confidence set for $\tau_{\max}$.

	We then prove (b) in Corollary \ref{cor:ci_max_imp_control}. 
	Because the test statistic is either differential increasing, or both effect increasing and distribution free, 
    from Theorem \ref{thm:null_broader_monotone_control}(b), $p_{\bm{Z}, c\bm{1}}$ is %
	increasing 
	in $c$, which implies that the confidence set must have the form of $(\underline{c}, \infty)$ or $[\underline{c}, \infty)$ with $\underline{c} = \inf \{c: p_{\bm{Z}, c\bm{1}} > \alpha, c \in \mathbb{R} \}$. 
	
	From the above, Corollary \ref{cor:ci_max_imp_control} holds. 
\end{proof}

\begin{proof}[\bf Proof of Corollary \ref{cor:ci_max_imp_both}]
    Corollary \ref{cor:ci_max_imp_both} follows from Theorem \ref{thm:null_broader_monotone_both}, and its proof is almost the same as that for Corollary \ref{cor:ci_max_imp_control}. 
    Thus, we omit its proof here. 
\end{proof}

\begin{proof}[\bf Proof of Theorem \ref{thm:effect_range}]
	First, the coverage probability of the interval $[\max\{\hat{\tau}_{\max}^L - \hat{\tau}_{\min}^U, 0\}, \infty)$ is 
	\begin{align*}
	& \quad\ \Pr\left(
	\tau_{\max} - \tau_{\min} \ge \max\{\hat{\tau}_{\max}^L - \hat{\tau}_{\min}^U, 0\}
	\right)\\
	& 
	= \Pr\left(
	\tau_{\max} - \tau_{\min} \ge \hat{\tau}_{\max}^L - \hat{\tau}_{\min}^U
	\right)
	\ge 
	\Pr\left( 
	\tau_{\max}  \ge \hat{\tau}_{\max}^L,
	\tau_{\min} \le \hat{\tau}_{\min}^U
	\right)
	\\
	& = 1 - 
	\Pr\left( 
	\tau_{\max}  < \hat{\tau}_{\max}^L \text{ or }
	\tau_{\min} >  \hat{\tau}_{\min}^U
	\right)
	\ge 1 - \Pr\left( \tau_{\max}  < \hat{\tau}_{\max}^L \right) - \Pr\left( \tau_{\min} >  \hat{\tau}_{\min}^U \right)
	 \\
	 & \ge 1 - \alpha/2 - \alpha/2 = 1-\alpha, 
	\end{align*}
	where the last equality holds because $[\hat{\tau}_{\max}^L, \infty)$ and $(-\infty, \hat{\tau}_{\min}^U]$ are $1-\alpha/2$ confidence intervals for $\tau_{\max}$ and $\tau_{\min}$, respectively. 
	Thus, (a) in Theorem \ref{thm:effect_range} holds. 
	
	Second, suppose the null hypothesis of constant treatment effect is true. 
	Then we must have $\tau_{\max} - \tau_{\min} = 0$. 
	From (a) in Theorem \ref{thm:effect_range},  
	\begin{align*}
	\Pr\left(
	\hat{\tau}_{\max}^L - \hat{\tau}_{\min}^U > 0
	\right)
	& = 
	\Pr\left(
	\tau_{\max} - \tau_{\min} \notin [\hat{\tau}_{\max}^L - \hat{\tau}_{\min}^U, \infty)
	\right)\\
	& = 
	1 - 
	\Pr\left(
	\tau_{\max} - \tau_{\min} \in [\hat{\tau}_{\max}^L - \hat{\tau}_{\min}^U, \infty)
	\right)
	\le 1 - (1-\alpha)\\
	& = \alpha,
	\end{align*}
	which implies that the probability of type-I error is at most $\alpha$. Thus, (b) in Theorem \ref{thm:effect_range} holds. 
	
	From the above, Theorem \ref{thm:effect_range} holds. 
\end{proof}

\subsection{Additional comments on randomization $p$-values}\label{sec:comment_p_value}

\begin{proof}[\bf Example of a non-monotone $p$-value that is valid for bounded null]
    Here we 
    present a numerical example showing that a valid $p$-value for testing the bounded null may not have the monotonicity property. 
    We consider a CRE with 3 units, where 2 units are assigned to treatment group and the remaining 1 is assigned to control group. 
    Suppose that the observed treatment assignment vector is $\bm{Z} = (1, 1, 0)^\top$ and the observed outcome vector is $\bm{Y} = (0, 0, 0)^\top$. 
    We consider using the randomization $p$-value $p_{\bm{Z}, \bm{\delta}}$ in \eqref{eq:p_val_imp_control} 
    with test statistic $t(\bm{z}, \bm{y}) = y_{i(\bm{z})}$, where $i(\bm{z}) = \argmin_{i: z_i = 1} i$. 
    From 
    the examples of statistics discussed at the end of 
    Appendix \ref{sec:proof_property}, 
    the statistic $t(\bm{z}, \bm{y}) = y_{i(\bm{z})}$ is effect increasing. 
    From Theorem \ref{thm:null_broader_monotone_control}(a), the randomization $p$-value $p_{\bm{Z}, \bm{\delta}}$ is valid for testing the bounded null $H_{\vecle\bm{\delta}}$. 
    However, the randomization $p$-value $p_{\bm{Z}, \bm{\delta}}$ is not monotone in $\bm{\delta}$. 
    For example, 
    with $\underline{\bm{\delta}} = (-1, 0, 0)^\top \vecle \bm{\delta} = (0, 0, 0)^\top \vecle \overline{\bm{\delta}} = (0, 1, 0)^\top$, 
    we have 
    $p_{\bm{Z}, \underline{\bm{\delta}}} = 2/3 < p_{\bm{Z}, \bm{\delta}} = 1 >  p_{\bm{Z}, \overline{\bm{\delta}}} = 2/3.$
\end{proof}

\begin{proof}[\bf Comment on the equivalence between randomization $p$-values $p_{\bm{Z}, c\bm{1}}$ and $\tilde{p}_{\bm{Z}, c\bm{1}}$]
    We are going to show that the two randomization $p$-values $p_{\bm{Z}, c\bm{1}}$ in \eqref{eq:p_val_imp_both} and $\tilde{p}_{\bm{Z}, c\bm{1}}$ in \eqref{eq:p_val_imp_control} are equivalent for testing the null hypothesis $H_{c\bm{1}}$ of constant effect $c$ for any $c\in \mathbb{R}$,  if  both of them use difference-in-means as the test statistics. 
For any $c\in \mathbb{R}$, 
the difference-in-means statistics for the two randomization $p$-values  evaluated at assignment $\bm{a} \in \mathcal{Z}$ are, respectively, 
\begin{align*}
t(\bm{a}, \bm{Y}_{\bm{Z}, c\bm{1}}(0) ) 
& = 
\frac{1}{m}\bm{a}^\top\bm{Y}_{\bm{Z}, c\bm{1}}(0)
- 
\frac{1}{n-m} (\bm{1}-\bm{a})^\top \bm{Y}_{\bm{Z}, c\bm{1}}(0), 
\end{align*}
and
\begin{align*}
t(\bm{a}, \bm{Y}_{\bm{Z}, c\bm{1}}(\bm{a}) ) 
& = \frac{1}{m}\bm{a}^\top\bm{Y}_{\bm{Z}, c\bm{1}}(1)
- 
\frac{1}{n-m} (\bm{1}-\bm{a})^\top \bm{Y}_{\bm{Z}, c\bm{1}}(0)
\\
& = 
\frac{1}{m}\bm{a}^\top \left\{ \bm{Y}_{\bm{Z}, c\bm{1}}(0) + c\bm{1} \right\}
- 
\frac{1}{n-m} (\bm{1}-\bm{a})^\top \bm{Y}_{\bm{Z}, c\bm{1}}(0)
\\
& = \frac{1}{m}\bm{a}^\top \bm{Y}_{\bm{Z}, c\bm{1}}(0) 
- 
\frac{1}{n-m} (\bm{1}-\bm{a})^\top \bm{Y}_{\bm{Z}, c\bm{1}}(0) 
+ 
\frac{1}{m}\bm{a}^\top( c\bm{1} )\\
& = t(\bm{a}, \bm{Y}_{\bm{Z}, c\bm{1}}(0) )  + c. 
\end{align*}
Therefore, the randomization $p$-values $p_{\bm{Z}, c\bm{1}}$ and $\tilde{p}_{\bm{Z}, c\bm{1}}$ in \eqref{eq:p_val_imp_control} and \eqref{eq:p_val_imp_both} satisfy 
\begin{align*}
p_{\bm{Z}, c\bm{1}}
& = 
\sum_{\bm{a} \in \mathcal{Z}} \Pr(\bm{A} = \bm{a}) \I\left\{
t(\bm{a}, \bm{Y}_{\bm{Z},c\bm{1}}(0))
\ge 
t(\bm{Z}, \bm{Y}_{\bm{Z}, c\bm{1}}(0))
\right\}\\
& = 
\sum_{\bm{a} \in \mathcal{Z}} \Pr(\bm{A} = \bm{a}) \I\left\{
t(\bm{a}, \bm{Y}_{\bm{Z},c\bm{1}}(\bm{a})) - c
\ge 
t(\bm{Z}, \bm{Y}_{\bm{Z}, c\bm{1}}(\bm{Z})) - c
\right\}\\
& = 
\sum_{\bm{a} \in \mathcal{Z}} \Pr(\bm{A} = \bm{a}) \I\left\{
t(\bm{a}, \bm{Y}_{\bm{Z},c\bm{1}}(\bm{a})) 
\ge 
t(\bm{Z}, \bm{Y} )
\right\}\\
& = \tilde{p}_{\bm{Z}, c\bm{1}}.
\end{align*} 
Therefore,  for any $c\in \mathbb{R}$, the randomization $p$-values $p_{\bm{Z}, c\bm{1}}$ and $\tilde{p}_{\bm{Z}, c\bm{1}}$ with the difference-in-means statistics are equivalent for testing the null hypothesis $H_{c\bm{1}}$ of constant effect $c$.
\end{proof}

\section{Proofs for randomization inference on quantiles of individual effects}

\begin{proof}[\bf Proof of Theorem \ref{thm:general_null}]
	From the property of usual randomization test, $p_{\bm{Z}, \bm{\tau}}$ is a valid $p$-value, i.e., 
	$\Pr\left(
	p_{\bm{Z}, \bm{\tau}} \le \alpha
	\right) \le \alpha$ for any $\alpha \in [0,1]$. 
	When the null hypothesis $\bm{\tau} \in \mathcal{H}$ is true, for any $\alpha \in [0,1]$, we have
	$
	\Pr\left(
	\sup_{\bm{\delta} \in \mathcal{H}} p_{\bm{Z}, \bm{\delta}} \le \alpha
	\right)
	\le \Pr\left(
	p_{\bm{Z}, \bm{\tau}} \le \alpha
	\right)
	\le \alpha.
	$
	Thus, $\sup_{\bm{\delta} \in \mathcal{H}} p_{\bm{Z}, \bm{\delta}}$, 
	is a valid $p$-value for testing the null hypothesis of $\bm{\tau} \in \mathcal{H}$. 
    Consequently, the upper bound on the right hand side of \eqref{eq:sup_p_val} is also a valid $p$-value for testing the null of $\bm{\tau} \in \mathcal{H}$. 
    Therefore, Theorem \ref{thm:general_null} holds. 
\end{proof}

To prove Theorem \ref{thm:test_Hkc}, we need the following three lemmas. 

\begin{lemma}\label{lemma:rank_score_prop}
	When the treatment assignment $\bm{Z}$ is exchangeable as in Definition \ref{def:exchange} and independent of the ordering of the units, 
	the rank score statistic in Definition \ref{def:rank_score} using ``first'' method for ties is both effect increasing as in Definition \ref{def:effect_increase} and distribution free as in Definition \ref{def:dist_free}. 
\end{lemma}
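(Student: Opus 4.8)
The plan is to recognize that this lemma is simply the conjunction of the two conclusions of Proposition \ref{prop:stat_property_rank}, specialized to the present setting, so the entire argument reduces to matching hypotheses and invoking that proposition. Recall that a rank score statistic in Definition \ref{def:rank_score} has exactly the form $t(\bm{z}, \bm{y}) = \sum_{i=1}^n z_i \phi(\rank_i(\bm{y}))$ of \eqref{eq:stat_class_rank}, with a monotone increasing score function $\phi(\cdot)$ and with ties broken by the ``first'' method. Thus I will treat $t(\cdot, \cdot)$ as an instance of $t_2(\cdot, \cdot)$ and verify the two properties separately.

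First I would establish the effect increasing property. Since $\phi(\cdot)$ is monotone increasing and the ``first'' method is used for ties, the two hypotheses of Proposition \ref{prop:stat_property_rank}(a) hold verbatim, so $t(\cdot, \cdot)$ is effect increasing as in Definition \ref{def:effect_increase}. I would emphasize that this half requires no assumption on the treatment assignment mechanism, because effect increasing is a purely deterministic property of the map $(\bm{z}, \bm{y}) \mapsto t(\bm{z}, \bm{y})$; the substantive work is already carried out in Lemmas \ref{lemma:rank_to_delta}--\ref{lemma:rank_sum_eff_incre}, which decompose each rank as $\rank_i(\bm{y}) = \sum_{j} \delta_{ij}(y_i, y_j)$ and show that the rank sum over treated units cannot decrease when treated outcomes are raised and control outcomes are lowered.

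Next I would establish the distribution free property. The lemma assumes that $\bm{Z}$ is exchangeable in the sense of Definition \ref{def:exchange} and independent of the ordering of the units, and that the ``first'' method is used for ties. These are precisely the three hypotheses of Proposition \ref{prop:stat_property_rank}(b), so $t(\cdot, \cdot)$ is distribution free as in Definition \ref{def:dist_free}. The one point worth restating for clarity is the mechanism behind part (b): because the ``first'' method always assigns the ranks $1, 2, \ldots, n$ as a permutation of the coordinates, for any fixed $\bm{y}$ there is a permutation $\pi$ (depending on $\bm{y}$ and on the original labeling) with $t(\bm{Z}, \bm{y}) = \sum_{i=1}^n Z_{\pi(i)} \phi(i)$; exchangeability of $\bm{Z}$ together with its independence of the ordering then makes this equal in distribution to $\sum_{i=1}^n Z_i \phi(i)$, a law that does not depend on $\bm{y}$.

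I do not anticipate any genuine obstacle at the level of this lemma, since both halves follow directly from Proposition \ref{prop:stat_property_rank} once the hypotheses are lined up. The real difficulty lives entirely inside the proof of that proposition---in particular in Lemma \ref{lemma:rank_sum_eff_incre}, whose transitivity bookkeeping via the comparison indicators $\delta_{ij}$ is the technical heart, and in the permutation-and-exchangeability step underpinning the distribution free claim. Accordingly, the proof I would write is essentially a citation of Proposition \ref{prop:stat_property_rank}(a) and (b), accompanied only by a brief reminder that the ``first'' tie-breaking rule guarantees a genuine permutation of $\{1, \ldots, n\}$, which is what allows the distribution of $t(\bm{Z}, \bm{y})$ to be freed from the value of $\bm{y}$.
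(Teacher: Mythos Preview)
Your proposal is correct and takes essentially the same approach as the paper, which simply states that the lemma follows immediately from Proposition~\ref{prop:stat_property_rank}. Your additional commentary on the underlying mechanisms (Lemmas~\ref{lemma:rank_to_delta}--\ref{lemma:rank_sum_eff_incre} and the permutation argument) is accurate but goes beyond what the paper itself writes at this point.
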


\begin{proof}[Proof of Lemma \ref{lemma:rank_score_prop}]
	Lemma \ref{lemma:rank_score_prop} follows immediately from Proposition \ref{prop:stat_property_rank}. 
\end{proof}

\begin{lemma}\label{lemma:inf_t_over_Hk0}
	For any $1\le k\le n$, $\bm{z} \in \{0,1\}^n$ and $\bm{y} \in \mathbb{R}^n$, 
	let $m = \sum_{i=1}^n z_i$,  and 
	$\mathcal{I}_k$ be the set of indices for the largest $\min(n-k,m)$ coordinates of $\bm{y}$ with corresponding $z_i$'s being 1, i.e., 
	\begin{align*}
	    \mathcal{I}_k \subset \{i: z_i = 1, 1\le i \le n\}, \ \ 
	    |\mathcal{I}_k| = \min(n-k,m), \ \ 
	    \inf_{i \in \mathcal{I}_k} \rank_i(\bm{y}) > \sup_{i: i \notin \mathcal{I}_k, z_i=1} \rank_i(\bm{y}), 
	\end{align*}
	where $\rank(\cdot)$ uses the ``first'' method for ties.
	Then for any rank score statistic $t(\cdot, \cdot)$,
	we have
	$\inf_{\bm{\delta} \in \mathcal{H}_{k,0}} t(\bm{z},\bm{y} - \bm{z} \circ \bm{\delta}) = t(\bm{z}, \bm{y} - \bm{z} \circ \bm{\xi} )$, where $\bm{\xi} = (\xi_1, \xi_2, \ldots, \xi_n)'$ is defined as $\xi_i = \infty$ if $i\in \mathcal{I}_k$ and $0$ otherwise.
\end{lemma}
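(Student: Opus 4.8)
The plan is to establish two inequalities: (i) $\bm{\xi}\in\mathcal{H}_{k,0}$, so that the infimum is at most $t(\bm{z},\bm{y}-\bm{z}\circ\bm{\xi})$; and (ii) $t(\bm{z},\bm{y}-\bm{z}\circ\bm{\delta})\ge t(\bm{z},\bm{y}-\bm{z}\circ\bm{\xi})$ for every $\bm{\delta}\in\mathcal{H}_{k,0}$. Claim (i) is immediate, since $\bm{\xi}$ has exactly $|\mathcal{I}_k|=\min(n-k,m)\le n-k$ positive coordinates, hence $\xi_{(k)}\le0$ (interpreting $\infty$ as a sufficiently large finite constant, which changes no rank and hence not $t$, as in the remark following the statement). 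Write $T=\{i:z_i=1\}$, $r=\min(n-k,m)$, and note $t(\bm{z},\bm{y}-\bm{z}\circ\bm{\delta})$ depends on $\bm{\delta}$ only through $\{w_i:=y_i-z_i\delta_i:i\in T\}$ because the control coordinates of $\bm{z}\circ\bm{\delta}$ vanish. The guiding intuition is that, viewed as a function of a single treated unit's modified outcome (others fixed), $t$ is non-decreasing—crossing another treated unit only swaps two rank contributions of $\phi$, while crossing a control unit upward raises a rank; so one wants each treated outcome as small as possible, but only $r$ of them may be pushed below their original value, and it is best to spend this budget on the $r$ treated units with the largest outcomes.

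For (ii) I would compare the two statistics term by term after sorting treated units by (modified) outcome. Let $u_1\le\cdots\le u_m$ be the treated original outcomes sorted by the ``first'' rule, $\pi_t$ the corresponding indices (so $\mathcal{I}_k=\{\pi_{m-r+1},\dots,\pi_m\}$), and $b_t$ the number of non-treated units ``first''-below $\pi_t$, so $\rank_{\pi_t}(\bm{y})=t+b_t$ and $0\le b_1\le\cdots\le b_m$. A direct rank computation under $\bm{y}-\bm{z}\circ\bm{\xi}$ shows the $r$ units of $\mathcal{I}_k$ occupy ranks $1,\dots,r$ and each $\pi_t$ ($t\le m-r$) has rank $r+t+b_t$, so
\begin{align*}
t(\bm{z},\bm{y}-\bm{z}\circ\bm{\xi})=\sum_{\ell=1}^{r}\phi(\ell)+\sum_{t=1}^{m-r}\phi(r+t+b_t).
\end{align*}
For an arbitrary $\bm{\delta}\in\mathcal{H}_{k,0}$, sort treated units by $w_i$ (``first'' for ties) as $w_{\rho(1)}\le\cdots\le w_{\rho(m)}$ and let $c_\ell$ be the number of non-treated units ``first''-below $\rho(\ell)$, so $\rank_{\rho(\ell)}(\bm{w})=\ell+c_\ell$, $0\le c_1\le\cdots\le c_m$, and $t(\bm{z},\bm{y}-\bm{z}\circ\bm{\delta})=\sum_{\ell=1}^m\phi(\ell+c_\ell)$. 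Both rank sequences are increasing in their index, so since $\phi$ is increasing it suffices to show $\ell+c_\ell$ is at least the $\ell$-th term of the displayed sequence: this is trivial for $\ell\le r$ (the right side is $\ell$), and for $\ell=r+t$ it reduces to $c_{r+t}\ge b_t$.

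The crux is $c_{r+t}\ge b_t$, which I would get from the counting bound that at most $r+t-1$ treated units can have $w_i<u_t$: such a unit either has $y_i<u_t$ (at most $t-1$ of these, by definition of $u_t$ and the ``first'' order) or has $y_i\ge u_t>w_i$, forcing $\delta_i>0$ (at most $n-k$ of those, and $n-k=r$ whenever $m-r\ge1$). Hence the $(r+t)$-th smallest modified treated outcome satisfies $w_{\rho(r+t)}\ge u_t$, and since the non-treated outcomes coincide in both configurations, every non-treated unit ``first''-below $\pi_t$ under $\bm{y}$ is also ``first''-below $\rho(r+t)$ under $\bm{w}$, giving $c_{r+t}\ge b_t$. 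Summing $\phi$ over the dominating rank sequence yields (ii), and with (i) this completes the proof; the general-$c$ case of Theorem \ref{thm:test_Hkc} then follows by the translation $\bm{\delta}\mapsto\bm{\delta}-c\bm{1}$, which fixes $\mathcal{I}_k$. The main obstacle—and the step needing the most care—is the ``first''-tie-breaking bookkeeping in the implication ``$w_{\rho(r+t)}\ge u_t\Rightarrow c_{r+t}\ge b_t$'' when the two values are exactly equal; I would handle this exactly as in the proof of Lemma \ref{lemma:rank_sum_eff_incre}, writing ranks through the comparison functions $\delta_{ij}(\cdot,\cdot)$ of Lemma \ref{lemma:rank_to_delta} and invoking their transitivity, Lemma \ref{lemma:prop_delta}(b), together with an appropriate choice of the ordering $\rho$ among treated units sharing a $w$-value.
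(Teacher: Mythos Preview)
Your argument is correct and reaches the same destination as the paper---termwise domination of the sorted treated ranks---but by a more direct route. The paper proceeds in two stages: it first uses the effect-increasing property (Lemma~\ref{lemma:rank_sum_eff_incre}) to replace an arbitrary $\bm{\delta}\in\mathcal{H}_{k,0}$ by some $\{0,\infty\}$-valued $\bm{\eta}$ supported on a set $\mathcal{J}_k\subset T$ of size $r$, and only then compares the finitely many choices of $\mathcal{J}_k$ via an explicit rank computation showing $r_{j_p}+r+p-j_p\ge r_p+r$. You skip the intermediate reduction and compare $\bm{\delta}$ to $\bm{\xi}$ directly via the counting bound ``at most $t-1$ treated units have $y_i<u_t$ and at most $r$ have $\delta_i>0$,'' which is a neat shortcut that the paper's modular decomposition does not exploit. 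The paper's two-step structure buys a cleaner separation of concerns (step~1 is a black-box application of effect-increasing; step~4 is pure combinatorics on ranks), whereas your one-step argument is shorter but front-loads the bookkeeping.

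One small imprecision: you cannot actually ``choose'' $\rho$, since it is fixed by the ``first'' rule on $\bm w$, and hence $c_\ell=\rank_{\rho(\ell)}(\bm w)-\ell$ is determined. Fortunately no choice is needed. The clean way to close the tie case is a pigeonhole-plus-transitivity step in the spirit you indicate: among the $m-t+1$ treated units $i$ with $\delta_{i,\pi_t}(y_i,y_{\pi_t})=1$, at least $m-t+1-r$ have $\delta_i\le0$ and hence $w_i\ge y_i$, so $\delta_{i,\pi_t}(w_i,y_{\pi_t})=1$ by Lemma~\ref{lemma:prop_delta}(a); since there are only $m-r-t$ $\rho$-positions strictly above $r+t$, one such $i$ satisfies $i=\rho(s)$ with $s\le r+t$, and then transitivity (Lemma~\ref{lemma:prop_delta}(b)) through $\delta_{\rho(r+t),\rho(s)}(w_{\rho(r+t)},w_{\rho(s)})=1$ gives $\delta_{\rho(r+t),\pi_t}(w_{\rho(r+t)},y_{\pi_t})=1$, whence $c_{r+t}\ge b_t$.
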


\begin{proof}[Proof of Lemma \ref{lemma:inf_t_over_Hk0}]	
	To ease the description, define $l=\min(n-k, m)$ and $\mathcal{T} = \{i: z_i = 1, 1\le i \le n\}$. 
	Let $r_1 < r_2 < \ldots < r_m$ be the ranks of $y_i$'s with indices in $\mathcal{T}$, i.e., 
	$\{r_1, r_2, \ldots, r_m\} = \{\rank_i(\bm{y}): i \in \mathcal{T}\}$. 
	
	First, we show that for any $\bm{\delta}\in \mathcal{H}_{k,0}$, there exists a set $\mathcal{J}_k \subset \mathcal{T}$ of cardinality $l$ such that 
	$t(\bm{z}, \bm{y} - \bm{z} \circ \bm{\delta}) \ge t(\bm{z}, \bm{y} - \bm{z} \circ \bm{\eta})$, where $\bm{\eta} = (\eta_1, \eta_2, \ldots, \eta_n)'$  satisfies $\eta_i = \infty$ if $i \in  \mathcal{J}_k$ and zero otherwise. 
	For any $\bm{\delta} \in \mathcal{H}_{k,0}$, 
	let $\mathcal{J}_k$ be the set of indices for the largest $l$ coordinates of $\bm{\delta}$ with indices in $\mathcal{T}$, i.e., $\mathcal{J}_k\subset \mathcal{T}$, $|\mathcal{J}_k| = l$ and $\min_{i \in \mathcal{J}_k} \delta_i > \max_{i: i \notin \mathcal{J}_k, z_i = 1}\delta_i$. Then by the definition of $\mathcal{H}_{k,0}$ in Section \ref{sec:test_quantile}, for any $i \in \mathcal{T} \setminus \mathcal{J}_k$, we must have $\delta_i \le 0$. 
	Define a vector $\bm{\eta} = (\eta_1, \eta_2, \ldots, \eta_n)'$ with $\eta_i = \infty$ if $i \in \mathcal{J}_k$ and $0$ otherwise. 
	For any $i \in \mathcal{T}$, we have $\delta_i \le \eta_i$ and thus $y_i - z_i \delta_i \ge y_i - z_i \eta_i$. For any $i \notin \mathcal{T}$, we have $y_i - z_i \delta_i = y_i =  y_i - z_i \eta_i$. 
	Because the test statistic is effect increasing, from Definition \ref{def:effect_increase}, we have 
	$
	t(\bm{z}, \bm{y} - \bm{z} \circ \bm{\delta}) \ge t(\bm{z}, \bm{y} - \bm{z} \circ \bm{\eta}). 
	$
	
	Second,  for any $\mathcal{J}_k \subset \mathcal{T}$ of cardinality  $l$, we calculate the value of $t(\bm{z}, \bm{y} - \bm{z} \circ \bm{\eta} )$,  where $\bm{\eta} = (\eta_1, \eta_2, \ldots, \eta_n)'$  satisfies $\eta_i = \infty$ if $i \in  \mathcal{J}_k$ and zero otherwise. 
	Recall that $\{r_1 < r_2 < \ldots < r_m\} = \{\rank_i(\bm{y}): i \in \mathcal{T}\}$. 
	Let 
	$r_{j_1} < r_{j_2} < \ldots, r_{j_{m-l}}$ be the ranks of the $y_i$'s with indices in $\mathcal{T}\setminus \mathcal{J}_k$,  
	and 	
	$r_{j_{m-l+1}} < r_{j_{m-l+2}}< \ldots < r_{j_m}$ be the ranks of the $y_i$'s with indices in $\mathcal{J}_k$, i.e., 
	\begin{align*}
	\{\rank_i(\bm{y}): i \in \mathcal{T} \setminus \mathcal{J}_k \} = \{ r_{j_1}, r_{j_2}, \ldots, r_{j_{m-l}} \}, 
	\quad 
	\{\rank_i(\bm{y}): i \in \mathcal{J}_k \} = \{ r_{j_{m-l+1}}, r_{j_{m-l+2}}, \ldots, r_{j_m} \}
	\end{align*}
	where $\{j_1, j_2, \ldots, j_m\}$ is a permutation of $\{1, 2, \ldots, m\}.$ 
	By the definition of $\mathcal{J}_k$ and $\bm{\eta}$, $y_i - z_i\eta_i = -\infty$ for $i \in \mathcal{J}_k$, and thus the ranks of $y_i-z_i \eta_i$'s with $i\in \mathcal{J}_k$ must become $\{1, 2, \ldots, l\}$. 
	For each $i\in \mathcal{T}\setminus \mathcal{J}_k$ with $\rank_i(\bm{y}) = r_{j_p}$ for some $1\le p \le m-l$, there are $(m-j_p)$ coordinates of $\bm{y}$ with indices in $\mathcal{T}$ having larger ranks than $y_i$. However, for coordinates of $\bm{y}-\bm{z}\circ\bm{\eta}$ with indices in $\mathcal{T}$, there are only $m-l-p$ of them ranked higher than $y_i-z_i\eta_i = y_i$. 
	Note that $y_j - z_j \eta_j = y_j$ for $j \notin \mathcal{T}$. 
	These imply that the rank of $y_i-z_i\eta_i$ increases by $(m-j_p) - (m-l-p)=l+p-j_p$ compared to the  rank of the corresponding $y_i$. 
	Consequently, 
	\begin{align}\label{eq:rank_y_eta}
	\{\rank_i(\bm{y}-\bm{z}\circ \bm{\eta}): i \in \mathcal{J}_k \} & = \{ 1, 2, \ldots, l \}, \\ 
	\{\rank_i(\bm{y}-\bm{z}\circ \bm{\eta}): i \in \mathcal{T} \setminus \mathcal{J}_k \} 
	& = \{ r_{j_1}+l+1 - j_1, \ r_{j_2}+l+2-j_2, \ \ldots, \  r_{j_{m-l}}+l +(m-l) - j_{m-l} \}. \nonumber
	\end{align}
	Therefore, the value of the statistic $t(\bm{z}, \bm{y} - \bm{z} \circ \bm{\eta} )$ is 
	\begin{align}\label{eq:value_t_eta_J}
	t(\bm{z}, \bm{y} - \bm{z} \circ \bm{\eta}) & = \sum_{i\in \mathcal{T}}  \phi\left\{ \rank_i(\bm{y} - \bm{z} \circ \bm{\eta}) \right\} =  \sum_{i\in \mathcal{J}_k} \phi\left\{ \rank_i(\bm{y} - \bm{z} \circ \bm{\eta}) \right\}
	+ 
	\sum_{i\in \mathcal{T} \setminus \mathcal{J}_k} 
	\phi\left\{ \rank_i(\bm{y} - \bm{z} \circ \bm{\eta}) \right\}
	\nonumber
	\\
	& = \sum_{i=1}^l \phi(i)  + \sum_{p=1}^{m-l} \phi\left( r_{j_p} + l + p - j_p \right). 
	\end{align}

	Third, we calculate the value of $t(\bm{z}, \bm{y}-\bm{z}\circ \bm{\xi})$, which is a  special case of \eqref{eq:value_t_eta_J} with $\mathcal{J}_k$ and $\bm{\eta}$ being $\mathcal{I}_k$ and $\bm{\xi}$. 
	By definition, we can  know that the ranks of $y_i$'s with indices in $\mathcal{T}\setminus \mathcal{I}_k$ becomes $\{r_1, r_2, \ldots, r_{m-l}\}$,
	and 
	the ranks of $y_i$'s with indices in $\mathcal{I}_k$ must be $\{r_{m-l+1}, r_{m-l+2}, \ldots, r_{m}\}$, i.e.,  
	\begin{align*}
	\{\rank_i(\bm{y}): i \in \mathcal{T} \setminus \mathcal{I}_k \} = \{ r_1, r_2, \ldots, r_{m-l} \}, 
	\quad 
	\{\rank_i(\bm{y}): i \in \mathcal{I}_k \} = \{ r_{m-l+1}, r_{m-l+2}, \ldots, r_m \}. 
	\end{align*}
	Using \eqref{eq:value_t_eta_J} in the special case with $\mathcal{J}_k = \mathcal{I}_k$ and $\bm{\eta} = \bm{\xi}$, 
	we have $(j_1, j_2, \ldots, j_{m-l}) = (1, 2, \ldots, m-l)$, and thus 
	\begin{align}\label{eq:value_t_xi_I}
	t(\bm{z}, \bm{y} - \bm{z} \circ \bm{\xi}) 
	& = \sum_{i=1}^l \phi(i)  + \sum_{p=1}^{m-l} \phi\left( r_{p} + l + p - p \right)
	= 
	\sum_{i=1}^l \phi( i )  + \sum_{p=1}^{m-l} \phi\left( r_p + l  \right).  
	\end{align}
	
	Fourth, we prove that, for any $\mathcal{J}_k \subset \mathcal{T}$ of cardinality $l$ and $\bm{\eta} = (\eta_1, \eta_2, \ldots, \eta_n)'$ with $\eta_i = \infty$ if $i \in  \mathcal{J}_k$ and zero otherwise, 
	$
	t(\bm{z}, \bm{y} - \bm{z} \circ \bm{\eta})  \ge t(\bm{z}, \bm{y} - \bm{z} \circ \bm{\xi}).
	$ 
	From \eqref{eq:value_t_eta_J} and \eqref{eq:value_t_xi_I}, 
	\begin{align*}
	t(\bm{z}, \bm{y} - \bm{z} \circ \bm{\eta}) 
	- 
	t(\bm{z}, \bm{y} - \bm{z} \circ \bm{\xi})
	& = \sum_{i=1}^l \phi( i )  + \sum_{p=1}^{m-l} \phi\left( r_{j_p} + l + p - j_p \right)
	- 
	\sum_{i=1}^l \phi( i )  - \sum_{p=1}^{m-l} \phi\left( r_p + l  \right)\\
	& = \sum_{p=1}^{m-l} \left\{
	\phi\left( r_{j_p} + l + p - j_p \right) - \phi\left( r_p + l  \right)
	\right\}.
	\end{align*}
	By the definition of $(r_{j_1}, r_{j_2}, \ldots, r_{j_p})$, for any $1\le p \le m-l$, we must have $r_{j_p}\ge r_p$, or equivalently $j_p \ge p$. This further implies that for $1\le p\le m-l$, 
	$
	r_{j_p} - r_p = \sum_{i = p+1}^{j_p}(r_i - r_{i-1}) \ge  \sum_{i = p+1}^{j_p}1 = j_p - p.
	$
	Consequently, $r_{j_p}+l+p-j_p \ge r_p+l$. 
	Therefore, we must have 
	\begin{align*}
	t(\bm{z}, \bm{y} - \bm{z} \circ \bm{\eta}) 
	- 
	t(\bm{z}, \bm{y} - \bm{z} \circ \bm{\xi})
	& = 
	\sum_{p=1}^{m-l} \left\{
	\phi\left( r_{j_p} + l + p - j_p \right) - \phi\left( r_p + l  \right)
	\right\}
	\ge 0. 
	\end{align*}
	
	From the above, we can derive Lemma \ref{lemma:inf_t_over_Hk0}. 
\end{proof}

\begin{lemma}\label{lemma:inf_t_over_Hkc}
	For any $1\le k\le n$, $\bm{z} \in \{0,1\}^n$ and $\bm{y} \in \mathbb{R}^n$, 
	let $m = \sum_{i=1}^n z_i$,  and 
	$\mathcal{I}_k$ be the set of indices for the largest $\min(n-k,m)$ coordinates of $\bm{y}$ with corresponding $z_i$'s being 1, i.e., 
	\begin{align*}
	    \mathcal{I}_k \subset \{i: z_i = 1, 1\le i \le n\}, \ \ 
	    |\mathcal{I}_k| = \min(n-k,m), \ \      
	    \inf_{i \in \mathcal{I}_k} \rank_i(\bm{y}) > \sup_{i: i \notin \mathcal{I}_k, z_i=1} \rank_i(\bm{y}), 
	\end{align*}
	where $\rank(\cdot)$ uses the ``first'' method for ties.
 Then for any rank score statistic $t(\cdot, \cdot)$
	and any constant $c\in \mathbb{R}$, 
	\begin{itemize}
	    \item[(a)] $\inf_{\bm{\delta} \in \mathcal{H}_{k,c}} t(\bm{z},\bm{y} - \bm{z} \circ \bm{\delta}) = t(\bm{z}, \bm{y} - \bm{z} \circ \bm{\xi}_{k,c} )$, where $\bm{\xi} = (\xi_{1k,c}, \xi_{2k,c}, \ldots, \xi_{nk,c})'$ is defined as $\xi_{ik,c} = \infty$ if $i\in \mathcal{I}_k$ and $c$ otherwise.
	    
	    \item[(b)] $t(\bm{z}, \bm{y} - \bm{z}\circ \bm{\xi}_{k,c}) = t(\bm{z}, \bm{y} - \bm{z}\circ \bm{\zeta}_{k,c})$, 
	    where $\bm{\xi} = (\xi_{1k,c}, \xi_{2k,c}, \ldots, \xi_{nk,c})'$ is defined as $\xi_{ik,c} = \Delta$ if $i\in \mathcal{I}_k$ and $c$ otherwise, 
	   and $\Delta$ is a constant larger than $\max_{j:z_j=1} y_j - \min_{j:z_j=0} y_j$
	\end{itemize}

\end{lemma}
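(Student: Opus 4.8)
The plan is to derive part (a) by a translation that reduces the infimum over $\mathcal{H}_{k,c}$ to the one over $\mathcal{H}_{k,0}$ already settled in Lemma~\ref{lemma:inf_t_over_Hk0}, and to derive part (b) by showing that replacing the infinite entries of $\bm{\xi}_{k,c}$ by the finite constant $\Delta$ does not change the value of the rank score statistic. For part (a), I first note that $\bm{\delta}\in\mathcal{H}_{k,c}$ if and only if $\bm{\delta}-c\bm{1}\in\mathcal{H}_{k,0}$, and that $\bm{y}-\bm{z}\circ\bm{\delta}=(\bm{y}-c\bm{z})-\bm{z}\circ(\bm{\delta}-c\bm{1})$; writing $\bm{y}'=\bm{y}-c\bm{z}$ this gives $\inf_{\bm{\delta}\in\mathcal{H}_{k,c}}t(\bm{z},\bm{y}-\bm{z}\circ\bm{\delta})=\inf_{\bm{\delta}'\in\mathcal{H}_{k,0}}t(\bm{z},\bm{y}'-\bm{z}\circ\bm{\delta}')$. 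Applying Lemma~\ref{lemma:inf_t_over_Hk0} with outcome vector $\bm{y}'$ produces $t(\bm{z},\bm{y}'-\bm{z}\circ\bm{\xi}')$, where $\bm{\xi}'$ has entry $\infty$ on the index set of the largest $\min(n-k,m)$ treated coordinates of $\bm{y}'$ and $0$ elsewhere; since subtracting the common constant $c$ from every treated coordinate changes neither their order nor the ``first'' tie-breaking, that index set is exactly $\mathcal{I}_k$. Undoing the shift, $\bm{y}'-\bm{z}\circ\bm{\xi}'=\bm{y}-\bm{z}\circ(c\bm{1}+\bm{\xi}')$, and $c\bm{1}+\bm{\xi}'$ agrees with $\bm{\xi}_{k,c}$ on the treated coordinates (the control coordinates are killed by $\bm{z}\circ$), which establishes part (a).

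For part (b), write $\bm{\zeta}_{k,c}$ for the vector with entry $\Delta$ on $\mathcal{I}_k$ and $c$ elsewhere, and set $\bm{w}_\infty=\bm{y}-\bm{z}\circ\bm{\xi}_{k,c}$ and $\bm{w}_\Delta=\bm{y}-\bm{z}\circ\bm{\zeta}_{k,c}$. The key observation is that a rank score statistic depends on its outcome argument only through the \emph{set} of ranks occupied by the treated coordinates: $t(\bm{z},\bm{w})=\sum_{p\in P}\phi(p)$ with $P=\{\rank_i(\bm{w}):z_i=1\}$, irrespective of which treated unit sits at which of those ranks. Now the treated coordinates outside $\mathcal{I}_k$ have identical values ($y_i-c$) in $\bm{w}_\infty$ and $\bm{w}_\Delta$, the control coordinates have identical values ($y_i$) in both, and in both vectors every coordinate in $\mathcal{I}_k$ lies strictly below every control coordinate — immediate for $\bm{w}_\infty$, and for $\bm{w}_\Delta$ because $y_i-\Delta<y_i-(\max_{j:z_j=1}y_j-\min_{j:z_j=0}y_j)\le\min_{j:z_j=0}y_j$ for each $i\in\mathcal{I}_k$. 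Hence, for any control coordinate, the number of coordinates strictly below it (and likewise the number equal to it with smaller index, under the ``first'' method) is the same in $\bm{w}_\infty$ and $\bm{w}_\Delta$, so each control coordinate keeps its rank; therefore the set of ranks occupied by control coordinates, and with it the complementary set $P$ of ranks occupied by treated coordinates, is unchanged, giving $t(\bm{z},\bm{w}_\infty)=t(\bm{z},\bm{w}_\Delta)$.

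The main obstacle is the natural but incorrect attempt in part (b) to argue that a large $\Delta$ pushes the $\mathcal{I}_k$ coordinates to the very bottom of the ordering: the stated bound only forces them below all \emph{control} coordinates, not necessarily below the remaining treated coordinates (e.g. when $c$ is large and positive), so an $\mathcal{I}_k$ coordinate can ``leapfrog'' a treated coordinate and an individual treated unit's rank can genuinely differ between $\bm{w}_\infty$ and $\bm{w}_\Delta$. The fix, as above, is to track the ranks of the control coordinates — which are insensitive to how the treated mass at the bottom is permuted — and then exploit that the rank score statistic is invariant under permutations of the treated units within their rank set. A minor point to handle carefully is exact ties between a value $y_i-c$ and a control value $y_j$: since these values are literally the same in $\bm{w}_\infty$ and $\bm{w}_\Delta$, the index-based ``first'' tie-breaking is identical in both, and the argument goes through unchanged.
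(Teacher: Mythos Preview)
Your proposal is correct and follows essentially the same route as the paper. For part (a) you use the translation $\mathcal{H}_{k,c}=\mathcal{H}_{k,0}+c\bm{1}$ to reduce to Lemma~\ref{lemma:inf_t_over_Hk0}, exactly as the paper does; your explicit remark that shifting all treated coordinates by $c$ preserves $\mathcal{I}_k$ is a point the paper leaves implicit. For part (b) the paper rewrites $t(\bm{z},\bm{w})=\sum_{i=1}^n\phi(i)-\sum_{i:z_i=0}\phi(\rank_i(\bm{w}))$ and then shows each control rank is unchanged, which is logically the same as your observation that the statistic depends only on the \emph{set} of treated ranks and hence on the control ranks; your discussion of why one cannot simply claim the $\mathcal{I}_k$ coordinates sink to the very bottom (they need only sink below the controls) is an accurate diagnosis of the subtlety and matches the paper's case analysis.
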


\begin{proof}[Proof of Lemma \ref{lemma:inf_t_over_Hkc}]
	We first prove (a) in  Lemma \ref{lemma:inf_t_over_Hkc}. 
	By definition, $\mathcal{H}_{k,c} = \mathcal{H}_{k,0}+c$. This implies that 
	\begin{align*}
	\inf_{\bm{\delta} \in \mathcal{H}_{k,c}} t(\bm{z},\bm{y} - \bm{z} \circ \bm{\delta}) 
	= 
	\inf_{\bm{\eta} \in \mathcal{H}_{k,0}}t(\bm{z},\bm{y} - \bm{z} \circ (\bm{\eta}+c))
	= 
	\inf_{\bm{\eta} \in \mathcal{H}_{k,0}}t(\bm{z}, (\bm{y} - \bm{z} \circ c) -  \bm{z} \circ \bm{\eta}).
	\end{align*}
	Using Lemma \ref{lemma:inf_t_over_Hk0}, we then have 
	\begin{align*}
	\inf_{\bm{\delta} \in \mathcal{H}_{k,c}} t(\bm{z},\bm{y} - \bm{z} \circ \bm{\delta}) 
	& = 
	\inf_{\bm{\eta} \in \mathcal{H}_{k,0}}t(\bm{z}, (\bm{y} - \bm{z} \circ c) -  \bm{z} \circ \bm{\eta})
	= 
	t(\bm{z}, (\bm{y} - \bm{z} \circ c) -  \bm{z} \circ \bm{\zeta})\\
	& = t(\bm{z}, \bm{y} -  \bm{z} \circ (\bm{\zeta} + c) ),
	\end{align*}
	where $\bm{\zeta} = (\zeta_1, \zeta_2, \ldots, \zeta_n)'$ is defined as $\zeta_i = \infty$ if $i\in \mathcal{I}_k$ and $0$ otherwise. 
	Note that $\bm{\xi}_{k,c} = \bm{\zeta}+c$. We can then derive Lemma \ref{lemma:inf_t_over_Hkc}.

	We then prove (b) in Lemma \ref{lemma:inf_t_over_Hkc}.  
	Let $\bm{\beta} = \bm{y} - \bm{z}\circ \bm{\xi}_{k,c}$ and 
	$\bm{\gamma} = \bm{y} - \bm{z}\circ \bm{\zeta}_{k,c}$. 
	We can verify that 
	$
	t(\bm{z}, \bm{\beta}) = 
	\sum_{i=1}^n \phi(i) - \sum_{i=1}^n (1-z_i) \phi(\rank_i(\bm{\beta}))
	$
	and 
	$
	t(\bm{z}, \bm{\gamma}) = 
	\sum_{i=1}^n \phi(i) - \sum_{i=1}^n (1-z_i) \phi(\rank_i(\bm{\gamma})).
	$
	Thus, to prove (b), it suffices to prove that $\rank_i(\bm{\beta}) = \rank_i(\bm{\gamma})$ for any $i$ with $z_i=0$. 
	Let $\mathcal{T} = \{l: z_l = 1\}$. 
	Given $i\notin \mathcal{T}$, we count the number of coordinates of $\bm{\beta}$ and $\bm{\gamma}$ that have smaller ranks then the corresponding $i$th coordinates, respectively. 
	First, for any $j\in \mathcal{I}_k$, by construction, $\beta_j = y_j - \infty = -\infty < y_i =  \beta_i$ and 
	$\gamma_j = y_j - \Delta < \min_{l:z_l=0} y_l < y_i = \gamma_i$. These imply that 
	$\rank_j(\bm{\beta}) < \rank_i(\bm{\beta})$ and 
	$\rank_j(\bm{\gamma}) < \rank_i(\bm{\gamma})$ for $j \in \mathcal{I}_k$. 
	Second, for any $j \in \mathcal{T} \setminus \mathcal{I}_k$, 
	by construction, $\beta_j = y_j - z_j c = \gamma_j$, and $\beta_i = y_i = \gamma_i$. 
	Thus, $\rank_j(\bm{\beta}) < \rank_i(\bm{\beta})$ if and only if $\rank_j(\bm{\gamma}) < \rank_i(\bm{\gamma})$ for $j \in \mathcal{T} \setminus \mathcal{I}_k$. 
	From the above, we must have $\rank_i(\bm{\beta}) = \rank_i(\bm{\gamma})$ for $i\notin \mathcal{T}$. Thus, (b) in Lemma \ref{lemma:inf_t_over_Hkc} holds. 
	
	Therefore, Lemma \ref{lemma:inf_t_over_Hkc} holds.  
\end{proof}

\begin{proof}[\bf Proof of Theorem \ref{thm:test_Hkc}]
    First, from Theorem \ref{thm:general_null}, $p_{\bm{Z}, k, c} \equiv 
	\sup_{\bm{\delta} \in \mathcal{H}_{k,c}} p_{\bm{Z}, \bm{\delta}}$ is a valid $p$-value for testing the null hypothesis $H_{k, c}$ in \eqref{eq:H_nc_k}. 
	Second, from \eqref{eq:sup_p_val} and Lemma \ref{lemma:inf_t_over_Hkc}, $t(\bm{Z}, \bm{Y}_{\bm{Z}, \bm{\delta}}(0))$ can achieve its infimum at some $\bm{\delta} \in \mathcal{H}_{k,c}$, 
	and thus the equivalent forms of $p_{\bm{Z}, k, c}$ in \eqref{eq:p_kc} holds. 
	Therefore, Theorem \ref{thm:test_Hkc} holds. 
\end{proof}

\begin{proof}[\bf Proof of Theorem \ref{thm:con_interval_nc}]
	First, because $\mathcal{H}_{k,c} \subset \mathcal{H}_{k, \overline{c}}$ for any $c\le \overline{c}$, from \eqref{eq:p_kc}, we have 
	\begin{align*}
	p_{\bm{Z}, k, c} = 
	\sup_{\bm{\delta} \in \mathcal{H}_{k,c}} p_{\bm{Z}, \bm{\delta}}  
	\le 
	\sup_{\bm{\delta} \in \mathcal{H}_{k,\overline{c}}} p_{\bm{Z}, \bm{\delta}}
	= 
	p_{\bm{Z}, k, \overline{c}}.
	\end{align*}
	Therefore, the $p$-value $p_{\bm{z}, k, c}$, viewed as a function of $c$, is monotone increasing for any fixed $\bm{z}$ and $k$. 
	
	Second, the coverage probability of the set $\{c: p_{\bm{Z}, k, c} > \alpha, c \in \mathbb{R} \}$ is 
	\begin{align*}
	\Pr\left(
	\tau_{(k)} \in \{c: p_{\bm{Z}, k, c} > \alpha, c \in \mathbb{R} \}
	\right)
	& = 
	\Pr\left(
	p_{\bm{Z}, k, \tau_{(k)}} > \alpha
	\right)
	= 
	1 - \Pr\left(
	p_{\bm{Z}, k, \tau_{(k)}} \le \alpha
	\right)
	\ge 
	1 - \alpha,
	\end{align*}
	where the last inequality holds because $p_{\bm{Z}, k, \tau_{(k)}}$ is a valid $p$-value for testing the null hypothesis of  $\tau_{(k)}\le \tau_{(k)}$ that always holds by definition. 
	Therefore, $\{c: p_{\bm{Z}, k, c} > \alpha, c \in \mathbb{R} \}$ is a $1-\alpha$ confidence set for $\tau_{(k)}$. 
	
	Third, because  $p_{\bm{z}, k, c}$ is %
	increasing 
	in $c$, the confidence set $\{c: p_{\bm{Z}, k, c} > \alpha, c \in \mathbb{R} \}$ must have the form of $(\underline{c}, \infty)$ or $[\underline{c}, \infty)$ with $\underline{c} = \inf \{c: p_{\bm{Z}, k, c} > \alpha, c \in \mathbb{R} \}$. 
	
	From the above, Theorem \ref{thm:con_interval_nc} holds. 
\end{proof}

\begin{proof}[\bf Proof of Corollary \ref{cor:infer_number_larger_threshold}]
	First, (a) in Corollary \ref{cor:infer_number_larger_threshold} follows directly from Theorem \ref{thm:test_Hkc}. 
	Second,  for any $1\le \underline{k} \le k \le n$, because $\mathcal{H}_{k,c} \subset \mathcal{H}_{\underline{k},c}$, we have  
	$p_{\bm{Z}, k, c} \equiv 
	\sup_{\bm{\delta} \in \mathcal{H}_{k,c}} p_{\bm{Z}, \bm{\delta}}
	\le 
	\sup_{\bm{\delta} \in \mathcal{H}_{\underline{k},c}} p_{\bm{Z}, \bm{\delta}}
	\equiv 
	p_{\bm{Z}, \underline{k}, c}.
	$
	This implies (b) in Corollary \ref{cor:infer_number_larger_threshold}. 
	Third, the coverage probability of the set 
	$
	\{n-k: p_{\bm{Z}, k, c} > \alpha, 0\le k \le n \}, 
	$
	satisfies
	\begin{align*}
	\Pr\left(
	n(c) \in \{n-k: p_{\bm{Z}, k, c} > \alpha, 0\le k \le n \}
	\right)
	& = 
	\Pr\left(
	p_{\bm{Z}, n-n(c), c} > \alpha
	\right)
	=
	1  - \Pr\left(
	p_{\bm{Z}, n-n(c), c} \le \alpha
	\right)\\
	& \ge 1 - \alpha,
	\end{align*} 
	where the last inequality holds because, from Theorem \ref{thm:test_Hkc} and \eqref{eq:H_nc_k_equ}, $p_{\bm{Z}, n-n(c), c}$ is a valid $p$-value for testing the null hypothesis $H_{n-n(c), c}$ of $n(c) \le n - \{n-n(c)\}$ that always holds by definition. 
	Moreover, because $p_{\bm{Z}, k, c}$ is %
	decreasing 
	in $k$, the confidence set 
	$
	\{n-k: p_{\bm{Z}, k, c} > \alpha, 0\le k \le n \}
	$
	must have the form of 
	$\{j: n -\overline{k}  \le j \le n\}$
	with $\overline{k} = \sup \{k: p_{\bm{Z}, k, c} > \alpha, 0\le k \le n \}$.
	Therefore, (c) in Corollary \ref{cor:infer_number_larger_threshold} holds. 
	From the above, Corollary \ref{cor:infer_number_larger_threshold} holds. 
\end{proof}

\begin{proof}[\bf Proof of Theorem \ref{thm:simultaneous_valid}]
	First, we prove \eqref{eq:con_tau_k_equi_form}. 
	We first show that if the left hand size of \eqref{eq:con_tau_k_equi_form} holds, then the right hand side must also hold. 
	For any $1\le k \le n$, 
	suppose $\tau_{(k)}$ is in $\{c: p_{\bm{Z}, k, c} > \alpha, c \in \mathbb{R} \}$. 
	This implies that $p_{\bm{Z}, k, \tau_{(k)}} > \alpha$. 
	For any $c\in \mathbb{R}$ with $p_{\bm{Z}, k, c} \le \alpha$, because the $p$-value $p_{\bm{Z}, k, c}$ is %
	increasing 
	in $c$ from Theorem \eqref{thm:con_interval_nc}, we must have  $\tau_{(k)} > c$, or equivalently $\bm{\tau} \in \mathcal{H}_{k, c}^\c$. Thus, $\bm{\tau} \in 
	\bigcap_{c: p_{\bm{Z}, k, c} \le \alpha } \mathcal{H}_{k, c}^\c$. 
	We then show that if the left hand size of \eqref{eq:con_tau_k_equi_form} fails, then the right hand side must also fail. 
	For any $1\le k \le n$, 
	suppose $\tau_{(k)}$ is not in $\{c: p_{\bm{Z}, k, c} > \alpha, c \in \mathbb{R} \}$. 
	This implies that 
	$p_{\bm{Z}, k, \tau_{(k)}} \le \alpha$. Because $\bm{\tau} \in \mathcal{H}_{k, \tau_{(k)}}$ by definition, we have 
	$
	\bm{\tau} \in  \mathcal{H}_{k, \tau_{(k)}} \subset \bigcup_{c: p_{\bm{Z}, k, c} \le \alpha} \mathcal{H}_{k,c}, 
	$
	or equivalently 
	$
	\bm{\tau} \notin 
	\bigcap_{c: p_{\bm{Z}, k, c} \le \alpha } \mathcal{H}_{k, c}^\c. 
	$
	Therefore, \eqref{eq:con_tau_k_equi_form} holds. 
	
	Second, we prove \eqref{eq:con_n_c_equi_form}. We first show that if the left hand side of \eqref{eq:con_n_c_equi_form} holds, then the right hand side must also hold. 
	For any $c\in \mathbb{R}$, suppose $n(c)$ is in $ \{n-k: p_{\bm{Z}, k, c} > \alpha, 0\le k \le n \}$. 
	This implies that $p_{\bm{Z}, n-n(c), c} > \alpha.$ 
	For any $k$ with $p_{\bm{Z}, k, c} \le \alpha$, because $p_{\bm{Z}, k, c}$ is %
	decreasing 
	in $k$ from Corollary \ref{cor:infer_number_larger_threshold}, we must have $n-n(c) < k$, or equivalently $\bm{\tau} \in \mathcal{H}_{k, c}^\c.$ 
	Thus, $\bm{\tau} \in 
	\bigcap_{k: p_{\bm{Z}, k, c} \le \alpha } \mathcal{H}_{k, c}^\c.$
	We then show that if the left hand side of \eqref{eq:con_n_c_equi_form} fails, then the right hand side must also fail. 
	For any $c\in \mathbb{R}$, suppose $n(c)$ is not in $ \{n-k: p_{\bm{Z}, k, c} > \alpha, 0\le k \le n \}$. 
	This implies that $p_{\bm{Z}, n-n(c), c} \le \alpha.$ 
	Because $\bm{\tau} \in \mathcal{H}_{n-n(c), c}$ by definition, we have 
	$
	\bm{\tau} \in \mathcal{H}_{n-n(c), c} \subset \bigcup_{k: p_{\bm{Z}, k, c} \le \alpha } \mathcal{H}_{k, c},
	$
	or equivalently 
	$
	\bm{\tau} \notin \bigcap_{k: p_{\bm{Z}, k, c} \le \alpha } \mathcal{H}_{k, c}^\c.
	$
	Therefore, \eqref{eq:con_n_c_equi_form} holds. 
	
	Third, we prove that $\bigcap_{k, c: p_{\bm{Z}, k, c} \le \alpha } \mathcal{H}_{k, c}^\c$ is a $1-\alpha$ confidence set for the  true treatment effect $\bm{\tau}$. 
	The coverage probability of the set $\bigcap_{k, c: p_{\bm{Z}, k, c} \le \alpha } \mathcal{H}_{k, c}^\c$ has the following  equivalent forms: 
	\begin{align}\label{eq:cov_set_tau1}
	\Pr\Big(
	\bm{\tau} \in 
	\bigcap_{k, c: p_{\bm{Z}, k, c} \le \alpha } \mathcal{H}_{k, c}^\c 
	\Big) 
	& = 
	1 - \Pr\Big(
	\bm{\tau} \in 
	\bigcup_{k, c: p_{\bm{Z}, k, c} \le \alpha } \mathcal{H}_{k, c}
	\Big). 
	\end{align}
	If $\bm{\tau} \in 
	\bigcup_{k, c: p_{\bm{Z}, k, c} \le \alpha } \mathcal{H}_{k, c}$, then there must exists $(k,c)$ such that $p_{\bm{Z}, k, c} \le \alpha$ and $\bm{\tau} \in \mathcal{H}_{k,c}$. 
	By definition, this implies that 
	$
	p_{\bm{Z}, \bm{\tau}}
	\le 
	\sup_{\bm{\delta} \in \mathcal{H}_{k,c}} p_{\bm{Z}, \bm{\delta}} 
	\equiv 
	p_{\bm{Z}, k, c}
	\le
	\alpha. 
	$
	Thus, from \eqref{eq:cov_set_tau1}, we have  
	\begin{align*}
	\Pr\Big(
	\bm{\tau} \in 
	\bigcap_{k, c: p_{\bm{Z}, k, c} \le \alpha } \mathcal{H}_{k, c}^\c 
	\Big) 
	& = 
	1 - \Pr\Big(
	\bm{\tau} \in 
	\bigcup_{k, c: p_{\bm{Z}, k, c} \le \alpha } \mathcal{H}_{k, c}
	\Big) 
	\ge
	1 - \Pr\left(
	p_{\bm{Z}, \bm{\tau}} \le \alpha
	\right). 
	\end{align*}
	By the validity of usual randomization test, 
	$\Pr(
	p_{\bm{Z}, \bm{\tau}} \le \alpha
	)\le \alpha,
	$
	and therefore, 
	\begin{align*}
	\Pr\Big(
	\bm{\tau} \in 
	\bigcap_{k, c: p_{\bm{Z}, k, c} \le \alpha } \mathcal{H}_{k, c}^\c 
	\Big) 
	\ge 1 - \Pr\left(
	p_{\bm{Z}, \bm{\tau}} \le \alpha
	\right)
	\ge 1-\alpha. 
	\end{align*} 
	
	From the  above, Theorem \ref{thm:simultaneous_valid} holds. 
\end{proof}

\section{Further simulation results}

\subsection{Simulation demonstrating power of different Stephenson rank statistics}

In the main paper's simulations we can see how power changes as a function of $s$.
Extending those results, Figure \ref{fig:simu_power_all_quantile}, built from those same simulation results, shows the power of various Stephenson rank statistics for testing whether each quantile of individual effect is bounded by zero, i.e., $H_{k,0}: \tau_{(k)}\le 0$ over all $1\le k\le n$, under the data generating model with $n=120$ and $(\tau_0, \omega, \rho)$ equal to $(1, 1, -0.9)$. 
(We omit those values of $k$ for which all tests under consideration have zero power.)
From Figure \ref{fig:simu_power_all_quantile}, 
we can see that $s=2$ is preferred for larger quantiles of individual effects, while $s=4$ is preferred for smaller quantiles.

\begin{figure}
    \centering
 	\includegraphics[width=0.8\linewidth]{plots/Simulation_Quantile_Normal_random_n0_legend_20221212.pdf}
    \includegraphics[width=0.4\linewidth]{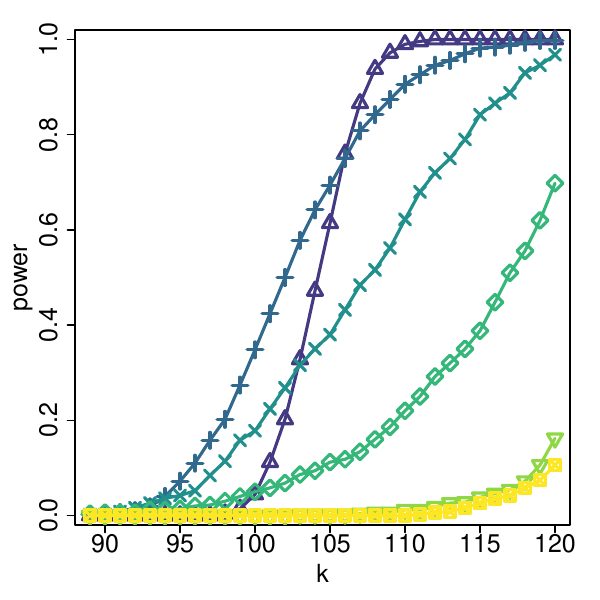}
    \caption{
	Power of various Stephenson rank statistics for testing the null hypothesis of $H_{k,0}: \tau_{(k)}\le 0$ over all $k$, under model \eqref{eq:generate} with $n=120$ and $(\tau_0, \omega, \rho)$ equal to $(1, 1, -0.9)$. 
	}\label{fig:simu_power_all_quantile}
\end{figure}

\subsection{Simulation for heavy-tailed outcomes and individual effects}\label{sec:simu_outlier}

We next conduct a simulation with heavy-tailed outcomes and individual treatment effects. We will demonstrate that, compared to inference on average treatment effects \citep[e.g.,][]{Neyman23a}, 
the proposed inference on quantiles does not require any large-sample approximation and can be more robust to outliers. 
In particular, we consider the case of generally positive constant individual effects with a few extreme negative outliers. 
In the next subsection, we also consider a scenario with heavy-tailed distributions of the potential outcomes under Fisher's null of no effect.

We simulate an experiment with $120$ units, among which two thirds will be randomly assigned to treatment and the remaining to control. 
We assume the treatment increases a certain risk factor by $2$ for $95\%$ of the units, and decreases it by $50$ for the remaining $5\%$ of units.
We simulate the control potential outcomes from the standard normal distribution.
Once generated, all the potential outcomes are fixed.
For the final sample, about 5\% of the units receive large benefit, but the remainder would incur a harmful effect.
The true average treatment effect is negative, indicating an average benefit, even though the majority of units are harmed.

Figure \ref{fig:tau_outlier}(a) shows the histogram of the sampling distribution of the usual difference-in-means estimator. 
Over all the simulated assignments, the difference-in-means estimator is negative about $77\%$ of the time, and its average is close to the true average effect $-0.6$. 
In general, the difference in means estimator would indicate that the treatment is reducing overall risk level and is apparently beneficial. 
However, such a conclusion would potentially be misleading, because the treatment is harmful for most units. 

Figure \ref{fig:tau_outlier}(b) shows the histogram of the $90\%$ lower confidence limit of the number of units with higher risk level under treatment than control (i.e., $n(0)$), based on the Stephenson rank statistic with $s=6$.
Over all simulated assignments, 
the lower confidence limit of $n(0)$ has an average value of 37.5 (about $31\%$ of the units), and can sometimes reach 49 (about $41\%$ of the units).  
Our method reliably detects
that the treatment harms a significant amount of units; such a finding would signal to researchers that the treatment would need further careful investigation despite an apparent average benefit. 

\begin{figure}[htbp]
	\centering
	\begin{subfigure}{.4\textwidth}
		\centering
		\includegraphics[width=0.825\linewidth]{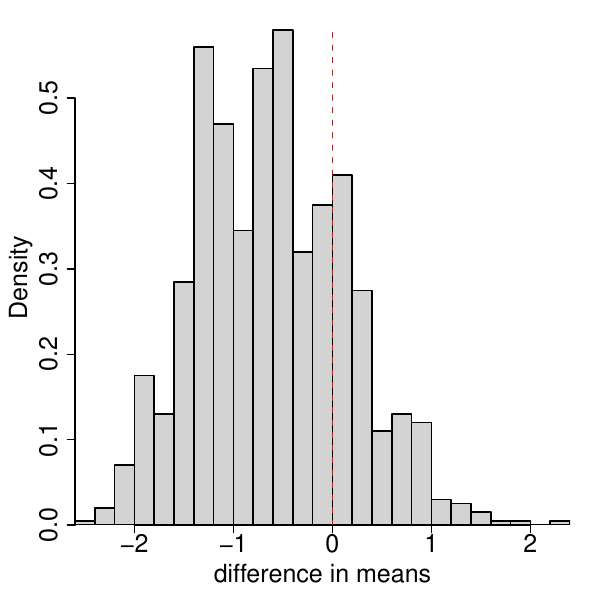}
		\caption{}
	\end{subfigure}%
	\begin{subfigure}{.5\textwidth}
		\centering
		\includegraphics[width=0.667\linewidth]{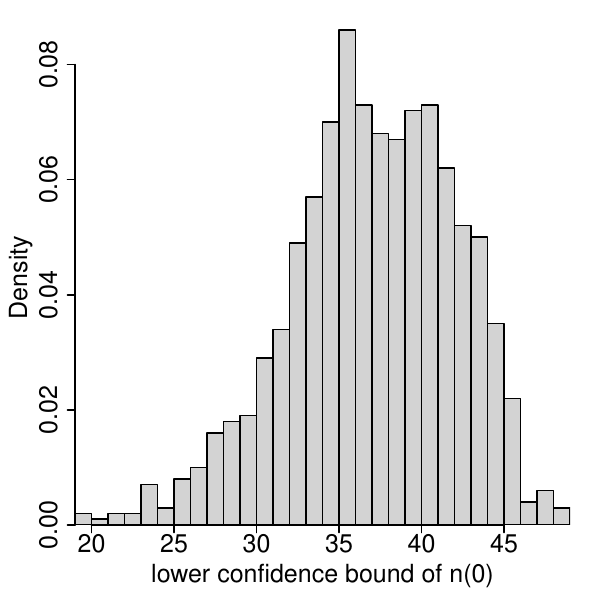}
		\caption{}
	\end{subfigure}
	\caption{
	Simulation when individual treatment effects have extreme values or outliers. 
	(a) shows the histogram of the usual difference-in-means estimator. 
	(b) shows the histogram of the $90\%$ lower confidence limit of the number of units with positive effects using Theorem \ref{thm:con_interval_nc}. 
	}\label{fig:tau_outlier}
\end{figure}

\subsection{Simulation for heavy-tailed outcomes under Fisher's null}\label{sec:simu_heavy_tail}

We simulate the potential outcomes $Y_i(1) = Y_i(0)$'s from a skewed $t$ distribution with degrees of freedom 1.5 and skewing parameter 5 \citep[see, e.g.,][]{skewt_rpackage}, under which all the individual treatment effects are zero, i.e., Fisher's null $H_{\bm{0}}$ holds. 
To mimic the finite population inference, all the potential outcomes are kept fixed once generated. 
We conduct simulation under a CRE with $n=120$ units, where half of the units are assigned to each treatment group. 
Figure \ref{fig:heavy_tail_outcome}(a) shows the normal Q-Q plot of the control potential outcomes, from which we can see that the potential outcomes are right-skewed and heavy-tailed. 
Figure \ref{fig:heavy_tail_outcome}(b) shows the histogram of the difference-in-means estimator under the CRE, which has multiple modes. 
This implies that the large-sample normal approximation as in \citet{Neyman23a} may work poorly with heavy-tailed outcomes. 
Consequently, it is not surprising to see that the $p$-value using the two-sample $t$-statistic and normal approximation does not control the type-I error well, as shown in Figure \ref{fig:heavy_tail_outcome}(c).\footnote{
\citet{Wu:2018aa} and \citet{cohen2020gaussian} suggest using the permutation distribution of the $t$-statistic as the reference distribution for calculating the $p$-value. 
However, 
since the $t$-statistic is in general neither effect increasing nor differential increasing, 
Theorem \ref{thm:null_broader_monotone_control}(a) cannot guarantee the resulting $p$-value to be valid for testing the bounded null. 
}
From the same figure, the Fisher randomization $p$-value using the Stephenson rank sum statistic with $s=10$ is almost uniformly distributed on $(0,1)$. 
Moreover, this $p$-value is also valid for testing the bounded null $H_{\vecle \bm{0}}$.

\begin{figure}[htbp]
	\begin{subfigure}{.33\textwidth}
		\centering
		\includegraphics[width=1\linewidth]{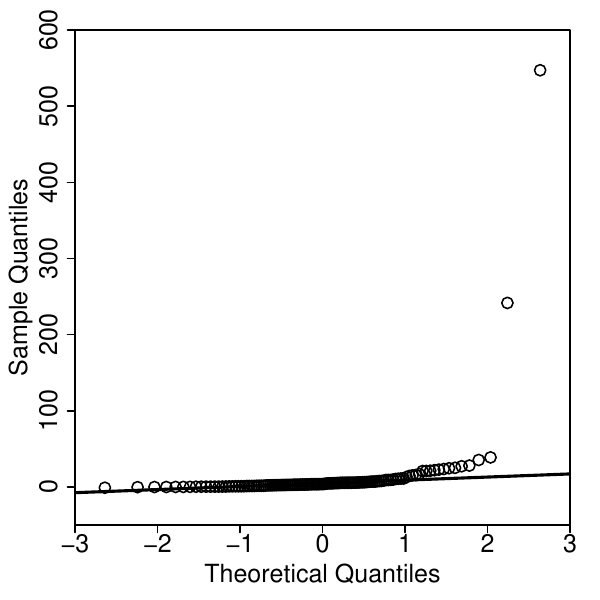}
		\caption{} 
	\end{subfigure}
	\begin{subfigure}{.33\textwidth}
		\centering
		\includegraphics[width=1\linewidth]{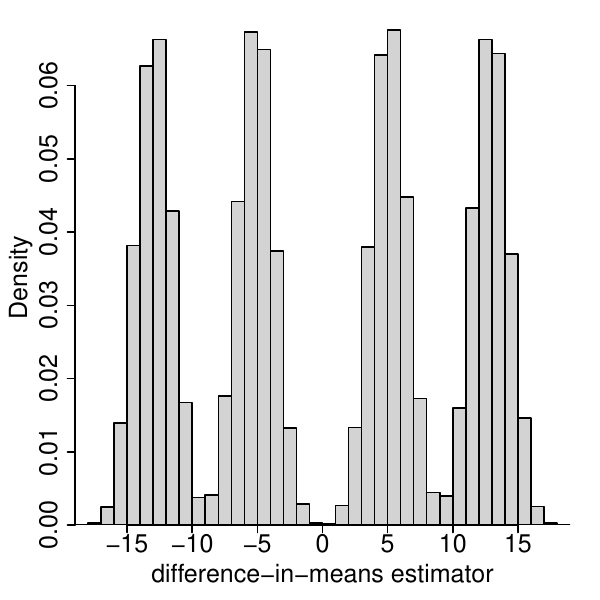}
		\caption{} 
	\end{subfigure}%
	\begin{subfigure}{.33\textwidth}
	\centering
	\includegraphics[width=1\linewidth]{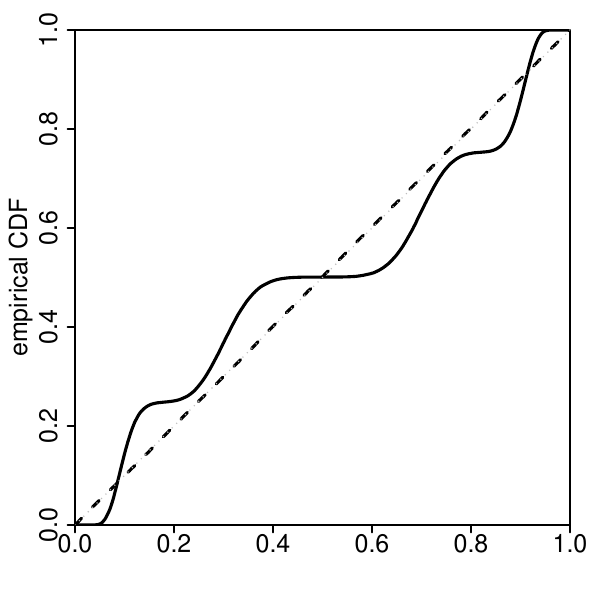}
	\caption{}
	\end{subfigure}
	\caption{
	Simulation under skewed and heavy-tailed outcomes. 
	(a) shows the normal Q-Q plot of the potential outcomes $Y_i(1) = Y_i(0)$'s. 
	(b) shows the histogram of the difference-in-means estimator under a CRE. (c) shows the empirical distribution functions of the $p$-value from \citet{Neyman23a} using $t$-statistic with normal approximation (denoted by the solid line)
	and that from Fisher randomization test using Stephenson rank statistic with $s=10$ (denoted by the dashed line). 
	}\label{fig:heavy_tail_outcome}
\end{figure}

\subsection{Simulation with varying sample size}\label{sec:vary_sample_size}
We conduct the same simulation as in Section \ref{sec:simu_visual}, except that we vary the sample size $n$ from $240$ to $960$. 
Figures \ref{fig:simu_quant_random_240}--\ref{fig:simu_quant_random_960} show the $90\%$ lower confidence limit for all quantiles of individual effects, averaging over all simulated treatment assignment, when the sample size $n$ equals $240$, $480$ and $960$, respectively.  
The implications from these simulation results are mostly the same as that from Section \ref{sec:simu_visual}. 
First, when the average effect is non-positive, the Wilcoxon rank has almost no power to detect any positive individual effects, while the Stephenson rank with larger $s$ tends to give more informative results. 
Second, 
when the average treatment effect is positive, too large an $s$ for Stephenson rank can deteriorate the power, especially when the individual treatment effect is negatively correlated with the
control potential outcome.
It will be interesting to theoretically investigate the role of $s$ for inferring quantiles of individual effects, and we leave it for future study. 

\begin{figure}[htbp]
    \centering
		\includegraphics[width=0.8\linewidth]{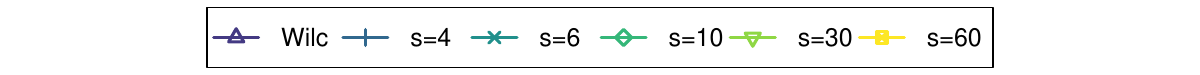}
	\begin{subfigure}{.33\textwidth}
		\centering
		\includegraphics[width=1\linewidth]{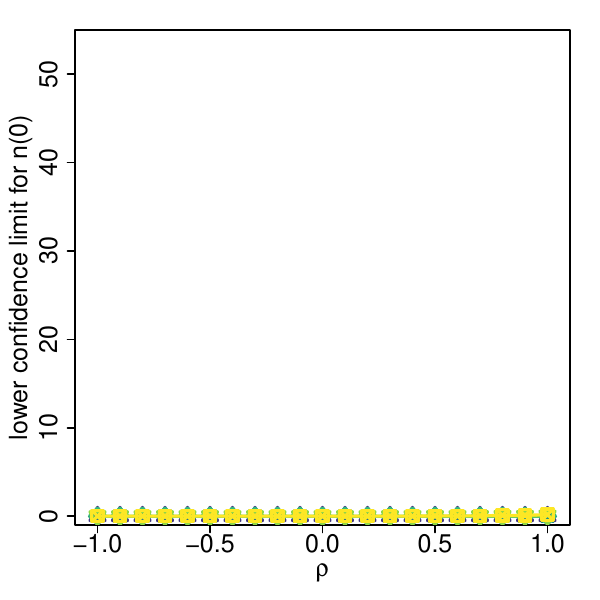}
		\caption{$\tau_0 = -1, \omega = 0.5$}
	\end{subfigure}%
	\begin{subfigure}{.33\textwidth}
		\centering
		\includegraphics[width=1\linewidth]{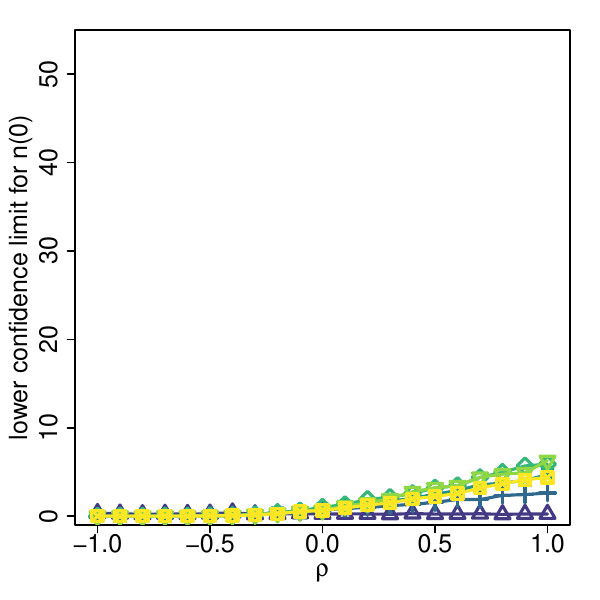}
		\caption{$\tau_0 = 0, \omega = 0.5$}
	\end{subfigure}%
	\begin{subfigure}{.33\textwidth}
		\centering
		\includegraphics[width=1\linewidth]{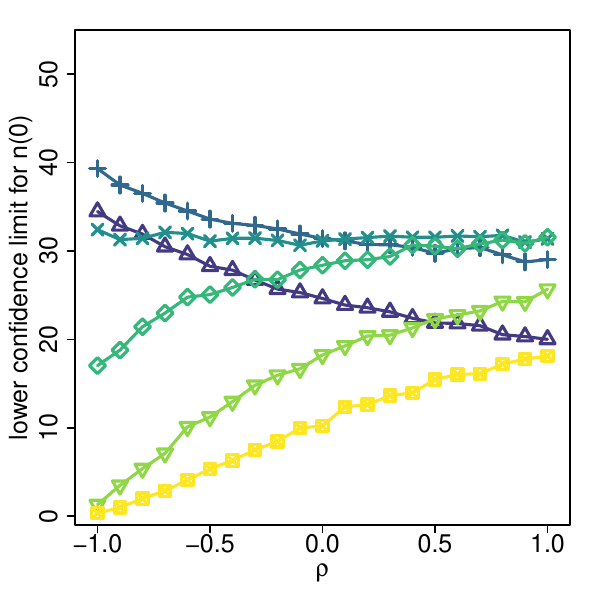}
		\caption{$\tau_0 = 1, \omega = 0.5$}
	\end{subfigure}
	\begin{subfigure}{.33\textwidth}
		\centering
		\includegraphics[width=1\linewidth]{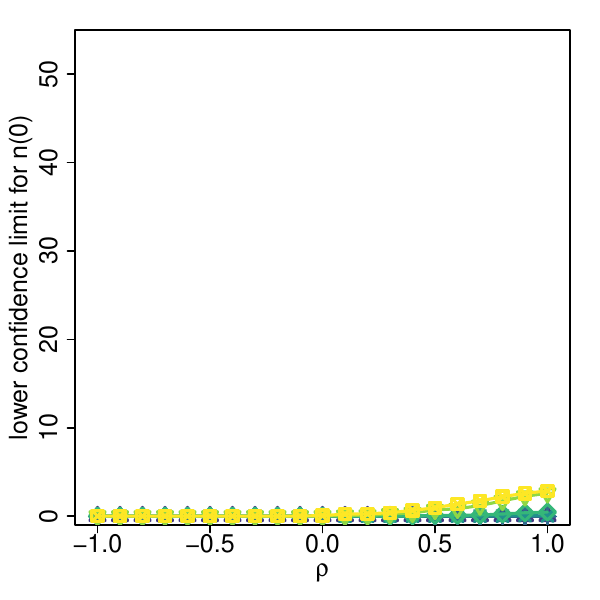}
		\caption{$\tau_0 = -1, \omega = 1$} 
	\end{subfigure}
	\begin{subfigure}{.33\textwidth}
		\centering
		\includegraphics[width=1\linewidth]{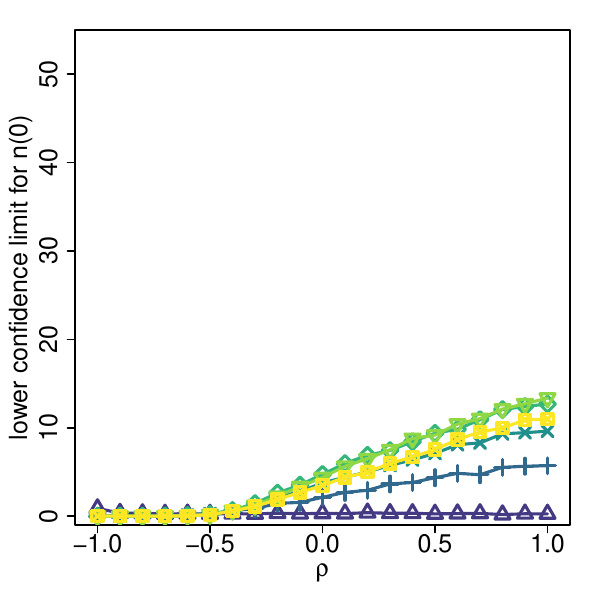}
		\caption{$\tau_0 = 0, \omega = 1$} 
	\end{subfigure}%
	\begin{subfigure}{.33\textwidth}
	\centering
	\includegraphics[width=1\linewidth]{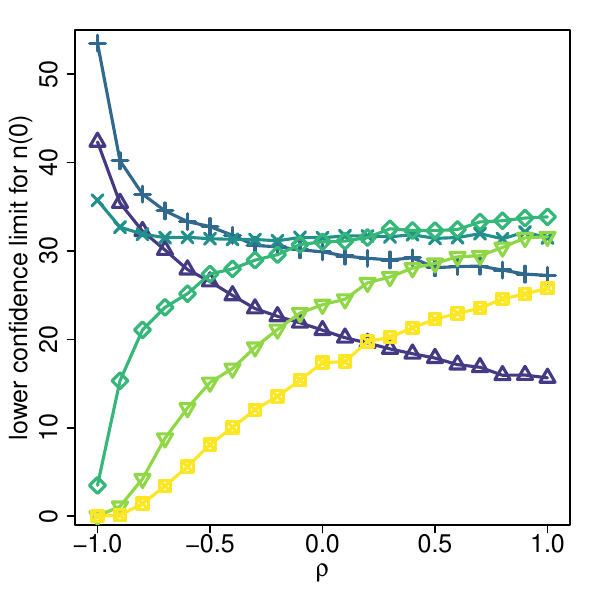}
	\caption{$\tau_0 = 1, \omega = 1$}
	\end{subfigure}
	\caption{Average lower limits of $90\%$ confidence intervals for the number of units with positive effects $n(0)$. 
		The potential outcomes are generated from \eqref{eq:generate} with sample size $n=240$ and different values of $(\tau_0, \omega, \rho)$. 
	}\label{fig:simu_quant_random_240}
\end{figure}

\begin{figure}[htbp]
    \centering
		\includegraphics[width=0.8\linewidth]{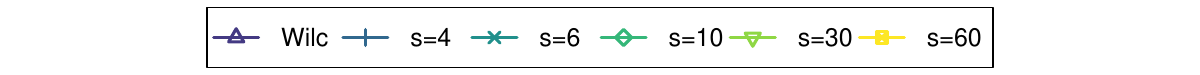}
	\begin{subfigure}{.33\textwidth}
		\centering
		\includegraphics[width=1\linewidth]{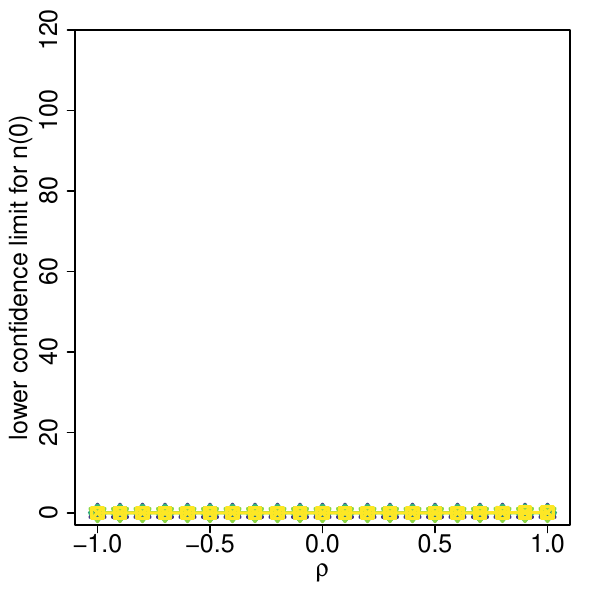}
		\caption{$\tau_0 = -1, \omega = 0.5$}
	\end{subfigure}%
	\begin{subfigure}{.33\textwidth}
		\centering
		\includegraphics[width=1\linewidth]{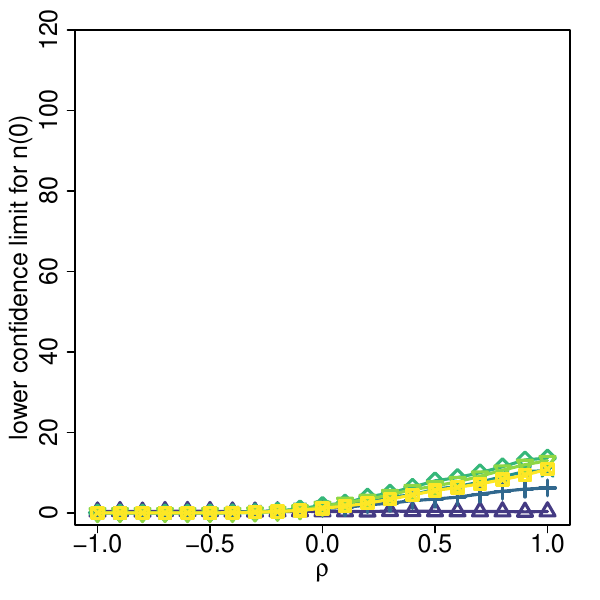}
		\caption{$\tau_0 = 0, \omega = 0.5$}
	\end{subfigure}%
	\begin{subfigure}{.33\textwidth}
		\centering
		\includegraphics[width=1\linewidth]{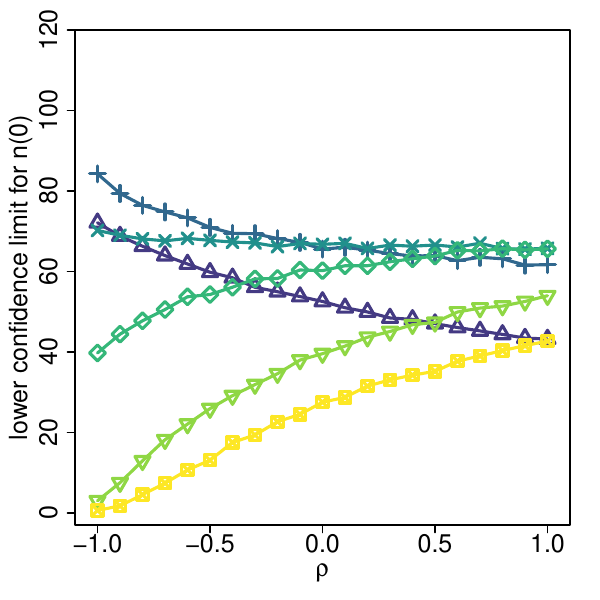}
		\caption{$\tau_0 = 1, \omega = 0.5$}
	\end{subfigure}
	\begin{subfigure}{.33\textwidth}
		\centering
		\includegraphics[width=1\linewidth]{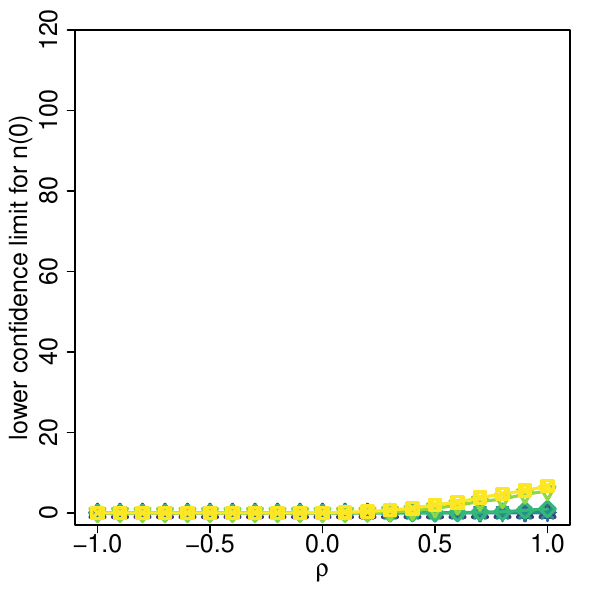}
		\caption{$\tau_0 = -1, \omega = 1$} 
	\end{subfigure}
	\begin{subfigure}{.33\textwidth}
		\centering
		\includegraphics[width=1\linewidth]{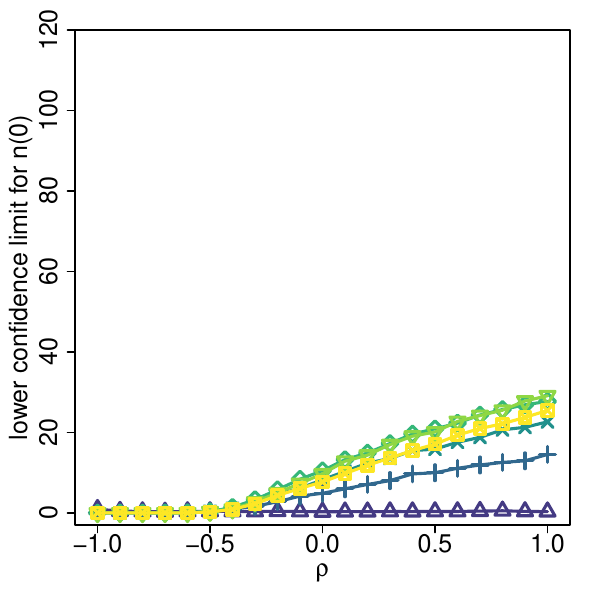}
		\caption{$\tau_0 = 0, \omega = 1$} 
	\end{subfigure}%
	\begin{subfigure}{.33\textwidth}
	\centering
	\includegraphics[width=1\linewidth]{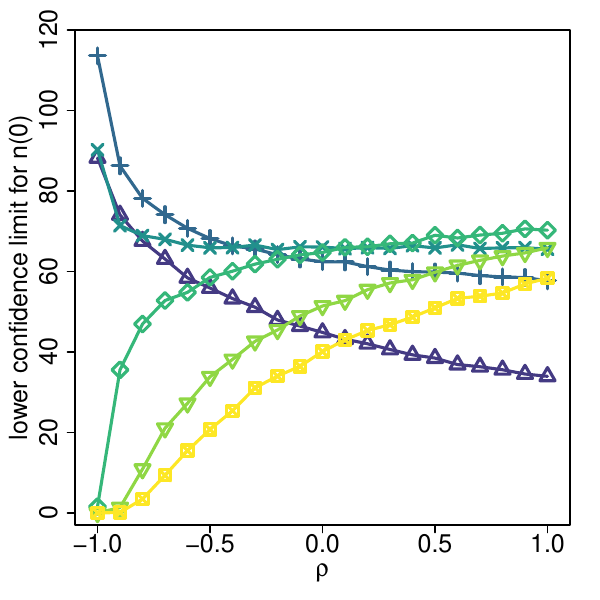}
	\caption{$\tau_0 = 1, \omega = 1$}
	\end{subfigure}
	\caption{Average lower limits of $90\%$ confidence intervals for the number of units with positive effects $n(0)$. 
		The potential outcomes are generated from \eqref{eq:generate} with sample size $n=480$ and different values of $(\tau_0, \omega, \rho)$. 
	}\label{fig:simu_quant_random_480}
\end{figure}

\begin{figure}[htbp]
    \centering
		\includegraphics[width=0.8\linewidth]{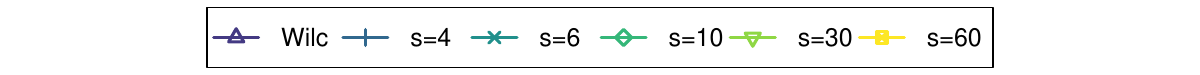}
	\begin{subfigure}{.33\textwidth}
		\centering
		\includegraphics[width=1\linewidth]{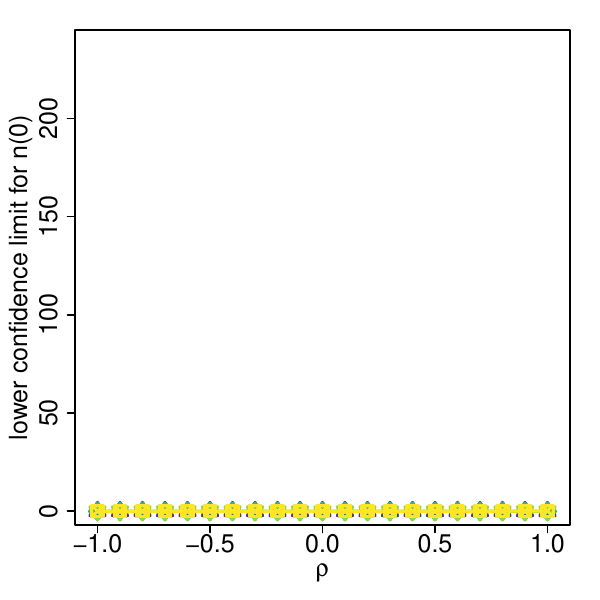}
		\caption{$\tau_0 = -1, \omega = 0.5$}
	\end{subfigure}%
	\begin{subfigure}{.33\textwidth}
		\centering
		\includegraphics[width=1\linewidth]{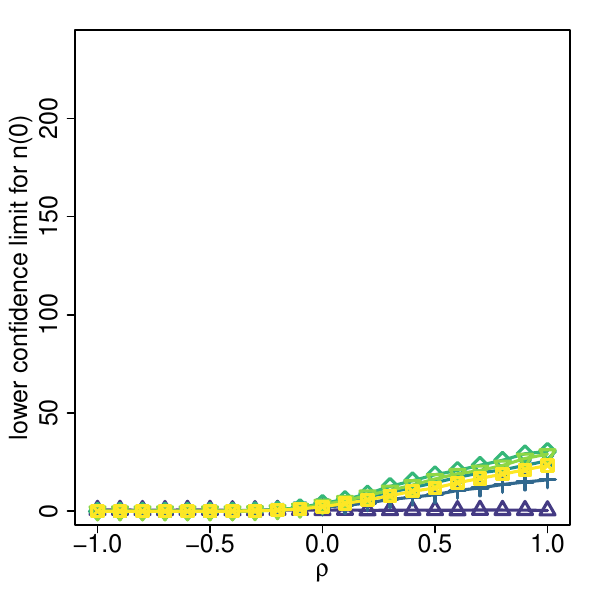}
		\caption{$\tau_0 = 0, \omega = 0.5$}
	\end{subfigure}%
	\begin{subfigure}{.33\textwidth}
		\centering
		\includegraphics[width=1\linewidth]{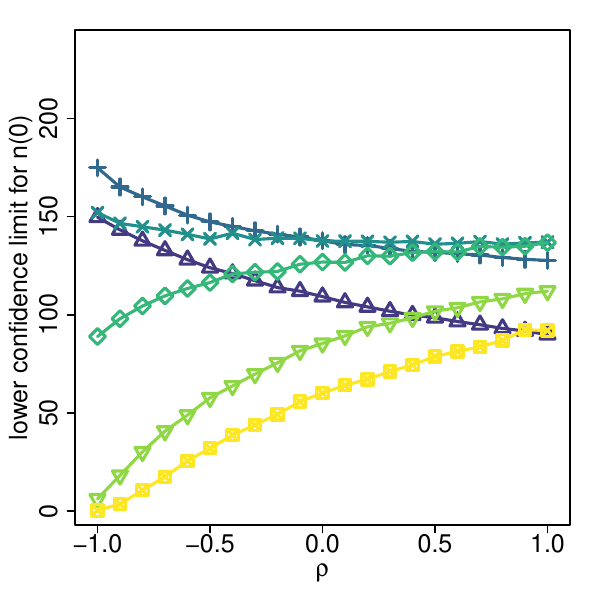}
		\caption{$\tau_0 = 1, \omega = 0.5$}
	\end{subfigure}
	\begin{subfigure}{.33\textwidth}
		\centering
		\includegraphics[width=1\linewidth]{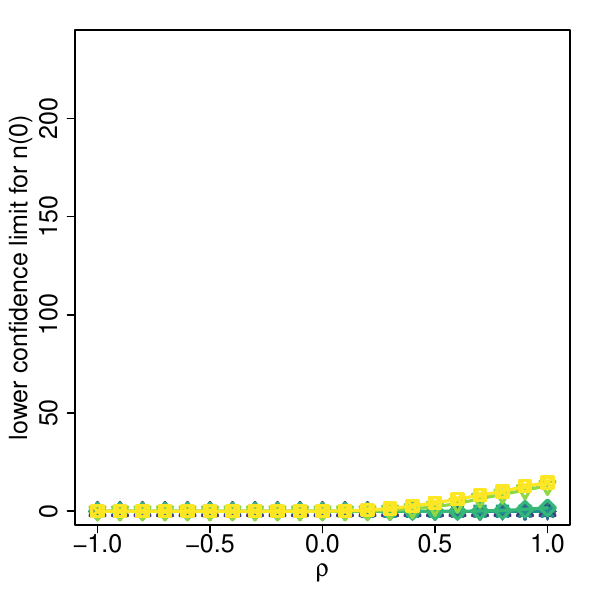}
		\caption{$\tau_0 = -1, \omega = 1$} 
	\end{subfigure}
	\begin{subfigure}{.33\textwidth}
		\centering
		\includegraphics[width=1\linewidth]{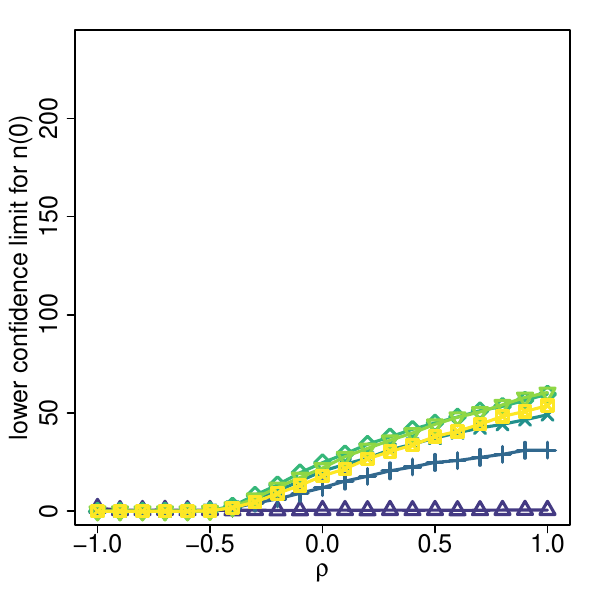}
		\caption{$\tau_0 = 0, \omega = 1$} 
	\end{subfigure}%
	\begin{subfigure}{.33\textwidth}
	\centering
	\includegraphics[width=1\linewidth]{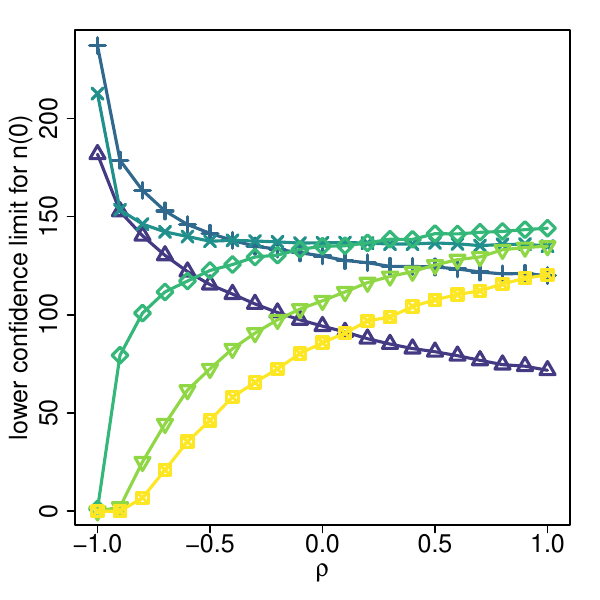}
	\caption{$\tau_0 = 1, \omega = 1$}
	\end{subfigure}
	\caption{Average lower limits of $90\%$ confidence intervals for the number of units with positive effects $n(0)$. 
		The potential outcomes are generated from \eqref{eq:generate} with sample size $n=960$ and different values of $(\tau_0, \omega, \rho)$. 
	}\label{fig:simu_quant_random_960}
\end{figure}

\section{Further Applications}

\subsection{Selecting $s$ for the Professional Development Application}

The main paper outlines a series of power simulations across a family of data generating processes to see how well different $s$ values for the Stephenson rank test performed for detecting different numbers and types of treatment effects.
We next look into these results in more detail.

To run these simulations we use a method in the \texttt{RIQITE} package, \texttt{explore\_stephenson\_s}, that runs the simulations for us.
We provide the empirical distribution of the control-side outcomes as the reference distribution, and specify a variety of treatment impact models that are arguably consistent with the difference in final distributions between treatment and control.
For example, we specify impact models with a true average treatment impact in line with our estimate from the data.

Figure~\ref{fig:s_selection} shows median confidence bounds over different quantiles as a function of $s$.
We grouped the top 100 quantiles into groups of top 10, next 20, and next 30, and plot the median of the median bound across the quantiles in each group (the last 40 are not shown as the intervals were generally not informative).
Generally, the curves are flat, showing that there is some latitude for selecting $s$ across the contexts explored.
Overall, smaller $s$, such as $s=6$, seem preferred. 
See the code file for more specifics on how these results are generated; that file is designed to provide code to make conducting similar simulations on other datasets very straightforward.

\begin{figure}[htbp]
	\centering
		\includegraphics[width=1\linewidth]{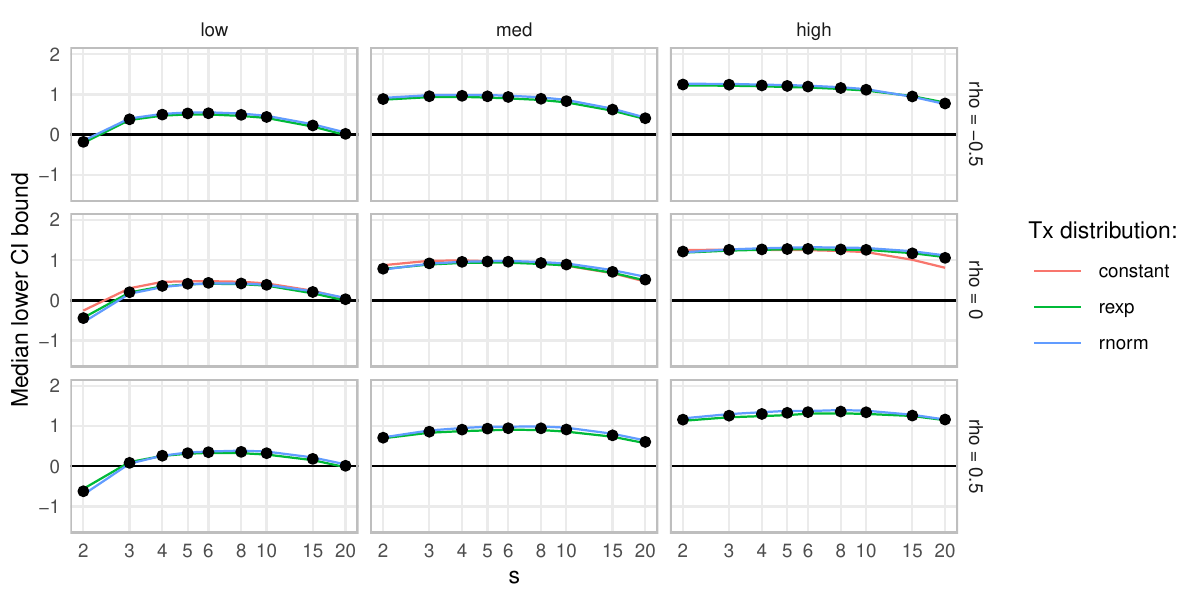}
	\caption{
	Median CI lower bounds over different groups of quantiles and simulation scenarios as a function of $s$ for an exploratory simulation calibrated to the teacher professional development data. From right to left we have top 10 quantiles, next 20, and next 30. Rows correspond to different correlations of $Y_i(0)$ and $\tau_i$.  Black dots give empirical averages across all scenarios considered.	}\label{fig:s_selection}
\end{figure}

We can also look at the number of units found to be significant across the different scenarios and see which $s$ values identify the most units as significant, on average.
Figure~\ref{fig:s_selection_n} shows how $s=8$ seems to maximize, although $s=6$ or $s=10$ perform quite similarly.

\begin{figure}[htbp]
	\centering
		\includegraphics[width=1\linewidth]{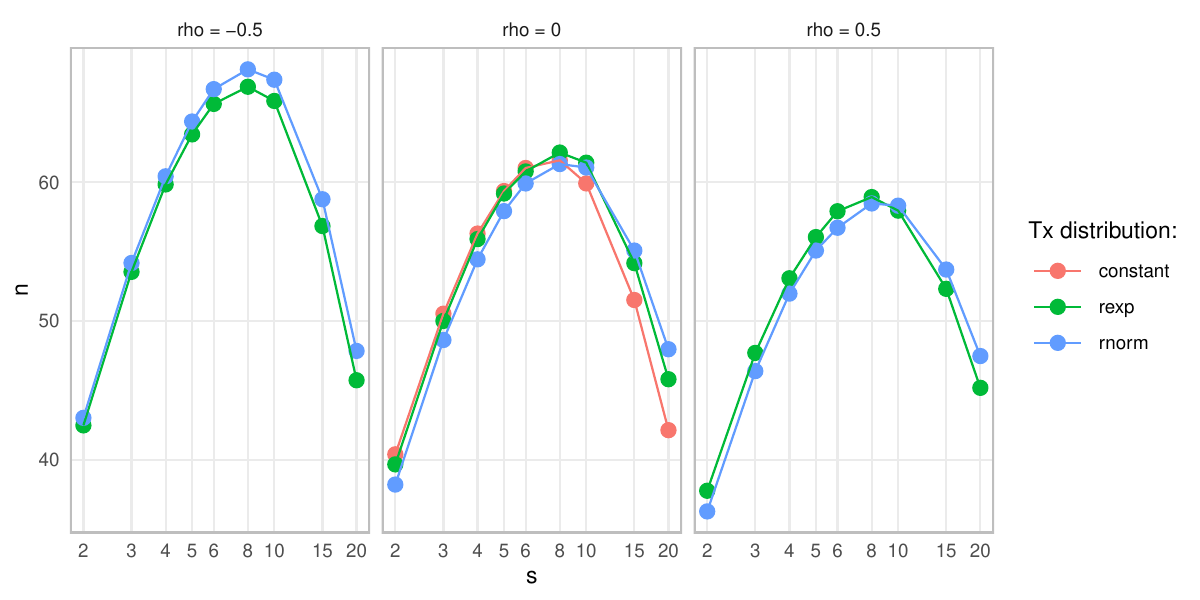}
	\caption{
	Average number of units found to be significant as a function of $s$.}\label{fig:s_selection_n}
\end{figure}

We are facing somewhat of a trade-off: more units being tagged as significant calls for higher $s$, and more informative bounds on the higher end of the distribution calls for slightly lower $s$.

\subsection{Testing monotonicity of an instrumental variable}
\label{sec:test-monot-an}

\begin{figure}
	\centering
	\includegraphics[width=.4\textwidth]{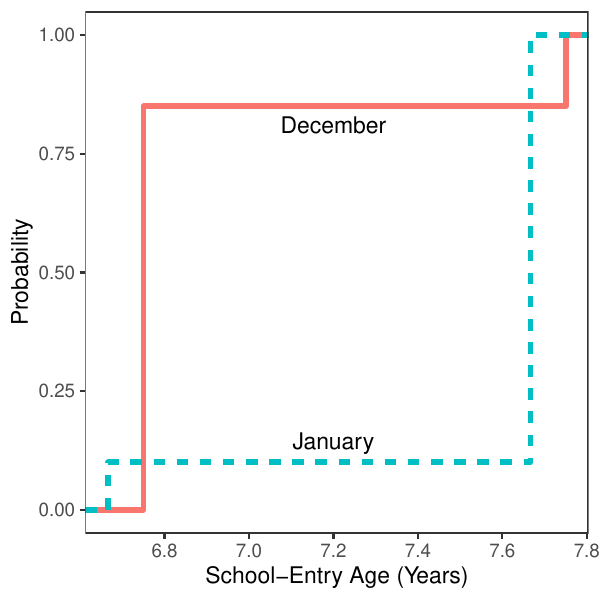}
	\caption{
	Empirical cumulative distribution functions of school-entry age for units born in December and January, respectively \citep{BlackEtAl11a}. 
	}
	\label{fig:IV}
\end{figure}

The assumption that the instrument has monotonic effects on the treatment, though conventionally invoked for identification of instrumental variable (IV) estimates \citep{AngristImbensRubin96a}, is rarely evaluated in empirical applications.  Recently, however, the issue of non-monotonicity has received attention in the active literature on school-entry age, in which numerous studies involve instruments
based on laws regulating entry age  
(\citealt{Aliprantis12a, BaruaLang16a}; for an overview, see \citealt{FioriniStevens14a}). 
Typically, these laws select an arbitrary date of birth before which children are allowed to enter school in a given calendar year. If the cutoff date is January 1, for example, most children born in December will be about 11 months younger when they enter school than children born in January. Due to imperfect compliance with the instrument, however, some fraction of December children may ``redshirt'' and start the following school year, at which time they will be one month \textit{older} than January children who started on time. Unless the December children who redshirt would also have redshirted had they been born in January, monotonicity is violated. That is, the effect of December birth on school-entry age is typically negative, but for a few children it is positive
(an analogous logic holds for January children who start early).

For a simple illustration of how randomization inference can be used to evaluate monotonicity, we re-analyze data on 104,000 children born in December or January %
from \citet{BlackEtAl11a}'s IV study of school-entry age, which have previously been analyzed from a sampling-based perspective by \citet{FioriniStevens14a}\footnote{\rev Table 1 of \citet{BlackEtAl11a} cross-tabulates month of birth and school entry age (early/on-time/late) in terms of proportions, and Table 3 there reports the total number of subjects born in January or December (the ``discontinuity subsample''): 104,023. 
Similar to \citet{FioriniStevens14a}, we assume equal numbers of subjects born in
January and December, and round the total number of subjects to 104,000, to ensure integer number of subjects in each subgroup.}. 
The latter authors note that the cumulative distribution functions 
of December- and January-born children in these data cross each other, suggesting a violation of monotonicity.
Figure \ref{fig:IV} indicates this clearly. Most children started school at an older age if they were born in January rather than December. Indeed, this first-stage relationship is incredibly strong, with an average effect of $-0.667$ years and an $F$ statistic of 106,256. This would conventionally be considered persuasive evidence of a valid instrument. Note, however, that at the tails of the distribution the relationship between month of birth and entry age reverses: January birthdays predominate among the youngest starters, and December does so among the oldest. 

\begin{table}[htbp]
	\centering
	\caption{Randomization tests for the bounded null $H_{\vecle \bm{0}}$, which is equivalent to the monotonicity assumption. 
		The intervals are $90\%$ confidence intervals for the maximum individual effect, constructed by inverting the corresponding randomization tests.
	Columns 2--4 show results from randomization inference using different test statistics. The last column shows results from the classical Student's $t$-test.  
}\label{tab:IV}
	\begin{tabular}{ccccc}
		\toprule
		& Difference-in-means &  Wilcoxon &  Stephenson &  Student's $t$-test\\
		\midrule
		$p$-value & 1 & 1 & 0 & 1
		\\
		$90\%$ CI & $[-0.669, \infty)$ & $[-0.916, \infty)$ & $[0.084, \infty)$ & $[-0.670, \infty)$
		\\
		\bottomrule
	\end{tabular}%
\end{table}

If we designate December birth as the assigned-to-treatment condition and January birth as control, then the monotonicity assumption is equivalent to the null hypothesis $H_{\vecle \bm{0}}$: 
being born in December did not cause any child to go to school at an older age than they would have if born in January. 
We assume that birth month is as-if randomly assigned, and conduct randomization inference for the effect of December birth on the school-entry age. 
We test the bounded null $H_{\vecle \bm{0}}$ using the randomization $p$-value $p_{\bm{Z}, \bm{0}}$ 
with various %
test statistics satisfying the conditions in Theorem
\ref{thm:null_broader_monotone_control}(a), including the difference-in-means, Wilcoxon rank sum and Stephenson rank sum with $s=10$. 
Table \ref{tab:IV} lists the results from randomization tests using these three test statistics, supplemented by the classical Student's $t$-test. 
We emphasize that, although
both the randomization $p$-values and the corresponding intervals are numerically the same as that under the usual constant treatment effect assumption, 
they are also valid $p$-values for the bounded null $H_{\vecle \bm{0}}$ and valid confidence intervals for the maximum individual effect $\tau_{\max}$, as demonstrated in Section \ref{sec:broader}. 
Because the difference-in-means estimator for the average effect is negative, 
from the simulation results in Section \ref{sec:simu_max}, 
it is not surprising that neither difference-in-means or Wilcoxon rank sum give significant $p$-values. 
However, the Stephenson rank sum gives an almost zero $p$-value, 
strong evidence of
the existence of units violating the monotonicity assumption. 
Intuitively, the significant $p$-value is driven by the Stephenson rank placing more weight on larger outcomes coupled with $15\%$ of December-born children having a school-entry age of $7.75$, which is larger than the maximum school-entry age $7.67$ for the January-born children.  This means the treatment group has a large share of the most extreme observations.
The corresponding $90\%$ lower confidence limit is $0.084$ year, or equivalently about 1 month, 
suggesting some children would first enter the school one-month older if born in December than in January. 
We can therefore confidently conclude that being born in 
December increased
school-entry age for at least some students, i.e., the IV monotonicity assumption is violated in this application.

{\rev 
We now apply Theorem \ref{thm:effect_range} to study the effect range. 
Using the Stephenson rank sum statistic with $s=10$, 
the $95\%$ upper confidence limit for the minimum effect of December birth is $-0.917$ years ($-11$ months) while the $95\%$ lower confidence limit for the maximum effect is $0.084$ years ($1$ month). We are therefore $90\%$ confident that the range of the effect of birth month on school-entry age is at least 1 year, indicating significant individual effect heterogeneity. This of course is hardly surprising given Figure \ref{fig:IV}, which shows that despite the negative average effect of December birth, the children with the oldest school-entry age were born in this month.
}

\subsection{Evaluating the effects of six-month nutrition therapy}\label{sec:homefood}

The homefood study \citep{Blondal2021} is a recent randomized controlled clinical trial that tries to evaluate the effects of home delivered food and nutrition therapy for discharged geriatric hospital patients.\footnote{The details of the study can be found at \url{https://clinicaltrials.gov/ct2/show/NCT03995303}, and the data are publicly available at Harvard Dataverse with link \url{https://doi.org/10.7910/DVN/38X3LX}.}
Participants were randomized into two groups, and those in the treated group will be given free food for 24 weeks to fulfill protein and energy needs, which are based on individualized nutrition care plans designed by the dietitians. 
Here we focus on the effect of the treatment on the increase of lean body mass (kg) measured before and after the trial. 
We exclude two units with missing outcomes, resulting in 52 treated units and 52 control units. Figure \ref{fig:conf_homefood}(a) shows the histograms of the lean body mass changes in treated and control groups, respectively. 

We then infer the lower confidence limits for all quantiles of individual treatment effects. 
Figures \ref{fig:conf_homefood}(b) and (c) show the $90\%$ lower confidence limits for all quantiles of individual effects using Stephenson rank statistics with $s=6$ and $s=2$ (i.e., Wilcoxon rank), respectively. 
Obviously, the Stephenson rank with $s=6$ gives more informative results, based on which we are $90\%$ confident that at least $19.2\%$ units would have higher lean body mass if receiving treatment instead of control.  

\begin{figure}[htbp]
	\centering
	\begin{subfigure}{.33\textwidth}
		\centering
		\includegraphics[width=1\linewidth]{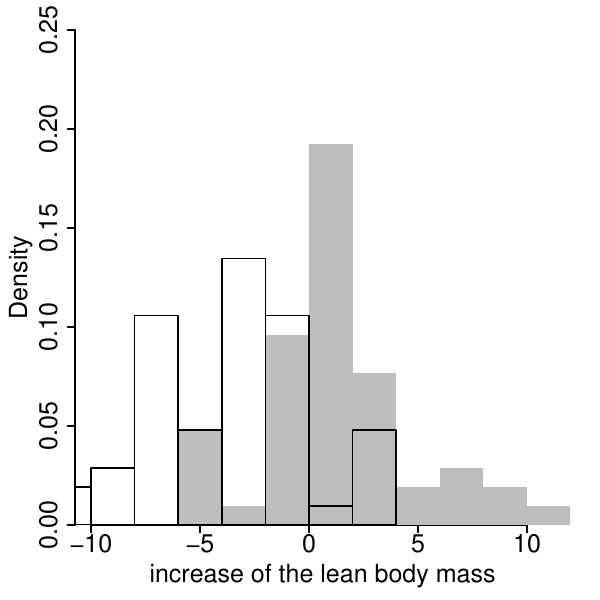}
		\caption{}
	\end{subfigure}%
	\begin{subfigure}{.33\textwidth}
		\centering
		\includegraphics[width=1\linewidth]{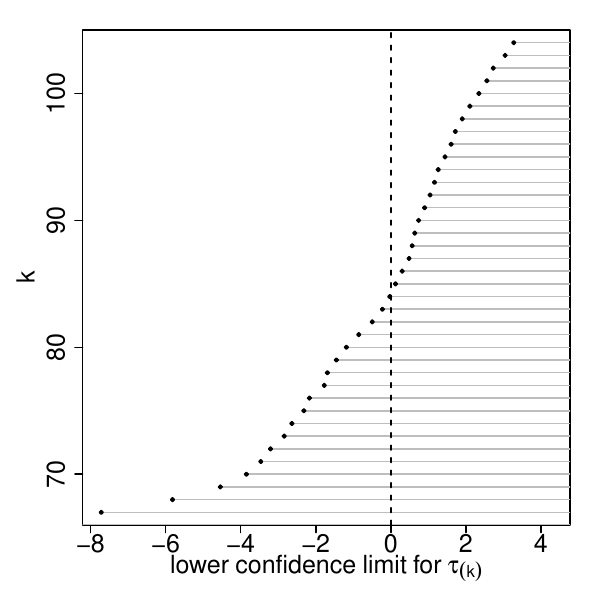}
		\caption{$s=6$}
	\end{subfigure}%
	\begin{subfigure}{.33\textwidth}
		\centering
		\includegraphics[width=1\linewidth]{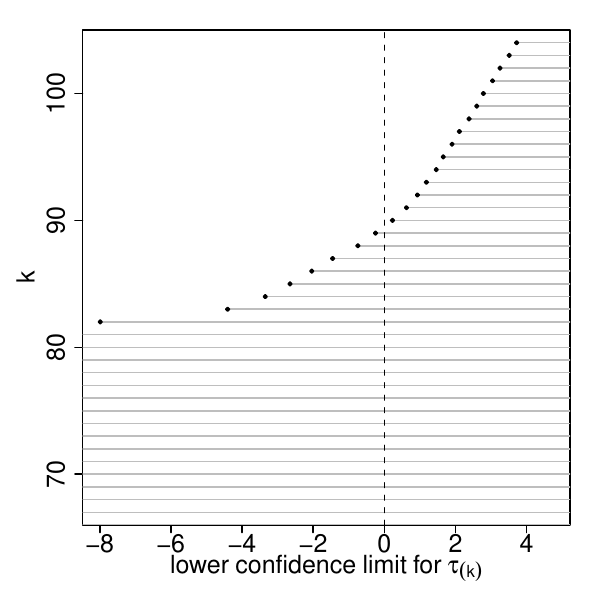}
		\caption{$s=2$}
	\end{subfigure}
	\caption{
	Histograms of observed lean body mass increase and 
	$90\%$ confidence intervals for quantiles of individual effects. 
    (a) shows the histograms of observed lean body mass increase in treatment (grey) and control (white) groups, respectively. 
    (b) and (c) shows the $90\%$ lower confidence limits for all quantiles of individual effects using the Stephenson rank statistics with $s$ equals $6$ and $2$, respectively. 
    The uninformative confidence intervals of $(-\infty, \infty)$ for quantiles of lower ranks are omitted from (b) and (c). 
	}\label{fig:conf_homefood}
\end{figure}

\end{document}